\documentclass[11pt]{article}

\usepackage[utf8]{inputenc}
\usepackage{authblk}

\usepackage{amsfonts,amsmath,amssymb,mathtools,amsthm}
\usepackage{graphicx}
\usepackage{stmaryrd}
\usepackage{thmtools}
\DeclareGraphicsExtensions{.eps}
\usepackage{float}

\usepackage{xcolor}
\definecolor{linkcolour}{rgb}{0.15,0.15,0.55}
\definecolor{urlcolour}{rgb}{0.15,0.15,0.55}
\definecolor{citecolour}{rgb}{0.15,0.15,0.55}

\usepackage[linktoc=all]{hyperref}
	\hypersetup{
		colorlinks = true,
		linkcolor = linkcolour,
		urlcolor = citecolour,
		citecolor = citecolour,
		linktoc = all,
		hypertexnames = false,
		unicode = true,
		bookmarksnumbered = false,
		pdfmenubar = true,
		pdftoolbar = true}
\hypersetup{
	pdfauthor 
		= {Mohamad Alameddine, Olivier Marchal, Nikita Nikolaev, Nicolas Orantin},
	pdftitle 
		= {Exact WKB solutions and direct monodromy problem associated to $0$-parameter solutions of the Painlev\'{e} I equation},
	pdfsubject
		= {},
	pdfcreator
		= {},
	pdfproducer
		= {},
	pdfkeywords
		= {},
}
\usepackage
	[top=2cm,
	bottom=2cm,
	left=2cm,
	right=2cm,
	a4paper]
	{geometry}
	
	\usepackage[margin=2cm]{caption}

\newcommand\encadremath[1]{\vbox{\hrule\hbox{\vrule\kern8pt
\vbox{\kern8pt \hbox{$\displaystyle #1$}\kern8pt}
\kern8pt\vrule}\hrule}}
\def\enca#1{\vbox{\hrule\hbox{
\vrule\kern8pt\vbox{\kern8pt \hbox{$\displaystyle #1$}
\kern8pt} \kern8pt\vrule}\hrule}}

\newcommand\framefig[1]{
\begin{figure}[bth]
\hrule\hbox{\vrule\kern8pt
\vbox{\kern8pt \vbox{
\begin{center}
{#1}
\end{center}
}\kern8pt}
\kern8pt\vrule}\hrule
\end{figure}
}

\newcommand\figureframex[3]{
\begin{figure}[bth]
\hrule\hbox{\vrule\kern8pt
\vbox{\kern8pt \vbox{
\begin{center}
{\mbox{\epsfxsize=#1.truecm\epsfbox{#2}}}
\end{center}
\caption{#3}
}\kern8pt}
\kern8pt\vrule}\hrule
\end{figure}
}
\newcommand\figureframey[3]{
\begin{figure}[bth]
\hrule\hbox{\vrule\kern8pt
\vbox{\kern8pt \vbox{
\begin{center}
{\mbox{\epsfysize=#1.truecm\epsfbox{#2}}}
\end{center}
\caption{#3}
}\kern8pt}
\kern8pt\vrule}\hrule
\end{figure}
}



\makeatletter
\@addtoreset{equation}{section}
\makeatother
\newtheorem{theorem}{Theorem}[section]

\newtheorem{proposition}{Proposition}[section]
\newtheorem{lemma}{Lemma}[section]
\newtheorem{corollary}{Corollary}[section]

\theoremstyle{definition}
\newtheorem{remark}{Remark}[section]
\newtheorem{definition}{Definition}[section]

\def\br{\begin{remark}\rm\small}
\def\er{\end{remark}}
\def\bt{\begin{theorem}}
\def\et{\end{theorem}}
\def\bd{\begin{definition}}
\def\ed{\end{definition}}
\def\bp{\begin{proposition}}
\def\ep{\end{proposition}}
\def\bl{\begin{lemma}}
\def\el{\end{lemma}}
\def\bc{\begin{corollary}}
\def\ec{\end{corollary}}
\def\beaq{\begin{eqnarray}}
\def\eeaq{\end{eqnarray}}

\theoremstyle{definition}

\newcommand{\be}{\begin{equation}}
\newcommand{\ee}{\end{equation}}
\newcommand{\beq}{\begin{equation}}
\newcommand{\eeq}{\end{equation}}
\newcommand{\bea}{\begin{eqnarray}}
\newcommand{\eea}{\end{eqnarray}}
\newcommand{\beqq}{\begin{equation*}}
\newcommand{\eeqq}{\end{equation*}}
\newcommand{\beaa}{\begin{eqnarray*}}
\newcommand{\eeaa}{\end{eqnarray*}}

\newcommand{\Tr}{{\operatorname {Tr}}}

\newcommand{\diag}{{\operatorname{diag}}}

\newcommand{\td}{\tilde}

\newcommand{\commutator}[2]{[#1,#2]}

\usepackage{subcaption}
\usepackage{changepage}

%

\newcommand{\Res}{\mathop{\,\rm Res\,}}

\begin{document}

\title{A Twisted $\mathfrak{sl}_2(\mathbb{C})$ Isomonodromic-Isospectral Correspondence}

\author{Mohamad Alameddine}
\vspace{5cm}

\affil{\small
Universit\'{e} Jean Monnet, Institut Camille Jordan 
\footnote{Institut Camille Jordan, CNRS UMR 5208,
Les Forges 2, 20 Rue du Dr Annino, 42000 Saint-\'{E}tienne, France. \\
mohamad.alameddine@univ-st-etienne.fr }}

\date{}
\maketitle

\abstract{The aim of this article is to generalize the isomonodromic-isospectral correspondence for meromorphic connections of rank $2$ over $\mathbb{P}^1$ to the twisted case. More specifically, the construction of the isospectral approach is provided for the Painlev\'{e}  I hierarchy, then two maps are constructed, one linking the sets of isomonodromic and isospectral Hamiltonians, another linking the set of apparent singularities to a set of isospectral coordinates.  }

\tableofcontents

 \newpage
\section{Introduction}
The study of meromorphic connections over vector bundles of Riemann surfaces presents a vast subject of modern mathematics, it is a playground for many different ideas coming from different fields (algebraic geometry, differential geometry, mathematical physics). Deformations preserving the monodromy of these connections, an idea first introduced by Riemann, links the subject to symplectic and Hamiltonian geometry. Nowadays, Fuchsian systems are relatively well understood and the reader is referred to Chap. $3$ of \cite{FromGaussToPainleve} for a review on the link with the Painlev\'{e} equations. These equations appear as compatibility conditions upon considering isomonodromic deformations, a fact first observed by R. Fuchs on what is now known as the Fuchsian system leading to the Painlev\'{e}  VI equation. However, the case of general isomonodromic deformations of linear ordinary differential equations with irregular singularities is much more complicated than the case of Fuchsian singularities and is still actively studied. For instance, seminal contributions to the theory were made by the Japanese school of M. Jimbo, T. Miwa, and K. Ueno \cite{JimboMiwaUeno,JimboMiwa} in the case of generic (in the sense that the leading term of the connections at each pole has distinct eigenvalues) singularity structures with arbitrary order poles on arbitrary rank bundles. The vast family of nonlinear differential equations resulting from the isomonodromic deformations is the largest known to have the ``Painlev\'{e} property". The Painlev\'{e} equations that have been studied extensively appear in the irregular case, these equations are the result of a classification problem, and are required to preserve a set of generalized monodromy data, which is obtained by considering Stokes data around the irregular singularities in addition to the usual monodromies, this problem gained huge interest due to the irregular Riemann-Hilbert correspondence linking it to the study of character varieties of surfaces. The space of deformations itself needs to be extended beyond the space of complex structures of the Riemann surface by including the irregular type of the monodromy considered \cite{Boalch2001,Boalch2014}. 

\medskip

The Painlev\'{e} I hierarchy, whose first element is the Painlev\'{e} I equation, appears as an example of the twisted construction. In this work, the term twisted is employed to describe a twisted formal normal form of the associated meromorphic differential system realized upon choosing a trivialization for the connection (the ramified case). This is also manifested in the fact that the singular part of the connection matrix is non-diagonalizable corresponding to a ramified pole or a branching point of the associated spectral curve. In rank $2$, the twist is manifested by a nilpotent leading order. Other types of twists exist in the literature, they are used for several purposes such as the classification of $2d$ topological quantum field theories and the counting of BPS states \cite{Dubrovin1996}. Despite this restriction, the case considered remains linked to the twisted character variety introduced in \cite{boalch2015twistedwildcharactervarieties} via the irregular Riemann-Hilbert correspondence. Additional motivation for considering twisted connections is the appearance of the linear meromorphic twisted differential system in the studies of $(3,p)$ string equations and the KP hierarchy which corresponds to a ``maximal twist'' in the rank $d$ connection. 

\medskip

 The strategy of the generic case of \cite{MarchalOrantinAlameddine2022} generalized later to the twisted case in \cite{Marchal2024} was based on passing to the oper form of the connection which is equivalent to the quantum curve satisfied by the first line of the matrix of horizontal sections, then, solve the compatibility equation in this particular gauge and derive the general Hamiltonian evolution of the apparent singularities. For the twisted case, one obtains non polynomial expressions for the Hamiltonian, this problem is solved by the introduction of a new set of symmetric coordinates similar to those introduced for the Painlev\'{e} II hierarchy in \cite{Mazzocco_2007}. Yamakawa's approach to isomonodromy is closely related, with different final objectives \cite{Yamakawa2017TauFA,Yamakawa2019FundamentalTwoForms}, he defines the isospectral Hamiltonians using the standard intrinsic residue-trace formulas at each pole and then imposes some conditions on the exterior derivatives relatively to irregular times to implicitly select Darboux coordinates for which the isospectral Hamiltonians match the isomonodromic ones. This strategy has also been used in \cite{Darboux_coord93} and is very efficient to obtain global properties of the symplectic Ehresmann connection such as flatness or completeness. However, it does not provide explicit coordinates so that one could not obtain a trivialization of the symplectic bundle (except from simple examples). Nevertheless, our final conclusion, in the twisted or generic settings, is that solving directly the isomonodromic compatibility equations or solving the isospectral condition are equivalent problems that share the same level of difficulty for $\mathfrak{sl_2}(\mathbb{C})$. 

\medskip

There are many ways to approach the problem, one of which is a viewpoint introduced by J. Harnad \cite{Harnad:1993hw} and J. Hurtubise \cite{HarnadHurtubise} built on moment map embeddings into central extensions of loop algebras. This strategy was developed to tackle isomonodromic deformations by considering first isospectral deformations (i.e. deformations that preserve the spectrum of the connection matrix) and then choose some specific set of coordinates so that an additional ``isospectral condition''
\beq \label{Condition} \delta_{\mathbf{t}} \td{L}(\lambda)=\partial_\lambda \td{A}(\lambda)\eeq
is realized \cite{BertolaHarnadHurtubise2022,Yamakawa2017TauFA}. The main feature of the additional isospectral condition is manifested by the identification of the Hamiltonian relatively to the chosen set of coordinates with the spectral Hamiltonians of the connection matrix (i.e. the coefficients of the expansion of the eigenvalues of the Lax matrix at each pole). This isospectral strategy has been used successfully for many cases such as Fuchsian singularities and all Painlev\'{e} cases and the general theory has been set up recently in \cite{BertolaHarnadHurtubise2022} for the generic case. Note that for Fuchsian systems, this condition turns out to be easily solved and the identification between isomonodromic and isospectral Hamiltonians is rather straightforward, yet this is evidently not the case when irregular singularities are considered. This was left as an open problem in \cite{BertolaHarnadHurtubise2022}, and then solved for the generic case of rank $2$ connections in \cite{Marchal_2024}, and as discussed in \cite{BertolaHarnadHurtubise2022} and also \cite{Marchal_2024}, finding the set of isospectral coordinates is far from trivial. This conclusion was established also in the work of Dubrovin and Mazzocco \cite{Dubrovin2007} (Ch. 5) where an explicit map was given relating the apparent singularities to isospectral coordinates for Schlesinger systems.

On the other side, the theory of isomonodromic deformations in the generic case developed in \cite{MarchalOrantinAlameddine2022}, has been extended to the twisted setup in \cite{Marchal2024} leading to the explicit Lax matrices and general Hamiltonian governing the Painlev\'{e} I hierarchy. Thus a natural open question is the following: can one extend the construction of the isospectral Hamiltonians to the twisted case and then relate the apparent singularities to a set of coordinates satisfying the isospectral condition ? Answering this question is the main objective of the present work and the main result is captured in the following theorem  

\begin{theorem} \label{Principalresult} There exists an explicit time dependent non-symplectic one-to-one map $f: (\mathbf{q}, \mathbf{p}) \rightarrow (\mathbf{u}, \mathbf{v})$ from the set of Darboux coordinates made from the apparent singularities $\mathbf{q}$ and their dual coordinates $\mathbf{p}$ of the oper form of the connection matrix to a set of isospectral coordinates $(\mathbf{u},\mathbf{v})$. This map can be seen as a change of coordinates between both approaches. Furthermore, this map is constructed explicitly and turns out to be of the same type for generic and twisted connections (Cf \cite{Marchal_2024} for the generic case).  
\end{theorem}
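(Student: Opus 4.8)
The plan is to proceed constructively, producing the map $f$ by placing the two explicit parametrizations of the Lax matrix side by side rather than by an abstract symplectic argument. First I would recall from \cite{Marchal2024} the explicit twisted Lax pair $(\tilde{L}(\lambda),\tilde{A}(\lambda))$ governing the Painlev\'e I hierarchy, together with the isomonodromic Hamiltonians written in the Darboux coordinates $(\mathbf{q},\mathbf{p})$: here the apparent singularities $\mathbf{q}$ arise as the zeros of the off-diagonal entry of the connection matrix (the points at which the gauge to oper form develops an apparent singularity), and $\mathbf{p}$ are the dual coordinates read off from the local expansion of the oper at each $q_i$. On the isospectral side I would define $(\mathbf{u},\mathbf{v})$ as the spectral Darboux coordinates of $\tilde{L}(\lambda)$ in the sense of Adams--Harnad--Hurtubise: the $u_i$ as the zeros of the $(1,2)$-entry $B(\lambda)$ of $\tilde{L}$ and the $v_i$ as the value at $u_i$ of the diagonal entry (equivalently, of the eigenvalue of $\tilde{L}$). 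The isospectral Hamiltonians are then the spectral invariants extracted from the residue-trace expansions of $\det(\mu-\tilde{L}(\lambda))$ at the poles, as in \cite{BertolaHarnadHurtubise2022}.

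The core of the argument is the explicit identification of the two gauges. Since the gauge transformation sending the isospectral (companion-type) Lax form to the oper form is built from the off-diagonal entry $B(\lambda)$ itself, its zero locus is preserved, which pins down the relation between $\mathbf{u}$ and $\mathbf{q}$ (in the simplest normalization of $B$, the identity) and reduces the problem to expressing $\mathbf{v}$ as an explicit function of $(\mathbf{q},\mathbf{p})$ and the irregular times $\mathbf{t}$. I would obtain this relation by expanding both the oper potential and the eigenvalue of $\tilde{L}$ at $\lambda=q_i$ and matching the subleading coefficients, producing a triangular relation of the form $v_i=p_i+R_i(\mathbf{q},\mathbf{t})$, where $R_i$ is rational in the remaining $q_j$ and in the fixed coefficients encoding the pole data. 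Invertibility of $f$ is then immediate, since this relation is solved explicitly for $\mathbf{p}$; this establishes the one-to-one claim and gives the inverse in closed form.

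To verify the remaining properties I would (i) compute the pullback $f^{*}\!\left(\sum_i dv_i\wedge du_i\right)$ and compare it with $\sum_i dp_i\wedge dq_i$: the shift $R_i$ contributes the two-form $\sum_i dR_i\wedge dq_i$ together with explicit time-differential terms, which is generically nonzero (the $R_i$ need not be a gradient in $\mathbf{q}$), so that $f$ is time dependent and non-symplectic exactly as stated; and (ii) check that under $f$ the spectral Hamiltonians match the isomonodromic ones up to these same corrections, thereby realizing the isospectral condition \eqref{Condition}. Step (ii) amounts to re-deriving the Hamiltonian identification of \cite{Marchal_2024} in the present normalization and in the twisted setting.

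The main obstacle is the twist. At the irregular pole the leading term of $\tilde{L}$ is nilpotent, so the spectral curve is ramified there and the eigenvalue of $\tilde{L}$ is a Puiseux series in half-integer powers rather than a Laurent series; consequently the residue-trace formulas and the naive oper Hamiltonians become non-polynomial. I expect the decisive step to be the passage to the symmetric coordinates in the spirit of \cite{Mazzocco_2007}, which linearize these fractional expansions and restore polynomiality on both sides simultaneously. The nontrivial point to be checked is that this symmetrization is compatible with the shift $R_i$ above in such a way that the resulting map has the same functional form as in the generic case of \cite{Marchal_2024}; once that compatibility is verified, the ``same type'' assertion follows, since the only difference between the two settings is then confined to the fixed data describing the pole structure.
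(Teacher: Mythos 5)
Your construction founders at the very first step: the coordinates you propose on the isospectral side are not new. In the normalization used here (and in \cite{Marchal_2024}), the apparent singularities $q_i$ are \emph{defined} as the zeros of $[\td{L}(\lambda)]_{1,2}$ and $p_i=[\td{L}(q_i)]_{1,1}$, so the ``spectral Darboux coordinates'' you introduce --- zeros of the $(1,2)$ entry and values of the diagonal entry there --- coincide with $(\mathbf{q},\mathbf{p})$, and your map $f$ degenerates to the identity. These coordinates do \emph{not} satisfy the isospectral condition \eqref{Condition}; that is precisely the open problem of \cite{BertolaHarnadHurtubise2022} that the theorem resolves, so your verification step (ii) would fail rather than close the argument. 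What the paper actually does is different: it passes to the symmetric (geometric) coordinates $(\mathbf{Q},\mathbf{P})$ and the time-dependent Lax coordinates $(\mathbf{Q},\mathbf{R})$, writes the entries of $(\td{L},\td{A}_{\boldsymbol{\alpha}})$ in these variables, and then \emph{imposes} $\delta^{(\boldsymbol{\alpha})}_{\mathbf{t}}\td{L}=\partial_\lambda\td{A}_{\boldsymbol{\alpha}}$ as a system of first-order PDEs in the irregular times. The isospectral coordinates $(\mathbf{u},\mathbf{v})$ are the constants of integration of that system --- for instance $u_{r_\infty-4}=Q_{r_\infty-4}$ but $u_{r_\infty-5}=Q_{r_\infty-5}-\frac{r_\infty-4}{2r_\infty-7}\,t_{\infty,2r_\infty-7}$, and so on recursively --- so even the $\mathbf{u}$ half of the map is a nontrivial time-dependent shift of the symmetric functions of the $q_j$, not the identity on the $q_j$ themselves. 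Your ansatz $v_i=p_i+R_i(\mathbf{q},\mathbf{t})$ likewise has the wrong shape: the map acts on the coefficients $R_k$ built from $(\mathbf{Q},\mathbf{P})$, not on the individual pairs $(q_i,p_i)$.

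The second missing ingredient is existence. Because $\delta^{(\boldsymbol{\alpha})}_{\mathbf{t}}$ involves all the irregular times simultaneously, the defining system for $(\mathbf{u},\mathbf{v})$ is overdetermined, and one must prove that the flows commute, $[\delta_{t_i},\delta_{t_j}]Q_k=0$ and similarly for the $R_k$, before one can speak of a well-defined map at all. In the paper this occupies \autoref{Theo4.2}, \autoref{Theo4.3} and \autoref{AppD}, and it is where the lower-triangular Toeplitz structure (the commuting subalgebra alluded to in the introduction) is genuinely used; your proposal contains no counterpart to this step. The one point where you are on the right track is the remark that symmetrized coordinates in the spirit of \cite{Mazzocco_2007} are needed to tame the half-integer Puiseux expansions at the ramified pole --- that is indeed the role of $(\mathbf{Q},\mathbf{P})$ --- but they are not merely a device for restoring polynomiality: they are the variables in which the isospectral condition becomes a solvable triangular system, and hence the variables in which the map is actually constructed.
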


This result completes the answer for the rank $2$ side of the problem and many conclusions can be extracted from the explicit map relating the two sets of coordinates. First of all, the explicit map in both generic and twisted setups is a non-symplectic time dependent map. For the case $g=1$ (covering all the Painlev\'{e} equations), the map  is relatively simple, this is not the case for the higher genus setups where the map becomes highly non-trivial (higher elements of the Painlev\'{e} hierarchies). This explains why on a specific low genus case, the two approaches have huge similarities. Furthermore, from a geometric point of view,  when one equips the bundle over the base of times with an Ehresmann connection, the connection equipped with the apparent singularities and the one equipped with the isospectral coordinates differ (referred to as the Birkhoff connection in \cite{BertolaHarnadHurtubise2022}) and this difference is encoded in the differential one form $\omega_{JMU}$, this is a manifestation of the change of the symplectic structure for both sides. At last, the isospectral condition \eqref{Condition} transforms a derivative operator (on both twisted and generic sides) to a lower triangular Toeplitz invertible matrix, thus transforming the problem from a matrix algebra to a commuting subalgebra, admitting a recursive solution (the first one being trivial for the first coordinate). Finally let us mention that the application of our results to the Painlev\'{e} I hierarchy recovers the first members of the hierarchy, the Airy case ($g=0$), the Painlev\'{e} I equation ($g=1$) and the higher elements studied in \cite{Marchal2024}. 

\medskip

The paper is organized as follows, in \autoref{Sec2} the isospectral approach of \cite{BertolaHarnadHurtubise2022} is extended to the twisted case. Despite the fact that this is done only for the rank $2$, the strategy is indeed generalizable to higher rank cases. This remains beyond the scope of the present paper and is left for future works. In \autoref{Geo}, the goal is to review the twisted geometric construction of \cite{Marchal2024} leading to the general Hamiltonian giving the Painlev\'{e} I hierarchy, while explaining and clarifying some of the construction and providing an additional result concerning the reduction of the fundamental extended symplectic $2$-form
\begin{align}
 \Omega := \sum_{j=1}^g d q_j \wedge dp_j - \sum_{k=1}^{2 r_\infty-2} dt_{\infty,k} \wedge d\text{Ham}^{(\mathbf{e}_k)}    
\end{align}
explained in \autoref{Sym2form}. The last \autoref{Corrspondence} is devoted to the construction of the correspondence, in particular, two maps are constructed, one relating the set of isospectral Hamiltonians to the isomonodromic ones via the connection matrix $L(\lambda)$ thus allowing one to express the general Hamiltonian in terms of the isospectral Hamiltonians, and another relating the sets of coordinates to reach coordinates defined on the isospectral side.

\section{Twisted meromorphic connections over $\mathbb{P}^1$ }\label{Sec2}
\subsection{Twisted meromorphic connections and the Birkhoff factorization}
The starting point is the space of meromorphic connections admitting one fixed pole at infinity of arbitrary order over the projective line $\mathbb{P}^1$
\begin{definition}[Space of meromorphic connections with a pole at infinity]
Let $r_\infty \geq 3$ be a given integer. Define the space
\beq
F_{\infty, r_\infty}:= \left\{\hat{L}(\lambda) = \sum_{k=1}^{r_\infty-1} \hat{L}^{[\infty,k]} \lambda^{k-1} \,\,/\,\, \{\hat{L}^{[\infty,k]}\} \in \left(\mathfrak{gl}_2(\mathbb{C})\right)^{r_\infty-1}\right\}/{GL}_2 
\eeq
where ${GL}_2$ acts simultaneously by conjugation on all the coefficients $\{\hat{L}^{[\infty,k]}\}_{1\leq k\leq r_\infty-1}$. The meromorphic connection is defined on the vector bundle above the projective line as
\begin{align}
    \nabla = d- \hat{L} (\lambda)d\lambda
\end{align}
which, upon choosing a trivialization of the bundle becomes the system
\beq \label{diffsys} d \hat{\Psi}= \hat{L}(\lambda) \hat{\Psi}  d\lambda \,\, \Leftrightarrow \, \, \partial_\lambda \hat{\Psi}= \hat{L}(\lambda) \hat{\Psi}\eeq
where $\hat{\Psi}$ is by definition the $2 \times 2$ matrix of horizontal sections solution of the associated meromorphic differential system.
\end{definition}
This first definition states immediately that we are in the irregular case since the pole is assumed to have an order $r_\infty \geq 3$, yet does not contain any restriction on the leading order of the connection matrix. These restrictions on $\hat{L}(\lambda)$ are of particular interest since they manifest the different setups that one would consider. For instance, the \textbf{\textit{generic}} setup of the problem is obtained when considering a diagonalizable leading order with distinct eigenvalues, this setup was considered in \cite{BertolaHarnadHurtubise2022,MarchalOrantinAlameddine2022} and the link between the isospectral and isomonodromic symplectic structures was established in \cite{Marchal_2024} for the $\mathfrak{sl}_2(\mathbb{C})$ case. The main objective of the present work is a generalization of the results of \cite{BertolaHarnadHurtubise2022} to the \textbf{\textit{twisted}} setup on the isospectral side and to establish an explicit link with the $\mathfrak{sl}_2(\mathbb{C})$ symplectic structure of the isomonodromic side leading to the Painlev\'{e} I hierarchy as shown in \cite{Marchal2024}. More specifically, it is assumed that the leading order $\hat{L}^{[\infty,r_\infty-1]}$ is non-diagonalizable. In the literature, this case is also referred to as ``ramified'' at infinity. To this end, consider the following definition 

\begin{definition}[Set of twisted meromorphic connections at infinity] \label{Def} Let $r_\infty\geq 3$ be a given integer. The subset $\hat{F}_{\infty, r_\infty}$ of twisted connections is defined by
\small{\begin{align}
 \hat{F}_{\infty, r_\infty}=\bigg\{\hat{L}(\lambda) = \sum_{k=1}^{r_\infty-1} \hat{L}^{[\infty,k]} \lambda^{k-1} \,\,/\,\, \{\hat{L}^{[\infty,k]}\} \in \left(\mathfrak{gl}_2(\mathbb{C})\right)^{r_\infty-1} \nonumber \\ 
\text{ and } \hat{L}^{[\infty,r_\infty-1]} \text{ is not diagonalizable } \bigg\}/{GL}_2
\end{align}}
\normalsize{}
\end{definition} 
The space $ \hat{F}_{\infty, r_\infty}$ is the main subject of investigation from many point of views, it is worth mentioning that it is a Poisson manifold and its structure is inherited from a corresponding loop algebra through \textit{dual moment embeddings} \cite{HarnadHurtubise}. As a Poisson manifold, it has dimension 
\begin{align}
    \dim  \hat{F}_{\infty, r_\infty} = 3 r_\infty - 2
\end{align}
obtained after the conjugation action of the reductive group. The point of view adopted in this work is based on insights from the work of P. Boalch \cite{Boalch2001,boalch2015twistedwildcharactervarieties}, in particular, the local diagonalization of \autoref{Birkhoff} introduces a local fundamental form locally in the neighborhood of the pole controlled by the set of irregular times. Then, deforming the base of times while preserving the monodromies turns the space of connections itself into a Poisson fibre bundle over this base. Before introducing the base of times, it is important to note that the twisted setting is also characterized by the fractional powers of the local coordinate appearing in the local fundamental form. This is also manifested by the necessity of a double ramified cover over the ramified pole
\begin{align}
    z_\infty(\lambda) := \lambda^{-\frac{1}{2}}.  
\end{align}
Working always on a trivialization of the bundle, let us recall the classical description of formal solutions near an isolated singular point of first order linear differential systems:

\begin{proposition}[Birkhoff factorization] \label{Birkhoff} Consider the system (\ref{diffsys}) of first order ordinary differential equations over $\mathbb{P}^1$, then there exists a local gauge matrix (in the neighborhood of the pole)
\beq G_\infty(z)=G_{\infty,-1}z+G_{\infty,0}+\sum_{k=1}^{\infty} G_{\infty,k}z^{-k} \,\text{ with }\, G_{\infty,-1} \,\text{ of rank 1}
\eeq
with $z^{-1}G_\infty(z)$ locally holomorphic, and such that locally the horizontal section $\Psi_\infty(z)\overset{\text{def}}{=}G_\infty(z) \check{\Psi}$ is a formal fundamental solution, also known as a Turritin-Levelt fundamental form (or Birkhoff factorization) admits an expansion:
\begin{align}
    \Psi_\infty(z) = \Psi_{\infty}^{(\text{reg})}(z) e^{T(z)}, \qquad \Psi_{\infty}^{(\text{reg})}(z) = \left( I + \sum_{j\geq 1} \Psi_j^\infty z^j \right)
\end{align}
where the matrix $T(z)$ is a diagonal matrix given by 
\begin{align}
    T(z) = \diag\left(-\sum_{k=1}^{2r_\infty-2} \frac{t_{\infty,k}}{k z^k} + \frac{1}{2} \ln \frac{1}{z}, -\sum_{k=1}^{2r_\infty-2} (-1)^{k}\frac{t_{\infty,k}}{k z^k} + \frac{1}{2}\ln \frac{1}{z} \right). 
\end{align}
 This gauge action is manifested on the connection by $L_\infty= G_\infty \check{L} G_{\infty}^{-1}+ (\partial_\lambda G_\infty)G_\infty^{-1}$ which has a diagonal singular part at $\infty$:
 \begin{align}
 L_\infty(z)&=&\diag\left(-\frac{1}{2}\sum_{k=1}^{2r_\infty-2} (-1)^{k}t_{\infty,k} z^{-(k-2)} + \frac{z^2}{4}, -\frac{1}{2}\sum_{k=1}^{2r_\infty-2}(-1)^{k} t_{\infty,k} z^{-(k-2)} + \frac{z^2}{4}  \right) + O(1)   
 \end{align}
 In particular, locally, this diagonalization corresponds to 
 \begin{align}
     L_\infty(\lambda)=&\diag\left(-\frac{1}{2}\sum_{k=1}^{2r_\infty-2} t_{\infty,k} \lambda^{\frac{k}{2}-1} + \frac{1}{4\lambda}, -\frac{1}{2}\sum_{k=1}^{2r_\infty-2} (-1)^{k} t_{\infty,k} \lambda^{\frac{k}{2}-1} + \frac{1}{4\lambda}  \right) + O(1)
 \end{align}
 In particular, note that the local expansion admits a constant monodromy in the twisted rank $2$ setup, in contrary to the generic case. 
\end{proposition}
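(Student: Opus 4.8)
The plan is to read this statement as the Turrittin--Levelt--Hukuhara formal normal form theorem, specialized to the rank $2$ ramified (twisted) irregular singularity at $\lambda=\infty$, and then to carry out the reduction explicitly enough to read off the diagonal data $T(z)$ and $L_\infty$. Because the $GL_2$ conjugation has already been quotiented out in \autoref{Def}, I would begin with a constant gauge transformation bringing the leading coefficient $\hat L^{[\infty,r_\infty-1]}$ --- nilpotent and nonzero by the twisted hypothesis --- into the Jordan form $\left(\begin{smallmatrix}0&1\\0&0\end{smallmatrix}\right)$, normalizing the scalar. The structural reason the ramified substitution $z_\infty(\lambda)=\lambda^{-1/2}$ is forced is that the two eigenvalues of $\hat L(\lambda)$, namely $\pm\sqrt{-\det\hat L(\lambda)}$, coalesce at infinity: since the nilpotent leading term makes $\det\hat L$ drop degree, $-\det\hat L$ has odd degree $2r_\infty-5$ and its square root grows like the half-integer power $\lambda^{r_\infty-5/2}$. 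No formal gauge holomorphic in $\lambda$ can split such coalescing eigenvalues, whence the double cover.

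The crux of the argument is the combination of the pullback to the cover $\lambda=z^{-2}$ (under which $\partial_\lambda=-\tfrac12 z^3\partial_z$) with a shearing that resolves the nilpotent block. Concretely I would diagonalize the leading part using the eigenvector matrix of $\hat L$, which on the cover takes the form ``constant $+\, z\cdot(\text{rank-one})+\cdots$'' and degenerates to rank $1$ at $z=0$; this degeneracy is exactly the origin of the rank-$1$ condition imposed on $G_{\infty,-1}$ and of the announced pole structure of $G_\infty(z)$. After this step the transformed connection on the $z$-cover has a leading term with \emph{distinct} eigenvalues carrying the $\pm$ and $(-1)^k$ sign pattern, so one is reduced to the generic/unramified situation. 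Tracking the shearing so that it produces precisely this eigenvalue splitting while introducing only the stated singular behaviour --- and fixing the explicit normalizations, in particular the coefficient $\tfrac14$ and the sign alternation --- is where the main obstacle lies.

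With distinct leading eigenvalues in hand, I would build the formal diagonalizing series $\Psi^{(\mathrm{reg})}(z)=I+\sum_{j\geq1}\Psi_j^\infty z^j$ recursively. Inserting the ansatz $\Psi_\infty=\Psi^{(\mathrm{reg})}e^{T}$ into $\partial_\lambda\Psi_\infty=L_\infty\Psi_\infty$ yields at each order a Sylvester-type relation $[\Lambda_0,\Psi_j^\infty]=(\text{known lower-order data})$ with $\Lambda_0$ the diagonal leading term; distinctness of its entries uniquely fixes the off-diagonal part of $\Psi_j^\infty$, while the diagonal part is absorbed into the accumulating exponent, i.e.\ into $T(z)$. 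Resumming these diagonal contributions and converting back through $\partial_\lambda=-\tfrac12 z^3\partial_z$ gives exactly $\partial_\lambda T(z)$, which I would verify reproduces the stated $L_\infty(z)$ and $L_\infty(\lambda)$; the residual $\tfrac12\ln\tfrac1z$ appears as the exponent of formal monodromy, pinned down by the simple-pole term $\tfrac{1}{4\lambda}$.

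Finally, for the closing assertion I would compute the monodromy of $\Psi_\infty$ around $z=0$: the deck transformation $z\mapsto -z$ of the double cover together with the logarithmic term $\tfrac12\ln\tfrac1z$ shift both diagonal entries of $T(z)$ by equal half-integer amounts, so the formal monodromy is scalar, i.e.\ constant --- the feature special to the twisted rank $2$ case and absent generically. All remaining steps beyond the shearing are the classical formal recursion plus a bookkeeping check of coefficients, so the whole difficulty is concentrated in the ramification/shearing step of the second paragraph.
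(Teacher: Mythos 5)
The paper does not actually prove this proposition: it is introduced with the words ``let us recall the classical description of formal solutions near an isolated singular point'' and is stated as the known Hukuhara--Turrittin--Levelt (Birkhoff) formal normal form, with no \verb+proof+ environment following it. Your proposal therefore cannot be compared to an internal argument; what it does is supply the standard proof that the paper implicitly defers to the classical literature, and as a sketch of that proof it is essentially correct. The chain ``Jordan-normalize the nilpotent leading term $\to$ observe that $-\det\hat L$ drops to odd degree $2r_\infty-5$ so the eigenvalues only separate on the double cover $\lambda=z^{-2}$ $\to$ shear to split the leading eigenvalues $\to$ run the Sylvester-equation recursion, absorbing diagonal contributions into $T(z)$'' is the textbook route, and it correctly locates both the origin of the rank-one condition on $G_{\infty,-1}$ and of the $\tfrac{1}{4\lambda}$ (equivalently $\tfrac{z^2}{4}$) term coming from $\partial_\lambda\bigl(\tfrac12\ln\tfrac1z\bigr)$.

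Two small imprecisions are worth fixing. First, the final monodromy computation: under the deck transformation $z\mapsto -z$ the logarithmic term indeed shifts both diagonal entries of $T$ by the same constant, but the polar parts $-\sum_k t_{\infty,k}/(kz^k)$ and $-\sum_k(-1)^k t_{\infty,k}/(kz^k)$ are \emph{exchanged}, so the formal monodromy is a constant \emph{anti-diagonal} (permutation-type) matrix times a root of unity, not a scalar multiple of the identity; it is ``constant'' in the sense the proposition intends (independent of the irregular times), but calling it scalar overstates the conclusion. Second, your degree-drop and eigenvalue-splitting argument tacitly assumes $t_{\infty,2r_\infty-3}\neq 0$, i.e.\ that the determinant drops by exactly one degree and the sheared leading term has genuinely distinct eigenvalues at every stage of the recursion; this is the generic twisted situation the paper works in, but the degenerate limit $t_{\infty,2r_\infty-3}=0$ is treated separately in the paper (Remark 3.1) and your recursion would need modification there. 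Neither point invalidates the proof for the setting of the proposition.
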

Before stating some remarks of the above definition, let us first define the base of times parameterizing the irregular type, naturally introduced due to the local diagonalization. 
\begin{definition}[Birkhoff invariants] The complex numbers $\left(t_{\infty,k}\right)_{1\leq k\leq 2r_\infty-2}$ define the ``irregular times'' at infinity denoted $\mathbf{t}=\{(t_{\infty,k})_{1\leq k\leq 2r_\infty-2}\}$, they parameterize the irregular type of the connection matrix.
\end{definition}
\begin{remark}
In general, one would consider arbitrary singular diagonal coefficients $t_{\infty^{(1)} , k},t_{\infty^{(2)} , k}$, as is done in \cite{MarchalOrantinAlameddine2022} in the generic case. However in the twisted rank $2$ case one has $\Tr \hat{L} = \Tr L_\infty + O(1)$ so that one must have $t_{\infty^{(1)} , k} = (-1)^k t_{\infty^{(2)} , k}$ (due to the fact that the l.h.s. is polynomial in $\lambda$). This fact is used immediately in the construction, and is specific to the case at hand. For instance, this does not appear on the generic side. Furthermore, the constant monodromy in this case is the manifestation of an infinite cyclic group of monodromy associated to $\mathbb{P}^1 \setminus \{\infty \}$. 
\end{remark}

\subsection{Spectral invariants and the link with the Birkhoff factorization}
The primary objective in this section is the expression of the irregular times of the local diagonalization as residues of the spectral invariants presented in our first \autoref{Specinv1}. To this end, the notion of the spectral curve is first introduced. 
\begin{definition}[The spectral curve] The spectral curve $\mathcal{S} $ attached to the connection matrix is the affine degree $2$ plane curve obtained through a compactification of the characteristic equation 
\begin{align} \label{spectral}
    \det \left( y I - \hat{L}(\lambda) \right) = 0
\end{align}
by adding a point at the infinite pole. This allows one to view the differential $1-$form $y d \lambda$ as a global meromorphic differential on the compactified curve admitting thus a local Laurent series near the irregular infinite pole representing it. 
\end{definition}
It is crucial to note the difference between the setups at this stage, in particular, the generic case of \cite{BertolaHarnadHurtubise2022} is based on the assumption that the leading order of the connection matrix admits a simple spectrum (distinct simple eigenvalues), and thus the eigenvalues all admit different Laurent series around the irregular pole (or irregular poles if one considers additional finite poles). The setup of the present work differs from the generic case as there is not any assumption put on the spectrum of the leading order, nevertheless, the local diagonalization places us in a rather similar problem. In particular, from the Birkhoff diagonalization, there exists eigenvalues solutions of the compactified spectral curve $\tilde{\mathcal{S}}$ (of degree $d$ in the general rank $d$ setup). These eigenvalues are characterized by the presence of fractional powers of the local coordinate which is the manifestation of the ramification. \\

This perspective is adopted to our case, first, in order to catch the local behavior around the pole, the matrix $1-$form $L(\lambda) d\lambda$ is used locally, note that one has
\begin{align}
    \frac{d\lambda}{d z} =-2 \lambda^{\frac{3}{2}}
\end{align}
The next step is to choose a suitably normalized invertible generalized matrix $M(z)$ for the connection matrix $L(z)$. In particular, this matrix admits a Laurent series in the neighborhood of the infinite pole given by 
\begin{align} \label{Modloc}
    M(\lambda) \overset{\lambda\to \infty}{=} M^\infty(z) = \left( I + \sum_{j\geq1} F^\infty_{j} z^j \right).
\end{align}
In particular, this allows one to write 
\begin{align}
    L(\lambda) M^\infty(z) = - \frac{1}{2 \lambda^{\frac{3}{2}}} M^\infty(z) J^\infty(z),
\end{align}
Where the matrix $J^\infty(z)$ is the matrix containing the spectral invariants attached to the connection matrix through its spectral curve. This perspective allows one to conclude the following first result of the paper. 

\begin{theorem}\label{Specinv1}[Twisted residue formula for the Birkhoff invariants] The spectral invariants solution of the spectral plane curve (\ref{spectral}) are given in terms of the Birkhoff invariants by the following expansion
\begin{align} \label{expansion}
y_1(\lambda)\overset{\lambda \to \infty}{=} & - \frac{1}{2} \sum_{k=1}^{2r_\infty-2} t_{\infty,k} \lambda^{\frac{k}{2}-1} +O\left(\lambda^{-1}\right) \cr
y_2(\lambda)\overset{\lambda \to \infty}{=} & -\frac{1}{2} \sum_{k=1}^{2r_\infty-2} (-1)^kt_{\infty,k} \lambda^{\frac{k}{2}-1} +O\left(\lambda^{-1}\right)
\end{align}
In other words, the spectral invariants are given through the residue formula 
\begin{align}
    t_{\infty,k} =  \, \Res_{\lambda\to \infty} \,\, \lambda^{-\frac{k}{2} -\frac{1}{2}} \,\, y_1(\lambda) \, d \lambda
\end{align}
\end{theorem}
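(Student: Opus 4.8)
The plan is to obtain the two branches $y_1,y_2$ of \eqref{spectral} as the eigenvalues of the connection matrix and to exploit their invariance under the normalizing transformation \eqref{Modloc}. Since the relation $L(\lambda)M^\infty(z)=-\tfrac{1}{2\lambda^{3/2}}M^\infty(z)J^\infty(z)$ is a genuine similarity (no derivative term enters, in contrast with the gauge action of \autoref{Birkhoff}), one has $\det\big(yI-L(\lambda)\big)=\det\big(yI+\tfrac{1}{2\lambda^{3/2}}J^\infty(z)\big)$, so that $y_1$ and $y_2$ are $-\tfrac{1}{2\lambda^{3/2}}$ times the eigenvalues of $J^\infty(z)$. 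I would first construct $M^\infty(z)$ and $J^\infty(z)$ recursively order by order in $z=\lambda^{-1/2}$ from this relation. The decisive point is that passing to the ramified double cover resolves the branch point: although the two eigenvalues collide at leading order in $\lambda$ because $\hat L^{[\infty,r_\infty-1]}$ is non-diagonalizable (hence has a repeated eigenvalue), on the cover they become distinct single-valued Laurent series in $z$ exchanged by $z\mapsto-z$, which is exactly what makes the recursion solvable and what produces the half-integer powers of $\lambda$ appearing in \eqref{expansion}.

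It is then natural to split the expansion according to the sheet involution $z\mapsto-z$. The symmetric part is $y_1+y_2=\Tr\hat L(\lambda)$, an honest polynomial in $\lambda$ carrying the integer powers $\lambda^{m-1}$ (the even indices $k=2m$), while the antisymmetric part $y_1-y_2=\sqrt{(\Tr\hat L)^2-4\det\hat L}$ carries the half-integer powers $\lambda^{m-\frac12}$ (the odd indices). A short computation shows that the coincidence of the two eigenvalues of the leading coefficient forces the top coefficient of the discriminant $(\Tr\hat L)^2-4\det\hat L$ to cancel, lowering its degree by one; this pins the leading half-integer power to $\lambda^{r_\infty-\frac52}$, matching the top odd term $t_{\infty,2r_\infty-3}$ of \eqref{expansion} and confirming the overall shape of the expansion.

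The heart of the proof is to identify the coefficients of this intrinsic eigenvalue expansion with the Birkhoff invariants $t_{\infty,k}$ of \autoref{Birkhoff}. For this I would compare the two normal forms attached to the same connection: the conjugation normal form $J^\infty$ above, and the gauge normal form $L_\infty$, whose singular diagonal part is $\partial_\lambda T(z)=\diag\big(-\tfrac12\sum_k t_{\infty,k}\lambda^{k/2-1}+\tfrac{1}{4\lambda},\ \dots\big)$, as one checks by differentiating $T(z)$ using $\partial_\lambda=-\tfrac12\lambda^{-3/2}\partial_z$. The irregular (Turrittin--Levelt) type of the connection is intrinsic, so these two presentations must agree in their singular parts; the only discrepancies are the logarithmic term $\tfrac12\ln\tfrac1z$ in $T(z)$, whose $\lambda$-derivative is the harmless $\tfrac{1}{4\lambda}$, and the regular prefactor $\Psi_\infty^{(\mathrm{reg})}=I+O(z)$, whose contribution $(\partial_\lambda\Psi_\infty^{(\mathrm{reg})})(\Psi_\infty^{(\mathrm{reg})})^{-1}$ is $O(\lambda^{-3/2})$ for the same reason. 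Making this precise --- namely showing that neither the derivative term of the ramified gauge nor the formal monodromy contaminates the coefficients of $\lambda^{-1/2},\lambda^{0},\dots,\lambda^{r_\infty-2}$, and that ordinary eigenvalue perturbation is legitimate because the two branches are well separated away from the branch point --- is the step I expect to be the main obstacle. Once it is in place, the singular part of $y_1$ coincides with that of $\partial_\lambda T$ down to order $\lambda^{-1/2}$, which is precisely \eqref{expansion}.

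Finally, the residue formula is a formal restatement of \eqref{expansion}. Since $y_1\,d\lambda$ is a meromorphic differential on the cover with local coordinate $z$, I would substitute the expansion into $\Res_{\lambda\to\infty}\lambda^{-k/2-1/2}y_1(\lambda)\,d\lambda$ and, keeping track of the orientation and of the Jacobian $d\lambda=-2\lambda^{3/2}\,dz$, extract the coefficient of $z^{-1}\,dz$; all powers other than the one carrying $t_{\infty,k}$ integrate to zero, isolating the desired invariant.
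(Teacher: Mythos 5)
Your proposal is correct and follows essentially the same route as the paper: the paper's proof likewise identifies the regular Turrittin--Levelt factor $\Psi_\infty^{(\mathrm{reg})}$ with the normalized eigenvector matrix $M^\infty(z)$ and reads off the singular part of the eigenvalues from $\partial_\lambda T(z)$, the crux being exactly the one you isolate --- that $(\partial_\lambda \Psi_\infty^{(\mathrm{reg})})(\Psi_\infty^{(\mathrm{reg})})^{-1}=O(\lambda^{-3/2})$ and the formal monodromy term $\tfrac{1}{4\lambda}$ only affect orders absorbed into the error term. Your intrinsic cross-check via $y_1+y_2=\Tr\,\hat L$ and $y_1-y_2=\sqrt{(\Tr\,\hat L)^2-4\det\hat L}$, with the degree drop of the discriminant forced by the non-diagonalizable leading coefficient, is not spelled out in the paper but is consistent with it and supplies justification the paper leaves implicit.
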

\begin{proof}
    Note that in a local neighborhood, the local expansion of the eigenvector matrix is given in (\ref{Modloc}). On the other hand, the same is true for the local diagonalization of \autoref{Birkhoff}. In particular, the differential system in a local neighborhood is given by 
    \begin{align}
        \frac{d  \Psi_{\infty}^{(\text{reg})}}{d z_\infty} +  \Psi_{\infty}^{(\text{reg})} \frac{d T(z_\infty)}{d z_\infty} = L(\lambda) \frac{d\lambda}{d z_\infty}  \Psi_{\infty}^{(\text{reg})}
    \end{align}
    One immediately realizes that locally, the singular part on the l.h.s. is expressed in the term $\Psi_{\infty}^{(\text{reg})} \frac{d T(z_\infty)}{d z_\infty}$, restricting to the singular behavior, this means that both singular expansions coincide. The identification of the orders in the local coordinate of 
    \begin{align}
        L  = \sum_{k=1}^{2 r_\infty-2} L_k \lambda^{\frac{k}{2} -1} + O(\lambda^{-1})
    \end{align}
    with the orders of the derivative of the irregular type $T(z_\infty)$ yields the result. In other words, the matrix $\Psi_{\infty}^{(\text{reg})}$ coincides under a suitable normalization with the matrix of eigenvectors in the neighborhood of the pole up to the specific order. This in particular provides the expansion in our case expressed in (\ref{expansion}).
\end{proof}
It is worth mentioning at this stage, that despite the strategy being similar to the generic setup, the twisted case has always additional technical subtleties that one needs to handle, for instance, the ramification of the pole and the double cover over $\mathbb{P}^1$ play an important role as they change the orders of the local expansion of the connection matrix in order to obtain the result which would be just not possible without the double cover. \\

There are several consequences of this correspondence between the isospectral invariants and the Birkhoff invariants, one very particular consequence is the definition of the isospectral Hamiltonians linked to the eigenvalues of the connection matrices. Note that our result is compatible with the explicit geometric construction of the next \autoref{Geo} where explicit asymptotics for the connection matrix are given. The next goal, is to establish the link between the isospectral Hamiltonians and the expansion of the spectral invariants, the main objective of the next \autoref{IsoHam}.

\subsection{Isospectral Hamiltonians}\label{IsoHam}
The goal of this section is the introduction of the \textbf{\textit{isospectral Hamiltonians}} in relation with the spectral invariants defined in the previous section. On the generic side, the starting point is the definition of the total differential form $\omega_{JMU}$ introduced by the Japanese school of Jimbo, Miwa and Ueno in \cite{JimboMiwaUeno,JimboMiwa}. It was proven in op. cit. that this differential is closed when restricted to the solution manifold of the isomonodromic equation and hence is locally exact (proved only in the generic setup). The generalization of these results to the twisted setup is still missing in the literature, it is well-known that to define this differential in general settings one needs to consider additional contributions. This is important since through this notion, one gets to the definition of the $\tau-$function seen as a section of a line bundle over the solution space of the isomonodromic equations. Several attempts and strategies to generalize this differential are present and in \cite{Bertola2005} the expansion of the $\tau-$function has been considered, the authors than show that $d  \ln{\tau} $ is closed, and therefore, is the differential of a locally well-defined function. Other extensions are present in the literature inspired from works of \cite{CIF_1982__17__A2_0}, this was used in \cite{10.1215/00127094-2017-0055} to solve the evaluation of the connection formulae for isomonodromic $\tau-$functions in an explicit way. 

\medskip

Another way of dealing with the present problem is to change the gauge into one where the connection matrix is diagonalizable but admitting an additional twisted Fuchsian pole \cite{hayford2025isingmodelcoupled2d}, this again takes us out of the definition of the Japanese school. Nevertheless, it was shown in \cite{Lisovyy:2016qig} for Painlev\'{e} I, that the problem is surmounted by simply ignoring these additional Fuchsian contributions. For the purpose of the present article (the definition of the isospectral Hamiltonians), results of $\cite{Bertola2005}$ are used, however, the definition used must not be thought of as a generalization of the differential to the twisted side, rather, it is a sufficient one for our purposes. In particular we define 

\begin{definition}[Isospectral Hamiltonians]\label{specinv2} The set of isospectral Hamiltonians $\mathbf{I_{\infty,k}}$ is composed of elements defined through the expansion of the eigenvalues of the connection matrix, they are given by
\begin{align}
    \Res_{\lambda \to \infty} \left(y_1+ (-1)^ky_2 \right) \frac{\lambda^{\frac{k}{2}}}{k} = I_{\infty,k} + (-1)^k I_{\infty,k}
\end{align}
In other words, the expansion of the eigenvalues is given by
\begin{align}
        y_1(\lambda)\overset{\lambda \to \infty}{=} & -\frac{1}{2} \sum_{k=1}^{2r_\infty-2} t_{\infty,k} \lambda^{\frac{k}{2}-1} + \sum_{k=1}^{2r_\infty-2} k I_{\infty,k} \lambda^{-\frac{k}{2}-1} + O\left( \lambda^{-r_\infty-2} \right) \cr
        y_2(\lambda)\overset{\lambda \to \infty}{=} & -\frac{1}{2} \sum_{k=1}^{2r_\infty-2} (-1)^kt_{\infty,k} \lambda^{\frac{k}{2}-1}+ \sum_{k=1}^{2r_\infty-2}  (-1)^kk I_{\infty,k} \lambda^{-\frac{k}{2}-1} + O\left( \lambda^{-r_\infty-2} \right)
\end{align}
    
\end{definition}

Note that by definition, the isospectral Hamiltonians (on the generic side) are defined through the differential $\omega_{JMU}$ and thus determine the corresponding $\tau-$function of the system. This has not been generalized completely for the ramified case despite notable quests searching for a general definition. This is why in this article we do not tackle this question and rely on the eigenvalue expansion for the main definition.

\medskip

One natural idea suggests the definition of the isomonodromic oper coefficients in the same way (defined in the next section). However, this is relatively difficult since one gets additional factors dependent on the Birkhoff invariants of the local formal solution. This is one of the advantages present on the geometric side, the general theory is independent of the polar structure and handles the ramification using the same strategy as we will see in the next section.

\section{Geometric construction of isomonodromic deformations of the Painlev\'{e} I hierarchy} \label{Geo}
This section is devoted to the geometric construction of the Hamiltonian representation of the twisted rank $2$ connection upon considering deformations of the base of times introduced in \autoref{Birkhoff}. This has been the main result of \cite{Marchal2024}, the construction is reviewed and complemented with minor results/clarifications of the construction. The main new result of this section is the reduction of the fundamental extended symplectic $2-$form $\Omega$ presented in \autoref{Sym2form}. Note however, that the full statements and the proofs are omitted, the interested reader is referred to op. cit. for the full discussion of the results leading to the Painlev\'{e} I hierarchy. The starting point is \autoref{Birkhoff} defining the local diagonalization along with the gauge action of the fixed complex reductive group $GL_2(\mathbb{C})$. 

\subsection{Normalized representative and the choice of coordinates}
Let us first consider \autoref{Def} which includes the conjugation action splitting our construction over the orbits of the space. Using this action one may always fix a representative of each orbit admitting a leading coefficient $\hat{L}^{[r_\infty-1]}$ having a lower triangular form and with equal diagonal elements. Furthermore, the remaining action of the Cartan subgroup of diagonal matrices allows one to fix the diagonal part of the subleading order $\hat{L}^{[r_\infty-2]}$ to have the same entries. On the other hand, the trace and the determinant of the local diagonalization satisfy 
\begin{align}
    \Tr\, \hat{L} = \Tr\, L_\infty + O(1), \qquad \det L_\infty = \det \left( \hat{L} + G_\infty^{-1} \partial_\lambda G_\infty\right)
\end{align}
After fixing the irregular type of the connection parameterized by the set of Birkhoff invariants, this allows one to identify the space as 
\beq\label{NormalizationInfty}
\begin{array}{ll}
\hat{\mathcal{M}}_{\infty,r_\infty,\mathbf{t}} \sim &\Big\{  \tilde{L}(\lambda) = {\displaystyle \sum_{k=1}^{r_\infty-1}} \tilde{L}^{[\infty,k]} \lambda^{k-1}
\,/\, \tilde{L}^{[\infty,r_\infty-1]}=\begin{pmatrix} -\frac{1}{2}t_{\infty,2r_\infty-2} & 0\\ \frac{1}{4} (t_{\infty,2r_\infty-3})^2 & -\frac{1}{2}t_{\infty,2r_\infty-2}\end{pmatrix}\cr
&\text{ and } \tilde{L}^{[\infty,r_\infty-2]} =\begin{pmatrix}-\frac{1}{2}t_{\infty,2r_\infty-4}&1\\ \delta_\infty&-\frac{1}{2}t_{\infty,2r_\infty-4} \end{pmatrix}, \, \delta_\infty \in \mathbb{C}
\Big\}
\end{array} .
\eeq
We use the notation $\tilde{L} (\lambda)$ whenever we consider such a representative. This space is the phase space of deformations over the base of Birkhoff invariants, it is in particular a symplectic variety seen as the product of the orbits quotient by the conjugation action (completely exhausted at this stage). It has dimension $2 r_\infty -6 =2g$, where $g$ is the genus of the spectral curve attached to the connection matrix. 

\begin{remark}
    The results presented in this article extends to the limit $t_{\infty,2r_\infty-3}=0$ and $ t_{\infty,2r_\infty-2} \neq 0$, this limit in particular corresponds to the case where the leading order of the connection is diagonalizable admitting a non-simple spectrum, this is actually a direct example of a diagonalizable matrix with a non-simple spectrum. Note however, that even in this limit, the construction of the connection is still twisted due to the fractional powers in the irregular type of the connection matrix. Yet in this limiting case, although the results of the paper remain valid, one needs to adapt most of the formulation. For instance, one needs to truncate the matrix $M_\infty$ which becomes non-invertible. Yet, studying this limit is an instance of the geometric theory developed in this work and could be seen as a particular case of the formalism, the conclusion is that despite the fact that the leading order of the connection matrix is diagonalizable, the formal normal form (or the irregular type) still admits fractional powers and hence the case would be always twisted.
\end{remark}

The symplectic nature of the space $\hat{\mathcal{M}}_{\infty,r_\infty,\mathbf{t}}$ motivates the definition of a set of coordinates, the idea behind this definition dates back to Garnier and is widely used in the study of isomonodromy systems. 
\begin{definition}[Apparent singularities] In the respective gauge, considering a representative as above, the entry $\left[\tilde{L}(\lambda)\right]_{1,2}$ is a monic polynomial function of $\lambda$ of degree $r_\infty-3=g$. The set of coordinates $(q_i)_{1\leq i\leq g}$ is defined as the zeroes of $\left[\tilde{L}(\lambda)\right]_{1,2}$
\beq\label{Conditionqi}
\forall\, i\in \llbracket 1,g\rrbracket \, : \; \left[\tilde{L}(q_i)\right]_{1,2} = 0.
\eeq  
This set is complemented by the dual coordinates defined by 
\beq \label{Conditionpi}
\forall\, i\in \llbracket 1,g\rrbracket \, : \; p_i:=\left[\tilde{L}(q_i)\right]_{1,1}.
\eeq
and the two sets of coordinates define a set of Darboux coordinates on the symplectic phase space. 
\end{definition}
In particular, this definition ensures that for any $i \in \llbracket 1,g \rrbracket$, the pair $\left(q_i,p_i\right)$ defines by definition a point on the spectral curve defined by $\det(y I_2 -\td{L}(\lambda))=0$. In other words, one has
\beq 
\forall\, i\in \llbracket 1,g\rrbracket \, : \; \det(p_i-\tilde{L}(q_i)) = 0.
\eeq

\subsection{Oper gauge and isomonodromic deformations}
In this section, the oper gauge of the system is introduced, this change of basis of the bundle is essential for the study of the compatibility equation introduced when considering deformations of the base of times. In general, consider a rank $2$ system given by 
\beq  \partial_\lambda \td{\Psi}(\lambda)=\td{L}(\lambda)\td{\Psi}(\lambda).\eeq
One may perform a bundle automorphism seen as a change of basis in order to represent the system in a scalar form, notably, one defines 
\beq \label{GaugeGexpr}  \Psi(\lambda)=G(\lambda) \td{\Psi}(\lambda) \,\,\text{with}\,\, G(\lambda)=\begin{pmatrix} 1&0\\ \td{L}_{1,1}& \td{L}_{1,2}\end{pmatrix}\eeq
In particular the horizontal sections $\Psi$ are now solutions of the companion-like system
\beq \label{CompanionMatrix}\partial_\lambda \Psi(\lambda)=L(\lambda)\Psi(\lambda)\,\,\text{with}\,\, L(\lambda)=\begin{pmatrix}0&1\\ L_{2,1}&L_{2,2}\end{pmatrix}\eeq 
given by
\bea \label{LInTermsOfTdL} L_{2,1}&=&-\det \td{L}+\partial_\lambda\td{L}_{1,1}-\td{L}_{1,1}\frac{\partial_\lambda\td{L}_{1,2}}{\td{L}_{1,2}},\cr
L_{2,2}&=&\Tr\, \td{L} +\frac{\partial_\lambda\td{L}_{1,2}}{\td{L}_{1,2}}.
\eea
Note that the first line of $\Psi$ and $\td{\Psi}$ remains the same so that
\beq \Psi(\lambda)=\begin{pmatrix}\td{\Psi}_{1,1}(\lambda)& \td{\Psi}_{1,2}(\lambda)\\ \partial_\lambda \td{\Psi}_{1,1}(\lambda)& \partial_\lambda \td{\Psi}_{1,2}(\lambda) \end{pmatrix}=\begin{pmatrix}\psi_{1}(\lambda)& \psi_2(\lambda)\\ \partial_\lambda \psi_{1}(\lambda)& \partial_\lambda \psi_{2}(\lambda) \end{pmatrix}.\eeq
The companion-like system \eqref{CompanionMatrix} is equivalent to say that $\psi_1$ and $\psi_2$ satisfy the linear ODE:
\beq \left(\left[\partial_{\lambda}\right]^2 -L_{2,2}(\lambda)\partial_\lambda -L_{2,1}(\lambda)\right)\psi_i=0\eeq 
which is sometimes referred to as the ``quantum curve''. In order to relate both sides (isomonodromic and isospectral), explicit expressions are needed and are recorded in the following proposition. 

\begin{proposition}[Explicit formulation of \cite{Marchal2024}]
The gauge matrix leading to the oper system is defined explicitly as
\begin{align}
  G(\lambda) :=  \begin{pmatrix} 1&0\\ -Q(\lambda)-\left(\frac{1}{2}t_{\infty,2r_\infty-2}\lambda+g_0 \right) \underset{j=1}{\overset{g}{\prod}}(\lambda-q_j)&\underset{j=1}{\overset{g}{\prod}}(\lambda-q_j)\end{pmatrix}
\end{align}
where the values of $Q(\lambda)$ (the Lagrange polynomial) and $g_0$ are given by 
\begin{align}
\label{DefQ2}  Q(\lambda,\hbar) =& -\sum_{i=1}^g p_i \prod_{j\neq i}\frac{\lambda-q_j}{q_i-q_j} \nonumber \cr
 g_0=& \frac{1}{2}t_{\infty,2r_\infty-4}+\frac{1}{2}t_{\infty,r_\infty-2}\underset{j=1}{\overset{g}{\sum}} q_j \nonumber
\end{align}
In particular, this gauge transformation leaves us with a system admitting a connection matrix given by 
\beq L(\lambda,\hbar)=\begin{pmatrix} 0&1\\ L_{2,1}(\lambda)&L_{2,2}(\lambda)\end{pmatrix}\eeq
with entries
\bea L_{2,2}(\lambda)&=&\td{P}_1(\lambda)+ \sum_{j=1}^g \frac{\hbar}{\lambda-q_j}\cr
L_{2,1}(\lambda)&=&-\td{P}_2(\lambda) +\sum_{k=0}^{r_\infty-4} H_{\infty,k}\lambda^k - \sum_{j=1}^g\frac{ p_j}{\lambda-q_j}
\eea
Coefficients $\left(H_{\infty,k}\right)_{0\leq k\leq r_\infty-4}$ are referred to as the \textit{\textbf{isomonodromic oper coefficients}} due to the fact that the reduced Hamiltonian is completely determined by them. The terms $\td{P}_1(\lambda)$ and $\td{P}_2(\lambda)$ are polynomials given by:
\begin{align}
 \label{DefP1}\td{P}_1(\lambda)=&\sum_{j=0}^{r_\infty-2} \td{P}_{\infty,j}^{(1)}\lambda^j=-\sum_{j=0}^{r_\infty-2}t_{\infty,2j+2}\lambda^j, \cr
\td{P}_2(\lambda)&=\sum_{k=r_\infty-3}^{2r_\infty-4} \td{P}_{\infty,k}^{(2)}\lambda^{k}.
\end{align}
where, the coefficients of the second polynomial are given by 
\begin{align}
\td{P}_{\infty,k}^{(2)}=&\frac{1}{4}\sum_{j=2k-2r_\infty+6}^{2r_\infty-2}(-1)^j t_{\infty,j}t_{\infty,2k-j+4} \,\,,\,\, \forall\,k\in \llbracket r_\infty-2, 2r_\infty-4\rrbracket \cr
 \td{P}_{\infty,r_\infty-3}^{(2)}=&\frac{1}{4}\sum_{j=1}^{2r_\infty-3}(-1)^j t_{\infty,j}t_{\infty,2r_\infty-j-2}
\end{align}
We also define
\beq \hat{P}_2(\lambda)=\td{P}_2(\lambda)- \sum_{k=0}^{r_\infty-4} H_{\infty,k}\lambda^{k}\eeq
where the $g=r_\infty-3$ coefficients $\left(H_{\infty,k}\right)_{0\leq k\leq r_\infty-4}$ remaining unknown at this stage are the isomonodromic oper coefficients whose expression is given later.
\end{proposition}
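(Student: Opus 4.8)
The plan is to reconstruct the two nontrivial entries $\tilde{L}_{1,1}$ and $\tilde{L}_{1,2}$ of the normalized representative \eqref{NormalizationInfty} explicitly in terms of the Darboux coordinates $(\mathbf{q},\mathbf{p})$ and the fixed Birkhoff invariants, and then to feed these into the gauge-transformation formulas \eqref{LInTermsOfTdL} to read off $L_{2,1}$ and $L_{2,2}$. The entry $\tilde{L}_{1,2}$ is the easiest: by the definition of the apparent singularities it is a monic polynomial of degree $g=r_\infty-3$ vanishing exactly at the $q_j$, hence $\tilde{L}_{1,2}(\lambda)=\prod_{j=1}^g(\lambda-q_j)$. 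This already fixes the second column of $G(\lambda)$ and, through the logarithmic derivative $\partial_\lambda\tilde{L}_{1,2}/\tilde{L}_{1,2}=\sum_{j=1}^g(\lambda-q_j)^{-1}$, controls all the simple-pole contributions that appear later in $L_{2,1}$ and $L_{2,2}$.

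Next I would reconstruct $\tilde{L}_{1,1}$, a polynomial of degree $r_\infty-2=g+1$ constrained by $\tilde{L}_{1,1}(q_i)=p_i$ for all $i$. I would write it as the unique Lagrange interpolant through the nodes $(q_i,p_i)$, which is exactly $-Q(\lambda)$ for $Q$ as in \eqref{DefQ2}, plus a correction vanishing at every $q_i$, i.e.\ a degree-one polynomial times $\prod_{j}(\lambda-q_j)$; a dimension count ($g$ interpolant coefficients plus $2$ from the linear factor equals $\dim$ of polynomials of degree $\le g+1$) shows this decomposition is exact and unique. The two free coefficients of that degree-one factor are then pinned down by matching the $\lambda^{g+1}$ and $\lambda^{g}$ coefficients of $\tilde{L}_{1,1}$ to the normalized diagonal values $-\tfrac12 t_{\infty,2r_\infty-2}$ and $-\tfrac12 t_{\infty,2r_\infty-4}$ prescribed in \eqref{NormalizationInfty}; expanding $\prod_{j}(\lambda-q_j)=\lambda^g-\big(\sum_j q_j\big)\lambda^{g-1}+\cdots$ and comparing coefficients yields the stated slope $\tfrac12 t_{\infty,2r_\infty-2}$ and the value of $g_0$. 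Assembling these two entries into the matrix \eqref{GaugeGexpr} gives the explicit gauge matrix $G(\lambda)$.

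With $G$ fixed, the expressions for $L_{2,2}$ and $L_{2,1}$ follow by substitution into \eqref{LInTermsOfTdL}. For $L_{2,2}=\Tr\tilde{L}+\partial_\lambda\tilde{L}_{1,2}/\tilde{L}_{1,2}$, the trace is the polynomial spectral invariant $y_1+y_2$; by the Birkhoff identification of \autoref{Specinv1} the odd-index times cancel and the even-index times assemble into $\tilde{P}_1(\lambda)=-\sum_{j}t_{\infty,2j+2}\lambda^j$, while the logarithmic derivative supplies simple poles at each $q_j$ with unit residue, appearing as $\sum_j\hbar(\lambda-q_j)^{-1}$ once the WKB parameter $\hbar$ normalizing $\partial_\lambda$ in \eqref{CompanionMatrix} is reinstated. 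For $L_{2,1}=-\det\tilde{L}+\partial_\lambda\tilde{L}_{1,1}-\tilde{L}_{1,1}\,\partial_\lambda\tilde{L}_{1,2}/\tilde{L}_{1,2}$, the last term produces simple poles whose residue at $q_j$ is $-\tilde{L}_{1,1}(q_j)=-p_j$, giving the term $-\sum_j p_j(\lambda-q_j)^{-1}$; the remaining two terms are polynomial and, after separating the part fixed by the Birkhoff invariants from the part depending on the coordinates, assemble into $-\tilde{P}_2(\lambda)+\sum_{k=0}^{r_\infty-4}H_{\infty,k}\lambda^k$, the lowest $g$ coefficients defining the isomonodromic oper coefficients $H_{\infty,k}$.

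The main obstacle is the explicit evaluation of $\det\tilde{L}$ and the identification of its top coefficients with the quadratic convolutions $\tilde{P}^{(2)}_{\infty,k}=\tfrac14\sum_j(-1)^j t_{\infty,j}t_{\infty,2k-j+4}$. The cleanest route is to use $\det\tilde{L}=y_1 y_2$ together with the eigenvalue expansions of \autoref{Specinv1}: multiplying the two Laurent series $y_{1,2}=-\tfrac12\sum_k(\pm1)^k t_{\infty,k}\lambda^{\frac{k}{2}-1}+\cdots$ and collecting the coefficient of $\lambda^m$ forces the constraint $k+k'=2m+4$ on the paired indices and reproduces precisely the stated convolution, including the sign $(-1)^k$. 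The delicate points are all genuinely twisted features: the half-integer powers $\lambda^{\frac{k}{2}-1}$ mean only same-parity pairs survive into integer powers of $\lambda$; the summation ranges (and the exceptional boundary coefficient $\tilde{P}^{(2)}_{\infty,r_\infty-3}$) must be tracked against the cutoff $1\le k\le 2r_\infty-2$; and one must check that the correction $G_\infty^{-1}\partial_\lambda G_\infty$ entering $\det L_\infty=\det(\hat{L}+G_\infty^{-1}\partial_\lambda G_\infty)$ does not alter these leading coefficients. Once this convolution identity is secured, splitting $\det\tilde{L}$ into its $t$-dependent head $\tilde{P}_2$ and its coordinate-dependent tail $\hat{P}_2=\tilde{P}_2-\sum_k H_{\infty,k}\lambda^k$ is immediate, completing the proposition.
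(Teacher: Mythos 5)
The paper does not actually prove this proposition: it is imported verbatim from \cite{Marchal2024} (hence the title ``Explicit formulation of \cite{Marchal2024}''), so there is no in-paper argument to compare against. Your reconstruction is nonetheless the natural one and, as far as I can check it against the surrounding material, it is essentially correct and follows the route that the cited reference takes: $\td{L}_{1,2}=\prod_j(\lambda-q_j)$ from the definition of the apparent singularities; $\td{L}_{1,1}$ as the Lagrange interpolant $-Q$ plus $(a\lambda+b)\prod_j(\lambda-q_j)$ with $a,b$ fixed by the two normalized top coefficients in \eqref{NormalizationInfty} (your dimension count is right, and your derivation gives $g_0=\frac12 t_{\infty,2r_\infty-4}+\frac12 t_{\infty,2r_\infty-2}\sum_j q_j$, which corrects what is evidently a typo $t_{\infty,r_\infty-2}$ in the statement); then \eqref{LInTermsOfTdL} for the oper entries, with $\Tr\td{L}=\td{P}_1$ forced by $\Tr\hat{L}=\Tr L_\infty+O(1)$ and polynomiality, the residues $-p_j$ at the $q_j$, and the top coefficients of $\det\td{L}=y_1y_2$ computed by convolving the two half-integer Laurent series. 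Your observation that $k+k'=2m+4$ forces same-parity pairs (and that the opposite-parity pairs cancel in pairs, as they must since $\det\td{L}$ is a polynomial) is exactly the right mechanism, and it is the same computation the paper itself redoes in Section 4.1.

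The one place where your sketch glosses over something real is the coefficient of $\lambda^{r_\infty-3}$, i.e.\ the exceptional boundary coefficient $\td{P}^{(2)}_{\infty,r_\infty-3}$. You attribute the entire block of $t$-determined coefficients of $L_{2,1}$ to $-\det\td{L}$, but the other two terms in \eqref{LInTermsOfTdL} are not negligible at that single order: $\partial_\lambda\td{L}_{1,1}$ has degree $r_\infty-3$ with leading coefficient $-\frac{r_\infty-2}{2}t_{\infty,2r_\infty-2}$, and the polynomial part of $\td{L}_{1,1}\,\partial_\lambda\td{L}_{1,2}/\td{L}_{1,2}$ also has degree $r_\infty-3$ with leading coefficient $-\frac{r_\infty-3}{2}t_{\infty,2r_\infty-2}$, so their difference contributes $-\frac12 t_{\infty,2r_\infty-2}$ to the $\lambda^{r_\infty-3}$ coefficient; likewise the $\frac{1}{4\lambda}$ term in the eigenvalue expansions of \autoref{Birkhoff} crosses with the leading $\lambda^{r_\infty-2}$ term of the other eigenvalue at exactly this order. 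All of these extra contributions are still purely $t$-dependent, so the structural claim (coefficients of $\lambda^k$ for $k\geq r_\infty-3$ are fixed by the Birkhoff invariants, the remaining $g$ coefficients are the unknowns $H_{\infty,k}$) survives, and the coefficients for $k\geq r_\infty-2$ do come solely from the determinant as you say; but the separate formula for $\td{P}^{(2)}_{\infty,r_\infty-3}$ cannot be justified by the convolution alone, and you would need to track these boundary terms explicitly to confirm (or correct) it. You flag the right region as delicate, but you locate the difficulty in the summation cutoffs rather than in these extra gauge contributions, which is where the actual bookkeeping lives.
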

Note that in the oper gauge one sees immediately the reason why this choice of coordinates is referred to as the apparent singularities, they appear as zeroes of the determinant of the horizontal sections $ \td{\Psi}$. This gauge is less natural from a geometric point of view since it adds an additional polar structure on the connection matrix, yet the main advantage of the oper gauge is the practical side since one could solve specifically in this gauge the zero curvature equation of the system. This is mainly due to the fact that all the information is stored in the last line of the connection matrix (in two entries instead of four). One of the main results of \cite{Marchal2024} is not just the explicit asymptotics near the ramified pole for the connection, but also the explicit Hamiltonian reformulation of the pair of coordinates leading to the Painlev\'{e} I hierarchy after the Hamiltonian reduction by the action of the projective linear group done through the action of the projective linear transformations on the connection. In particular, isomonodromic deformations are defined by a general deformation vector of the tangent space. 

\begin{definition}\label{DefGeneralDeformationsDefinition} Define the following general deformation operators.
\beq \label{GeneralDeformationsDefinition} \mathcal{L}_{\boldsymbol{\alpha}}= \sum_{k=1}^{2r_\infty-2} \alpha_{\infty,k} \partial_{t_{\infty,k}}\eeq
where the vector $\boldsymbol{\alpha}\in \mathbb{C}^{2r_{\infty}-2}=\mathbb{C}^{2g+4}$ is defined by
\beq \boldsymbol{\alpha}= \sum_{k=1}^{2r_\infty-2} \alpha_{\infty,k}\mathbf{e}_{k}.\eeq
\end{definition}
Deforming the base of times amounts to acting with this deformation vector on the horizontal sections, this provides an additional auxiliary matrix and the system is then given by 
\beq \label{LaxPairDef}
\left\{
\begin{array}{l}
\partial_\lambda \Psi (\lambda,\mathbf{t}) =L(\lambda) \Psi(\lambda,\mathbf{t})  \cr
\mathcal{L}_\alpha \Psi(\lambda,\mathbf{t}) =A_\alpha(\lambda) \Psi(\lambda,\mathbf{t})  \cr
\end{array}
\right.
\eeq
The important feature is that the auxiliary matrix admits a polar structure dominated by that of the connection matrix. Adding to that the asymptotics of the entries of the horizontal sections, one could extract an explicit expression for the entries of $A(\lambda)$. This then serves as the input of the zero curvature equation ensuring the compatibility of (\ref{LaxPairDef}), this equation is given by 
\beq
\partial_\lambda A_\alpha(\lambda) - \mathcal{L}_\alpha L(\lambda) + \left[L(\lambda) , A_\alpha(\lambda) \right] = 0.
\eeq
In particular, the dependence of the Lax pair $(L,A)$ on the set of coordinates enables one to extract the evolution of these coordinates relative to the deformation vector $\mathcal{L}_\alpha$. Before giving the explicit Hamiltonian structure attached to the deformations considered, it is worth mentioning that considering isomonodromic deformations and restricting the connection matrix and its auxiliary counter-part to the isomonodromic system of solutions of the system \eqref{LaxPairDef}, one ensures the flatness of the connection $1-$form 
\begin{align}
    \Upsilon := - L(\lambda) d \lambda - A_{\boldsymbol{\alpha}} (\lambda) d \mathbf{t} 
\end{align}
In other words one has 
\begin{align}
d \Upsilon + \commutator{\Upsilon}{\Upsilon} = 0    
\end{align}
due to the compatibility of the isomonodromic system, thus the name \textit{zero curvature equation}. The link with the isospectral approach requires also the explicit formulation of the auxiliary matrix given in the following proposition. 

\begin{proposition}[Auxiliary matrices Prop. 4.2 and 4.3 of \cite{Marchal2024}]  Entry $\left[A_{\boldsymbol{\alpha}}(\lambda)\right]_{1,2}$ is given by
\beq \label{ExpressionA12} \left[A_{\boldsymbol{\alpha}}(\lambda)\right]_{1,2}=\nu^{(\boldsymbol{\alpha})}_{\infty,-1}\lambda+\nu^{(\boldsymbol{\alpha})}_{\infty,0} + \sum_{j=1}^g \frac{\mu^{(\boldsymbol{\alpha})}_j}{\lambda-q_j}.\eeq

Moreover, coefficients $\left(\nu^{(\boldsymbol{\alpha})}_{\infty,k}\right)_{-1\leq k\leq r_\infty-3}$ are determined by 
\beq \label{RelationNuAlphaInfty} M_\infty\begin{pmatrix} \nu^{(\boldsymbol{\alpha})}_{\infty,-1}\\ \nu^{(\boldsymbol{\alpha})}_{\infty,0}\\  \vdots \\ \nu^{(\boldsymbol{\alpha})}_{\infty,r_\infty-3}\end{pmatrix}=\begin{pmatrix}\frac{2\alpha_{\infty,2r_\infty-3}}{(2r_\infty-3)}\\ \frac{2\alpha_{\infty,2r_\infty-5}}{(2r_\infty-5)}\\ \vdots \\ \frac{2\alpha_{\infty,1}}{1} \end{pmatrix}\eeq
where $M_\infty$ is a lower triangular Toeplitz matrix of size $(r_\infty-1)\times(r_\infty-1)$ independent of the deformation $\boldsymbol{\alpha}$:
\beq\label{MatrixMInfty} M_\infty=\begin{pmatrix}t_{\infty,2r_\infty-3}&0&\dots& &0\\
t_{\infty,2r_\infty-5}& t_{\infty,2r_\infty-3}&0& \ddots& \vdots\\
\vdots& \ddots& \ddots&&\vdots \\
t_{\infty,3}& & \ddots &\ddots&0 \\
t_{\infty,1}& t_{\infty,3} & \dots & &t_{\infty,2r_\infty-3}\end{pmatrix}.
\eeq
Coefficients $\left(\mu^{(\boldsymbol{\alpha})}_j\right)_{1\leq j\leq g}$ are determined by the linear system
\beq \label{RelationNuMuMatrixForm} V_\infty\begin{pmatrix}\mu^{(\boldsymbol{\alpha})}_1\\ \vdots\\ \vdots\\\mu^{(\boldsymbol{\alpha})}_g\end{pmatrix}= \begin{pmatrix}\nu^{(\boldsymbol{\alpha})}_{\infty,1}\\ \nu^{(\boldsymbol{\alpha})}_{\infty,2}\\\vdots \\ \nu^{(\boldsymbol{\alpha})}_{\infty,r_\infty-3}\end{pmatrix}\eeq
where $V_\infty$ is the $(r_{\infty}-3)\times (r_\infty-3)$ Vandermonde matrix 
\beq \label{DefVinfty}V_\infty=\begin{pmatrix}1&1 &\dots &\dots &1\\
q_1& q_2&\dots &\dots& q_{g}\\
\vdots & & & & \vdots\\
\vdots & & & & \vdots\\
q_1^{r_\infty-4}& q_2^{r_\infty-4} &\dots & \dots& q_{g}^{r_\infty-4}\end{pmatrix}\,,\, 
\eeq
While $\left[A_{\boldsymbol{\alpha}}(\lambda)\right]_{1,1}$ is given by
\beq \label{ExpressionA11} \left[A_{\boldsymbol{\alpha}}(\lambda)\right]_{1,1}=\sum_{i=0}^{r_\infty-1}c^{(\boldsymbol{\alpha})}_{\infty,i}\lambda^i+\sum_{j=1}^g\frac{\rho^{(\boldsymbol{\alpha})}_j}{\lambda-q_j}.\eeq
with 
\beq \forall\, j\in \llbracket 1,n\rrbracket \,:\, \rho^{(\boldsymbol{\alpha})}_j=-\mu^{(\boldsymbol{\alpha})}_j p_j\eeq
Coefficients $\left(c^{(\boldsymbol{\alpha})}_{\infty,k}\right)_{1\leq k\leq r_\infty-1}$  are determined by
\beq \label{calphaexpr} M_\infty \begin{pmatrix} c^{(\boldsymbol{\alpha})}_{\infty,r_\infty-1}\\ \vdots\\c^{(\boldsymbol{\alpha})}_{\infty,k}  \\\vdots \\ c^{(\boldsymbol{\alpha})}_{\infty, 1}\end{pmatrix}=\begin{pmatrix}\frac{\alpha_{\infty,2r_\infty-3}}{2r_\infty-3}t_{\infty,2r_\infty-2}-\frac{\alpha_{\infty,2r_\infty-2}}{2r_\infty-2}t_{\infty,2r_\infty-3}\\ \vdots\\ \underset{m=k}{\overset{r_\infty-1}{\sum}}\left(\frac{\alpha_{\infty,2k+2r_\infty-2m-3}}{2k+2r_\infty-2m-3}t_{\infty,2m}-\frac{\alpha_{2k+2r_\infty-2m-2}}{2k+2r_\infty-2m-2}t_{\infty,2m-1}\right)  \\ \vdots\\ \underset{m=1}{\overset{r_\infty-1}{\sum}}\left(\frac{\alpha_{\infty,2r_\infty-2m-1}}{2r_\infty-2m-1}t_{\infty,2m}-\frac{\alpha_{2r_\infty-2m}}{2r_\infty-2m}t_{\infty,2m-1}\right) \end{pmatrix}
\eeq
\end{proposition}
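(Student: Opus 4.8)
The plan is to determine $A_{\boldsymbol\alpha}$ entirely from its first row, to fix the pole structure by a local analysis at the two kinds of poles (the ramified pole $\lambda=\infty$ and the apparent singularities $q_j$), and then to read off the stated linear systems by matching the zero curvature equation order by order. The first observation is that the companion form of $L(\lambda)$ makes the second row of $A_{\boldsymbol\alpha}$ redundant: writing the $(1,1)$ and $(1,2)$ entries of $\partial_\lambda A_{\boldsymbol\alpha}-\mathcal{L}_{\boldsymbol\alpha}L+\commutator{L}{A_{\boldsymbol\alpha}}=0$ gives $[A_{\boldsymbol\alpha}]_{2,1}=[A_{\boldsymbol\alpha}]_{1,2}L_{2,1}-\partial_\lambda[A_{\boldsymbol\alpha}]_{1,1}$ and $[A_{\boldsymbol\alpha}]_{2,2}=[A_{\boldsymbol\alpha}]_{1,1}+[A_{\boldsymbol\alpha}]_{1,2}L_{2,2}-\partial_\lambda[A_{\boldsymbol\alpha}]_{1,2}$, so the second row is a differential consequence of the first. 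It therefore suffices to pin down $[A_{\boldsymbol\alpha}]_{1,1}$ and $[A_{\boldsymbol\alpha}]_{1,2}$, which is exactly what the proposition lists.

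Next I would carry out the local analysis at infinity. Using the Birkhoff form $\Psi_\infty=\Psi_{\infty}^{(\text{reg})}e^{T(z)}$ and the fact that $\mathcal{L}_{\boldsymbol\alpha}$ acts on the times $t_{\infty,k}$ sitting inside $T$, the principal part of the auxiliary matrix in the diagonal gauge is $\mathcal{L}_{\boldsymbol\alpha}T$ conjugated by $\Psi_{\infty}^{(\text{reg})}=I+O(z)$, which adds nothing new at leading order. For the two branches $\psi_1,\psi_2$ this is the WKB statement $[A_{\boldsymbol\alpha}]_{1,1}+[A_{\boldsymbol\alpha}]_{1,2}\,y_i=\mathcal{L}_{\boldsymbol\alpha}\!\int y_i\,d\lambda+O(1)$. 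Subtracting the two branches and using $y_1-y_2=-\sum_{k\text{ odd}}t_{\infty,k}\lambda^{k/2-1}$ from \autoref{Specinv1}, together with $\mathcal{L}_{\boldsymbol\alpha}\!\int(y_1-y_2)\,d\lambda=-2\sum_{k\text{ odd}}\tfrac{\alpha_{\infty,k}}{k}\lambda^{k/2}$, yields $[A_{\boldsymbol\alpha}]_{1,2}\sum_{k\text{ odd}}t_{\infty,k}\lambda^{k/2-1}=2\sum_{k\text{ odd}}\tfrac{\alpha_{\infty,k}}{k}\lambda^{k/2}$. Comparing half-integer powers forces $[A_{\boldsymbol\alpha}]_{1,2}$ to be a degree-one polynomial $\nu_{\infty,-1}\lambda+\nu_{\infty,0}$ plus lower order, and produces the convolution $\sum_{m}\nu_{\infty,m}t_{\infty,k+2m+2}=\tfrac{2\alpha_{\infty,k}}{k}$; written for $k=2r_\infty-3,\dots,1$ and truncated because $t$ stops at $t_{\infty,2r_\infty-2}$, this is precisely the lower triangular Toeplitz system $M_\infty$. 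Adding the two branches instead isolates the integer powers and fixes the degree-$(r_\infty-1)$ polynomial $\sum_i c_{\infty,i}\lambda^i$ of $[A_{\boldsymbol\alpha}]_{1,1}$; after substituting the already solved $\nu$'s, reorganizing this even-power identity into the same Toeplitz form reproduces the $M_\infty$ system for the $c$'s with the bilinear right-hand side in the $\alpha$'s and $t$'s.

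Then I would analyze the apparent singularities. Because the oper gauge matrix $G(\lambda)$ carries the factor $\prod_j(\lambda-q_j)$, the operator $\mathcal{L}_{\boldsymbol\alpha}$ acting on the $q_j$ produces simple poles in $A_{\boldsymbol\alpha}$ at each $q_j$, which fixes the ansatz $[A_{\boldsymbol\alpha}]_{1,2}=\nu_{\infty,-1}\lambda+\nu_{\infty,0}+\sum_j\tfrac{\mu_j}{\lambda-q_j}$ and $[A_{\boldsymbol\alpha}]_{1,1}=\sum_i c_{\infty,i}\lambda^i+\sum_j\tfrac{\rho_j}{\lambda-q_j}$. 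Re-expanding the partial-fraction tail of $[A_{\boldsymbol\alpha}]_{1,2}$ at infinity gives $\nu_{\infty,m}=\sum_j\mu_j q_j^{m-1}$ for $m\geq1$, i.e. the Vandermonde system $V_\infty$, which I invert to recover the $\mu_j$. The relation $\rho_j=-\mu_j p_j$ I would obtain from the apparent-singularity condition: the double-pole contributions of the $(1,1)$ (equivalently $(2,1)$) component of the zero curvature equation at $q_j$ must cancel so that the deformed connection keeps $q_j$ apparent and creates no logarithmic monodromy, and this cancellation ties the principal parts of $[A_{\boldsymbol\alpha}]_{1,1}$ and $[A_{\boldsymbol\alpha}]_{1,2}$ through the spectral-curve value $p_j=[\td L(q_j)]_{1,1}$.

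The hard part will be the fractional-power bookkeeping forced by the ramified cover $z_\infty=\lambda^{-1/2}$: one must separate consistently the integer powers feeding $[A_{\boldsymbol\alpha}]_{1,1}$ from the half-integer powers feeding $[A_{\boldsymbol\alpha}]_{1,2}$, and verify that the convolution of the $\nu$-series against the odd-time part $y_1-y_2$ closes off into the \emph{finite} Toeplitz matrix $M_\infty$ rather than an infinite system, the truncation resting on the polynomiality of $\Tr\hat L$ noted in the preceding remark. Checking the $c$-system and its bilinear right-hand side, where the already determined $\nu$'s re-enter and must be reorganized into a second $M_\infty$ inversion, is the most calculation-heavy verification, and is the step where the twisted case genuinely departs from the generic one.
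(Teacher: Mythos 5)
Note first that the paper itself offers no proof of this proposition: it is imported verbatim from Prop.~4.2--4.3 of \cite{Marchal2024}, so there is no in-paper argument to compare against. Your strategy --- reduce to the first row of $A_{\boldsymbol{\alpha}}$ via the companion structure, fix the pole ansatz from the gauge factor $\prod_j(\lambda-q_j)$ in $G$, match the singular expansion at $\infty$ against $\mathcal{L}_{\boldsymbol{\alpha}}T$ separating half-integer powers (the $\nu$/Toeplitz system) from integer powers (the $c$ system), re-expand the partial fractions to get the Vandermonde relation, and cancel double poles at $q_j$ to get $\rho^{(\boldsymbol{\alpha})}_j$ --- is exactly the derivation used in the cited source and its generic-case predecessor, and the truncation of the convolution to the finite Toeplitz matrix works for the reason you give (the times stop at $t_{\infty,2r_\infty-2}$).

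One concrete point needs fixing. You take the second row from the paper's displayed zero-curvature equation $\partial_\lambda A_{\boldsymbol{\alpha}}-\mathcal{L}_{\boldsymbol{\alpha}}L+\commutator{L}{A_{\boldsymbol{\alpha}}}=0$, obtaining $[A_{\boldsymbol{\alpha}}]_{2,1}=[A_{\boldsymbol{\alpha}}]_{1,2}L_{2,1}-\partial_\lambda[A_{\boldsymbol{\alpha}}]_{1,1}$. Cross-differentiating the Lax pair actually gives $\partial_\lambda A_{\boldsymbol{\alpha}}-\mathcal{L}_{\boldsymbol{\alpha}}L-\commutator{L}{A_{\boldsymbol{\alpha}}}=0$ with the standard commutator, hence $[A_{\boldsymbol{\alpha}}]_{2,1}=[A_{\boldsymbol{\alpha}}]_{1,2}L_{2,1}+\partial_\lambda[A_{\boldsymbol{\alpha}}]_{1,1}$; since $\Res_{q_j}L_{2,1}=-p_j$, the cancellation of the coefficient of $(\lambda-q_j)^{-2}$ then reads $-\mu^{(\boldsymbol{\alpha})}_jp_j-\rho^{(\boldsymbol{\alpha})}_j=0$, i.e.\ $\rho^{(\boldsymbol{\alpha})}_j=-\mu^{(\boldsymbol{\alpha})}_jp_j$ as claimed, whereas your sign convention would yield $\rho^{(\boldsymbol{\alpha})}_j=+\mu^{(\boldsymbol{\alpha})}_jp_j$. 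The cleanest way to settle the sign is the direct gauge computation $[A_{\boldsymbol{\alpha}}]_{1,1}=\td{A}_{1,1}-\td{A}_{1,2}\,\td{L}_{1,1}/\td{L}_{1,2}$, whose residue at $q_j$ is $-\mu^{(\boldsymbol{\alpha})}_j\,\td{L}_{1,1}(q_j)=-\mu^{(\boldsymbol{\alpha})}_jp_j$; this also supplies the a priori bound (at most simple poles at the $q_j$, degree-one growth of $[A_{\boldsymbol{\alpha}}]_{1,2}$ at $\infty$) that your half-integer power matching implicitly relies on.
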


In \cite{Marchal2024}, explicit expressions for these Hamiltonians are given which we record in the following theorem. 

\begin{theorem}[Hamiltonian evolution of the Darboux coordinates Theorem 5.1 of \cite{Marchal2024}] The evolution of the coordinates ensuring the compatibility of the system is Hamiltonian 
\beq \forall\, j\in \llbracket 1,g\rrbracket \,:\, \mathcal{L}_{\boldsymbol{\alpha}}[q_j]=\frac{\partial \text{Ham}^{(\boldsymbol{\alpha})}(\mathbf{q},\mathbf{p})}{\partial p_j}\, \text{ and }\, \mathcal{L}_{\boldsymbol{\alpha}}[p_j]=-\frac{\partial \text{Ham}^{(\boldsymbol{\alpha})}(\mathbf{q},\mathbf{p})}{\partial q_j}.\eeq
with the Hamiltonian of the system given by 
\beq \label{DefHam}\text{Ham}^{(\boldsymbol{\alpha})}(\mathbf{q},\mathbf{p}) =\sum_{k=0}^{r_\infty-4} \nu_{\infty,k+1}^{\boldsymbol{(\alpha)}}H_{\infty,k}-\hbar \sum_{j=1}^g\sum_{k=1}^{r_\infty-1}c^{(\boldsymbol{\alpha})}_{\infty,k}q_j^{k}-\hbar \nu^{(\boldsymbol{\alpha})}_{\infty,0}\sum_{j=1}^{g} p_j-\hbar \nu^{(\boldsymbol{\alpha})}_{\infty,-1}\sum_{j=1}^g q_jp_j,
\eeq
At this stage, the isomonodromic oper coefficients are explicitly determined via 
\beq \label{DefCi2}
(V_\infty)^{t}\begin{pmatrix}H_{\infty,0}\\ \vdots\\ H_{\infty,r_\infty-4} \end{pmatrix}=\begin{pmatrix} p_1^2- \td{P}_1(q_1)p_1 +\td{P}_2(q_1)+\hbar \underset{i\neq 1}{\sum}\frac{p_i-p_1}{q_1-q_i}\\
\vdots\\
p_g^2- \td{P}_1(q_g)p_g+\td{P}_2(q_g)+\hbar \underset{i\neq g}{\sum}\frac{p_i-p_g}{q_g-q_i}
\end{pmatrix}
\eeq
\end{theorem}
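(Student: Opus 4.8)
The plan is to read off the entire dynamics from the zero-curvature equation
\beq
\partial_\lambda A_{\boldsymbol{\alpha}}(\lambda) - \mathcal{L}_{\boldsymbol{\alpha}} L(\lambda) + \commutator{L(\lambda)}{A_{\boldsymbol{\alpha}}(\lambda)} = 0,
\eeq
written entry by entry using the companion shape of $L$ in \eqref{CompanionMatrix} and the explicit polar decompositions of $[A_{\boldsymbol{\alpha}}]_{1,1}$ and $[A_{\boldsymbol{\alpha}}]_{1,2}$ recorded in \eqref{ExpressionA11}--\eqref{ExpressionA12}. Since $L_{1,1}=0$ and $L_{1,2}=1$, the $(1,1)$ and $(1,2)$ entries read $\partial_\lambda [A_{\boldsymbol{\alpha}}]_{1,1} + [A_{\boldsymbol{\alpha}}]_{2,1} - [A_{\boldsymbol{\alpha}}]_{1,2}L_{2,1}=0$ and $\partial_\lambda [A_{\boldsymbol{\alpha}}]_{1,2} + [A_{\boldsymbol{\alpha}}]_{2,2} - [A_{\boldsymbol{\alpha}}]_{1,1} - [A_{\boldsymbol{\alpha}}]_{1,2} L_{2,2}=0$. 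I would use these to express the two undisclosed second-row entries $[A_{\boldsymbol{\alpha}}]_{2,1}$ and $[A_{\boldsymbol{\alpha}}]_{2,2}$ purely in terms of the explicitly known first row of $A_{\boldsymbol{\alpha}}$ and of $L$, so that only the $(2,1)$ entry carries genuinely new information.

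The evolution of the Darboux coordinates then comes from localizing the $(2,1)$ entry at $\lambda=q_j$. The key point is that $\mathcal{L}_{\boldsymbol{\alpha}} L_{2,1}$ sees the motion of the poles: since $L_{2,1}=-\hat{P}_2(\lambda)-\sum_i p_i/(\lambda-q_i)$ carries its $(\mathbf{q},\mathbf{p})$-dependence only in the simple poles, one has near $q_j$
\beq
\mathcal{L}_{\boldsymbol{\alpha}} L_{2,1} = \frac{p_j\, \mathcal{L}_{\boldsymbol{\alpha}}[q_j]}{(\lambda-q_j)^2} + \frac{\mathcal{L}_{\boldsymbol{\alpha}}[p_j]}{\lambda-q_j} + O(1).
\eeq
Substituting the expressions for $[A_{\boldsymbol{\alpha}}]_{2,1},[A_{\boldsymbol{\alpha}}]_{2,2}$ obtained above and matching the double pole of the $(2,1)$ entry produces $\mathcal{L}_{\boldsymbol{\alpha}}[q_j]$ as an explicit expression in the residues $\mu^{(\boldsymbol{\alpha})}_j$ and $\rho^{(\boldsymbol{\alpha})}_j=-\mu^{(\boldsymbol{\alpha})}_j p_j$, hence linear in $\boldsymbol{\alpha}$ through \eqref{RelationNuMuMatrixForm}; matching the simple pole yields $\mathcal{L}_{\boldsymbol{\alpha}}[p_j]$. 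The $(2,2)$ entry, which carries $\mathcal{L}_{\boldsymbol{\alpha}} L_{2,2}$ with double pole $\hbar\,\mathcal{L}_{\boldsymbol{\alpha}}[q_j]$, provides an independent confirmation of the formula for $\mathcal{L}_{\boldsymbol{\alpha}}[q_j]$.

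Next I would pin down the oper coefficients $H_{\infty,k}$, which are precisely the data making each $q_j$ an apparent singularity. Expanding the scalar equation $\left(\left[\partial_\lambda\right]^2-L_{2,2}\partial_\lambda-L_{2,1}\right)\psi=0$ at $q_j$ gives indicial exponents $0$ and $1+\hbar$, and the absence of the resonant logarithmic term imposes one scalar relation per $j$. More efficiently, I would impose that $(q_j,p_j)$ lies on the spectral curve, $p_j^2=\Tr \td{L}(q_j)\,p_j-\det\td{L}(q_j)$, and match the regular part of $L_{2,1}$ at $q_j$ computed from \eqref{LInTermsOfTdL} against the oper expression $-\hat{P}_2(q_j)-\sum_{i\neq j}p_i/(q_j-q_i)$. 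Eliminating $\det\td{L}(q_j)$ produces $\sum_{k=0}^{r_\infty-4} H_{\infty,k}\,q_j^{\,k}=p_j^2-\td{P}_1(q_j)p_j+\td{P}_2(q_j)+\hbar\sum_{i\neq j}(p_i-p_j)/(q_j-q_i)$, which is exactly \eqref{DefCi2} in Vandermonde form; the system is invertible because the $q_j$ are distinct.

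It remains to show that the dynamics just computed is generated by the explicit Hamiltonian \eqref{DefHam}. I would verify $\mathcal{L}_{\boldsymbol{\alpha}}[q_j]=\partial_{p_j}\text{Ham}^{(\boldsymbol{\alpha})}$ and $\mathcal{L}_{\boldsymbol{\alpha}}[p_j]=-\partial_{q_j}\text{Ham}^{(\boldsymbol{\alpha})}$ by direct differentiation, substituting $\partial H_{\infty,k}/\partial p_j$ and $\partial H_{\infty,k}/\partial q_j$ obtained by differentiating \eqref{DefCi2}, together with the linear systems \eqref{RelationNuAlphaInfty}, \eqref{RelationNuMuMatrixForm} and \eqref{calphaexpr} fixing $\nu^{(\boldsymbol{\alpha})}$, $\mu^{(\boldsymbol{\alpha})}$ and $c^{(\boldsymbol{\alpha})}$. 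I expect this last step to be the main obstacle: the coordinates enter $\text{Ham}^{(\boldsymbol{\alpha})}$ through three coupled channels---explicitly, implicitly through the $H_{\infty,k}$, and through the $\mathbf{q}$-dependent Vandermonde $V_\infty$---and one must show that all the cross terms $\hbar\sum_{i\neq j}(p_i-p_j)/(q_j-q_i)$ reorganize precisely into the pole-matching formula for $\mathcal{L}_{\boldsymbol{\alpha}}[p_j]$. I would tame this using the lower-triangular Toeplitz structure of $M_\infty$ in \eqref{MatrixMInfty}, which reduces the matrix algebra to a commuting, recursively solvable subalgebra, together with the symmetry of the spectral residues, and would check closedness of the associated two-form as an invariant confirmation that the evolution is genuinely Hamiltonian.
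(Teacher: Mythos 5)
You should first be aware that the paper you are working from does not actually prove this statement: it is imported verbatim as Theorem 5.1 of \cite{Marchal2024}, and the surrounding text explicitly says that the proofs are omitted and refers the reader to that work. So there is no in-paper proof to compare against; what can be assessed is whether your reconstruction is the standard argument and whether it would go through. In outline it is and it would: reading the $(1,1)$ and $(1,2)$ entries of the zero-curvature equation to eliminate the second row of $A_{\boldsymbol{\alpha}}$, extracting $\mathcal{L}_{\boldsymbol{\alpha}}[q_j]$ and $\mathcal{L}_{\boldsymbol{\alpha}}[p_j]$ from the double and simple poles at $\lambda=q_j$ of the remaining entries, and obtaining \eqref{DefCi2} either from the no-logarithm condition at the apparent singularities or from matching the finite part of $L_{2,1}$ at $q_j$ against \eqref{LInTermsOfTdL} together with $\det(p_j-\td{L}(q_j))=0$ is exactly the mechanism of \cite{MarchalOrantinAlameddine2022,Marchal2024}. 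Your second route to \eqref{DefCi2} checks out: since $\td{L}_{1,2}(q_j)=0$ one gets $\det\td{L}(q_j)=p_j\bigl(\td{P}_1(q_j)-p_j\bigr)$, and the finite part of $-\td{L}_{1,1}\,\partial_\lambda\td{L}_{1,2}/\td{L}_{1,2}$ at $q_j$ is what produces the $\hbar\sum_{i\neq j}(p_i-p_j)/(q_j-q_i)$ terms.

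Two caveats. First, a sign slip: with $L_{2,1}=\cdots-\sum_i p_i/(\lambda-q_i)$ the local expansion is $\mathcal{L}_{\boldsymbol{\alpha}}L_{2,1}=-p_j\,\mathcal{L}_{\boldsymbol{\alpha}}[q_j]\,(\lambda-q_j)^{-2}-\mathcal{L}_{\boldsymbol{\alpha}}[p_j]\,(\lambda-q_j)^{-1}+O(1)$, opposite in sign to what you wrote; this is harmless for the strategy but will matter when you actually match residues. Second, and more substantively, the step that makes the theorem a theorem --- verifying that the flow so obtained is generated by the specific combination \eqref{DefHam}, including the terms in $c^{(\boldsymbol{\alpha})}_{\infty,k}$, $\nu^{(\boldsymbol{\alpha})}_{\infty,0}$ and $\nu^{(\boldsymbol{\alpha})}_{\infty,-1}$ --- is only announced, and you yourself flag it as the main obstacle. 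Note also that checking closedness of the associated two-form would only establish that the flow is locally Hamiltonian for \emph{some} function; it cannot identify \eqref{DefHam} as that function, so it is not a substitute for the direct differentiation of \eqref{DefCi2} through the $\mathbf{q}$-dependent Vandermonde matrix combined with the linear systems \eqref{RelationNuAlphaInfty}, \eqref{RelationNuMuMatrixForm} and \eqref{calphaexpr}. As a proposal the architecture is right and matches the cited source; as a proof it is incomplete precisely at that final verification.
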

Let us draw some conclusions at the end of this section corresponding to the explicit construction of the twisted Hamiltonian. First of all, the general Hamiltonian is a linear combination of the isomonodromic oper coefficients. Note that the coefficients $\left(H_{\infty,k}\right)_{0\leq k\leq r_\infty-4}$ do not depend on the isomonodromic deformations and correspond to the unknown coefficients of $L_{2,1}(\lambda)$. Furthermore, additional terms appear as linear combinations of the coordinates. These terms and the additional fact that the dimension of the tangent space is higher than the genus $g$ of the spectral curve (or dimension of the symplectic manifold $\hat{\mathcal{M}}_{\infty,r_\infty,\mathbf{t}}$) motivates a Hamiltonian reduction in the twisted setting. This is the main objective of the next section in which the explicit twisted Hamiltonian is reduced to the Hamiltonian governing the Painlev\'{e} I hierarchy. 

\subsection{Reduction of the symplectic structure}
The symplectic reduction consists in tracking a group action (the projective linear group $PGL_2 (\mathbb{C})$) on the connection, it is well known that the reduction of the connection eliminating the trace (from $\mathfrak{gl}_2$ to $\mathfrak{sl}_2$) does not affect the symplectic structure upon taking deformations of the base of times. Furthermore, the symplectic structure is also an invariant of the action of the M\"{o}bius transformations. These are the main two ingredients that one needs to consider in the rank $2$ case to decompose the space of deformations of dimension $2g +4$ into two disjoint subspaces:
\begin{itemize}
    \item $\mathcal{T}_{trivial}$ of dimension $g+4$ consisting of deformation directions through which there is no evolution of the coordinates. 
    \item $\mathcal{T}_{iso}$ of dimension $g$ consisting of deformation directions along which the full symplectic structure is encoded. 
\end{itemize}
This decomposition allows one to obtain a general twisted Hamiltonian governing the elements of the Painlev\'{e} I hierarchy. In \cite{Marchal2024}, the authors gave an explicit detailed formulation of this decomposition reducing the Hamiltonian to a Liouville-integrable Hamiltonian system. In this article, an additional result is added to this formulation, the symplectic reduction of the \textit{\textbf{extended fundamental symplectic $2-$form}} attached to the symplectic structure \autoref{Sym2form}. This way, the reduction of the symplectic structure is complete. \\

The starting point of the reduction is the following change of basis of the tangent space to the base

\begin{definition}[Change of deformation basis, def 7.1 of \cite{Marchal2024}]\label{TrivialVectors} Define the following vectors of $\mathbb{C}^{2r_\infty-2}$ and their corresponding deformations.
\begin{align}
 \mathbf{w}_k=&\mathbf{e}_{2k} \,\,,\,\, \forall \, k\in \llbracket 1,r_\infty-1\rrbracket\cr
 \mathbf{u}_{k}=&\frac{1}{2}\sum_{m=1}^{r_\infty-k-2} (2m-1)t_{\infty,2m+1+2k}\mathbf{e}_{2m-1} +\frac{1}{2}\sum_{s=1}^{r_\infty-k-2}2s\, t_{\infty,2s+2k+2}\mathbf{e}_{2s}\cr
=&\frac{1}{2}\sum_{r=1}^{2r_\infty-2k-4} r\,t_{\infty,r+2k+2} \mathbf{e}_r  \,\,,\,\,  \forall \, k\in \llbracket -1,r_\infty-3\rrbracket
\end{align}
and denote:
\begin{align}
 \mathcal{U}_{\text{trivial}}=&\text{Span}\left\{\mathbf{w}_1,\dots,\mathbf{w}_{r_\infty-1},\mathbf{u}_{-1},\mathbf{u}_0\right\}\cr
\mathcal{U}_{\text{iso}}=&\text{Span}\left\{\mathbf{u}_1,\dots,\mathbf{u}_{r_\infty-3}\right\}
\end{align}
\end{definition}
In particular, one obtains the following evolutions relative to these vectors 
\begin{align}
 \mathcal{L}_{\mathbf{w}_k}[q_j]=&0 \,\, ,\,\, \forall\, k\in \llbracket 1,r_\infty-1\rrbracket \cr
\mathcal{L}_{\mathbf{w}_k}[p_j]=&-\frac{\hbar}{2} q_j^{k-1}\,\, ,\,\, \forall\, k\in \llbracket 1,r_\infty-1\rrbracket \cr
\mathcal{L}_{\mathbf{u}_{-1}}[q_j]=&-\hbar q_j\cr
\mathcal{L}_{\mathbf{u}_{-1}}[p_j]=&\hbar p_j\cr
\mathcal{L}_{\mathbf{u}_{0}}[q_j]=&-\hbar\cr
\mathcal{L}_{\mathbf{u}_{0}}[p_j]=&0
\end{align}
It turns out that despite the fact that some of these evolutions are not directly trivial, one still could define a symplectic ``shift'' of the set of coordinates in order to trivialize them completely. Before that, the decomposition of the tangent space motivates the choice of the $g+4$ trivial times components of $\mathcal{T}_{trivial}$ and complement them with $g$ non-trivial components of $\mathcal{T}_{iso}$.

\begin{definition}[Deformation times def 7.2 of \cite{Marchal2024}]\label{Times} The set of ``trivial times" is composed of :
\bea T_{\infty,k}&=&t_{\infty,2k} \,\, ,\,\, \forall\, k\in \llbracket 1,r_\infty-1\rrbracket\cr
T_2&=&\left(\frac{1}{2}t_{\infty,2r_\infty-3}\right)^{\frac{2}{2r_\infty-3}}\cr
T_1&=&\frac{t_{\infty,2r_\infty-5}}{2r_\infty-5} \left(\frac{1}{2}t_{\infty,2r_\infty-3}\right)^{-\frac{2r_\infty-5}{2r_\infty-3}}
\eea
Define also the $g=r_\infty-3$ ``isomonodromic'' times $\left(\tau_{k}\right)_{1\leq k\leq g}$,for all $k\in \llbracket 1, g\rrbracket$, by: 
\small{
\begin{align}
\tau_k=&\sum_{i=0}^{k-1}\frac{(-1)^i\left(\underset{s=1}{\overset{i}{\prod}} (2r_\infty-2k+2s-7)\right)  \left(\frac{1}{2}t_{\infty,2r_\infty-5}\right)^i \left(\frac{1}{2}t_{\infty,2r_\infty-3}\right)^{-\frac{(2r_\infty-3)i+2r_\infty-5-2k}{2r_\infty-3}} \frac{1}{2}t_{\infty, 2r_\infty-5-2k+2i} }{i!(2r_\infty-5)^i}\cr
&+ \frac{(-1)^{k}\left(\underset{s=1}{\overset{k}{\prod}}(2r_\infty-2k+2s-7)\right) \left(\frac{1}{2}t_{\infty, 2r_\infty-5}\right)^{k+1} \left(\frac{1}{2}t_{\infty,2r_\infty-3}\right)^{-\frac{(k+1)(2r_\infty-5)}{2r_\infty-3} }}{(k+1)(k-1)!(2r_\infty-5)^{k}} \nonumber
\end{align}}
\end{definition}
In particular, the definition of the trivial times ensures the following property 
\bea \mathcal{L}_{\mathbf{w}_k}[T_{\infty,j}]&=&\hbar\delta_{j,k}\,\,,\,\, \mathcal{L}_{\mathbf{u}_{-1}}[T_{\infty,j}]=\hbar j t_{\infty,2j}\,\,,\,\, \mathcal{L}_{\mathbf{u}_0}[T_{\infty,j}]=\hbar jt_{\infty,2j+2} \,\,,\,\,\forall\, j\in \llbracket 1, r_\infty-1\rrbracket\cr
\mathcal{L}_{\mathbf{w}_k}[T_2]&=&0 \,\,,\,\, \mathcal{L}_{\mathbf{u}_{-1}}[T_2]=\hbar T_2\,\,,\,\,\mathcal{L}_{\mathbf{u}_{0}}[T_2]=0\cr
\mathcal{L}_{\mathbf{w}_k}[T_1]&=&0 \,\,,\,\, \mathcal{L}_{\mathbf{u}_{-1}}[T_1]=0\,\,,\,\,\mathcal{L}_{\mathbf{u}_{0}}[T_1]=\hbar T_2
\eea
This in turn allows one to define the ``shifted'' coordinates in which one trivializes the evolutions with respect to the set of trivial deformations.

\begin{definition}\label{ShiftDarbouxCoordinates} The shifted coordinates $(\check{q}_j,\check{p}_j)_{1\leq j\leq g}$ are defined through the symplectic change of coordinates
\bea \check{q}_j&=&T_2 q_j+T_1\cr
\check{p}_j&=&T_2^{-1}\left(p_j-\frac{1}{2}\td{P}_1(q_j)\right) \,,\,\, \forall \, j\in \llbracket 1,g\rrbracket
\eea
In particular, using the decomposition of the tangent space, these coordinates satisfy 
\bea  \mathcal{L}_{\mathbf{w}_k}[\check{q}_j]&=&\mathcal{L}_{\mathbf{w}_k}[\check{p}_j]=0 \,\, ,\,\, \forall\, k\in \llbracket 1,r_\infty-1\rrbracket \cr
\mathcal{L}_{\mathbf{u}_{-1}}[\check{q}_j]&=&\mathcal{L}_{\mathbf{u}_{-1}}[\check{p}_j]=0\cr
\mathcal{L}_{\mathbf{u}_{0}}[\check{q}_j]&=&\mathcal{L}_{\mathbf{u}_{0}}[\check{p}_j]=0
\eea
Thus, justifying the term ``trivial'' deformations. 
\end{definition}
This definition enables one to perform a reduction of the symplectic structure manifested by the extended fundamental symplectic $2-$form, in particular, one has the following theorem.

\begin{theorem}[Reduction of the symplectic structure]\label{Sym2form} The \textit{\textbf{extended fundamental symplectic $2-$form}} characterizing the symplectic structure is defined by 
\begin{align}
 \Omega := \sum_{j=1}^g d q_j \wedge dp_j - \sum_{k=1}^{2 r_\infty-2} dt_{\infty,k} \wedge d\text{Ham}^{(\mathbf{e}_k)}    
\end{align}
In particular, the symplectic reduction is manifested by the reduction of $\Omega$ to 
\begin{align}
    \Omega=\sum_{j=1}^g d\check{q}_j \wedge d\check{p}_j- \sum_{\tau\in \mathcal{T}_{\text{iso}}}^g d\tau \wedge d \text{Ham}^{(\boldsymbol{\alpha}_{\tau})}.
\end{align}
where the vector $\boldsymbol{\alpha}_\tau$ is the vector corresponding to the tangent space of deformation $\partial_{\tau}$. 

\end{theorem}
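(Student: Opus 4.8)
The plan is to realise $\Omega$ as $\omega_{\mathrm{ph}}+d\eta$ for a single intrinsic ``Hamiltonian one-form'' $\eta$ on the base, and then to let the coordinate shift of \autoref{ShiftDarbouxCoordinates} absorb exactly the trivial directions. First I would set $\omega_{\mathrm{ph}}:=\sum_{j=1}^{g}dq_j\wedge dp_j$ and define $\eta$ on the total space by $\eta(\boldsymbol{\alpha})=\mathrm{Ham}^{(\boldsymbol{\alpha})}$ for every base tangent vector $\boldsymbol{\alpha}$. Because \eqref{DefHam} together with the linear systems \eqref{RelationNuAlphaInfty} and \eqref{calphaexpr} make $\boldsymbol{\alpha}\mapsto\mathrm{Ham}^{(\boldsymbol{\alpha})}$ linear, $\eta$ is well defined independently of any basis of the base, $d\eta=-\sum_{k}dt_{\infty,k}\wedge d\mathrm{Ham}^{(\mathbf{e}_k)}$, and hence $\Omega=\omega_{\mathrm{ph}}+d\eta$. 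Expanding $\eta$ in the coframe dual to the new times of \autoref{Times} splits it as $\eta=\eta_{\mathrm{triv}}+\eta_{\mathrm{iso}}$ with $\eta_{\mathrm{triv}}=\sum_{s\in\mathcal{T}_{\mathrm{triv}}}\mathrm{Ham}^{(\boldsymbol{\alpha}_s)}\,ds$ and $\eta_{\mathrm{iso}}=\sum_{\tau\in\mathcal{T}_{\mathrm{iso}}}\mathrm{Ham}^{(\boldsymbol{\alpha}_\tau)}\,d\tau$, using that $\partial_{T_{\infty,l}}=\mathbf{w}_l$ (since $T_{\infty,l}=t_{\infty,2l}$ while $T_1,T_2,\tau$ depend on odd times only) and that $\partial_{T_1},\partial_{T_2}\in\mathcal{U}_{\mathrm{triv}}$, $\partial_\tau\in\mathcal{U}_{\mathrm{iso}}$, which I would read off from \autoref{TrivialVectors} and \autoref{Times}.

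Next I would use that $(q,p)\mapsto(\check q,\check p)$ is fibrewise symplectic, so $\omega_{\mathrm{ph}}=\sum_{j}d\check q_j\wedge d\check p_j+d\theta$ with the explicit primitive $\theta=\sum_{j}\check p_j\,d\check q_j-\sum_{j}p_j\,dq_j$. A direct expansion through \autoref{ShiftDarbouxCoordinates} gives $\theta=-\tfrac12\sum_{j}\tilde P_1(q_j)\,dq_j+T_2^{-1}A\,dT_2+T_2^{-1}B\,dT_1$, where $A=\sum_{j}q_j\bigl(p_j-\tfrac12\tilde P_1(q_j)\bigr)$ and $B=\sum_{j}\bigl(p_j-\tfrac12\tilde P_1(q_j)\bigr)$; in particular $\theta$ involves, as differentials, only $dq_j,\,dT_1,\,dT_2$, and neither $dT_{\infty,l}$ nor $d\tau$. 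Substituting this into $\Omega=\omega_{\mathrm{ph}}+d\eta$ and using $d\eta_{\mathrm{iso}}=-\sum_{\tau}d\tau\wedge d\mathrm{Ham}^{(\boldsymbol{\alpha}_\tau)}$ yields $\Omega=\sum_{j}d\check q_j\wedge d\check p_j-\sum_{\tau}d\tau\wedge d\mathrm{Ham}^{(\boldsymbol{\alpha}_\tau)}+d\bigl(\theta+\eta_{\mathrm{triv}}\bigr)$, so the theorem becomes equivalent to the single identity $d(\theta+\eta_{\mathrm{triv}})=0$.

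The heart of the proof is then the closedness of $\theta+\eta_{\mathrm{triv}}$, and this is precisely where the shift of \autoref{ShiftDarbouxCoordinates} does its work. The engine is the trivial invariance $\mathcal{L}_{\boldsymbol{\alpha}_s}[\check q_j]=\mathcal{L}_{\boldsymbol{\alpha}_s}[\check p_j]=0$ for $s$ trivial: combined with $\partial_{p_j}\mathrm{Ham}^{(\boldsymbol{\alpha}_s)}=\mathcal{L}_{\boldsymbol{\alpha}_s}[q_j]$ and $\partial_{q_j}\mathrm{Ham}^{(\boldsymbol{\alpha}_s)}=-\mathcal{L}_{\boldsymbol{\alpha}_s}[p_j]$, these conditions force the phase-dependent part of $\mathrm{Ham}^{(\boldsymbol{\alpha}_{T_1})}$ to equal that of $-T_2^{-1}B$ and of $\mathrm{Ham}^{(\boldsymbol{\alpha}_{T_2})}$ to equal that of $-T_2^{-1}A$, while $\mathrm{Ham}^{(\mathbf{w}_l)}=\tfrac{1}{2l}\sum_{j}q_j^{l}$. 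Consequently the $dT_1$- and $dT_2$-coefficients of $\theta+\eta_{\mathrm{triv}}$ collapse to phase-independent functions $\tilde h_{T_1}(t),\tilde h_{T_2}(t)$, and upon applying $d$ the $\mathbf{w}_l$-contribution $\tfrac{1}{2l}\bigl(\sum_j q_j^{l}\bigr)dT_{\infty,l}$ cancels the $dT_{\infty,l}$-derivative of $-\tfrac12\sum_j\tilde P_1(q_j)\,dq_j$, since $\partial_{T_{\infty,l}}\tilde P_1(\lambda)=-\lambda^{l-1}$. Thus every mixed $dq\wedge dt$ and $dp\wedge dt$ term of $d(\theta+\eta_{\mathrm{triv}})$ vanishes, and one is left with the purely base two-form $d\tilde h_{T_1}\wedge dT_1+d\tilde h_{T_2}\wedge dT_2$.

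The main obstacle is this last residual: I must show $d\tilde h_{T_1}\wedge dT_1+d\tilde h_{T_2}\wedge dT_2=0$, that is, that the phase-independent parts of the trivial Hamiltonians supplied by \eqref{DefHam} assemble into a closed one-form in the times. I expect to settle it by computing those constant parts explicitly from \eqref{DefHam} and \eqref{DefCi2}, where they arise from $\tilde P_1,\tilde P_2$ evaluated at the apparent singularities and from the $H_{\infty,k}$, and checking that $\tilde h_{T_1}\,dT_1+\tilde h_{T_2}\,dT_2$ is the differential of a single potential in $T_1,T_2$; since $\Omega$ is anyway closed by its isomonodromic origin and insensitive to a pure-base exact two-form, this amounts to a finite verification in the variables $T_1,T_2,T_{\infty,1}$. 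The two genuinely delicate points are therefore (i) the bookkeeping that $\partial_{T_1},\partial_{T_2}\in\mathcal{U}_{\mathrm{triv}}$ and $\partial_\tau\in\mathcal{U}_{\mathrm{iso}}$, which legitimises the split $\eta=\eta_{\mathrm{triv}}+\eta_{\mathrm{iso}}$, and (ii) the vanishing of this pure-time residual; everything else follows cleanly and $\hbar$-uniformly from the trivial invariance of $(\check q,\check p)$ recorded in \autoref{ShiftDarbouxCoordinates}.
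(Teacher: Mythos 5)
Your strategy is genuinely different from the paper's: you package the Hamiltonian part as a one-form $\eta$ with $\Omega=\omega_{\mathrm{ph}}+d\eta$, introduce the primitive $\theta=\sum_j\check p_j\,d\check q_j-\sum_j p_j\,dq_j$ of the time-dependent shift, and reduce the theorem to the single identity $d(\theta+\eta_{\mathrm{triv}})=0$; the paper instead expands the old coframe $dt_{\infty,k}$ in the new one $(dT_{\infty,k},dT_1,dT_2,d\tau_j)$, shows that the Hamiltonians generating the trivial flows vanish in the shifted coordinates, and reads off the surviving $d\tau_k$ terms. Your route has the merit of making explicit the correction $d\theta$ that a \emph{time-dependent} fibrewise-symplectic change produces (a point the paper's opening line $\sum_j d\check q_j\wedge d\check p_j=\sum_j dq_j\wedge dp_j$ silently glosses over). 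However, as written the proposal is not a proof: the two points you yourself flag as ``genuinely delicate'' are exactly where the content of the theorem sits, and neither is established.

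Concretely: (i) the splitting $\eta=\eta_{\mathrm{triv}}+\eta_{\mathrm{iso}}$ only does what you want if the coordinate vector fields $\partial_{T_1},\partial_{T_2},\partial_{T_{\infty,l}}$ dual to the times of \autoref{Times} actually lie in $\mathcal{U}_{\text{trivial}}=\mathrm{Span}\{\mathbf w_k,\mathbf u_{-1},\mathbf u_0\}$, since that membership is what licenses invoking $\mathcal L_{\boldsymbol{\alpha}_s}[\check q_j]=\mathcal L_{\boldsymbol{\alpha}_s}[\check p_j]=0$ for the trivial directions. For $\partial_{T_{\infty,l}}$ this is immediate from $\mathcal L_{\mathbf w_k}[T_{\infty,j}]=\hbar\delta_{j,k}$, but for $\partial_{T_1},\partial_{T_2}$ it requires $\mathcal L_{\mathbf u_{-1}}[\tau_k]=\mathcal L_{\mathbf u_0}[\tau_k]=0$ for all $k$ (otherwise $\mathbf u_{-1},\mathbf u_0$ have components along the $\partial_{\tau_k}$ and the dual frame does not respect the decomposition of \autoref{TrivialVectors}); given the formula for $\tau_k$ this is a nontrivial identity that you do not verify. (ii) The entire argument funnels into the residual $d\tilde h_{T_1}\wedge dT_1+d\tilde h_{T_2}\wedge dT_2=0$, which you only assert to be a ``finite verification''. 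The appeal to closedness of $\Omega$ cannot settle it: a closed pure-base two-form need not vanish, and a nonzero residual would falsify the claimed reduced form. You would need to extract the phase-independent parts $\tilde h_{T_1},\tilde h_{T_2}$ of the trivial Hamiltonians from \eqref{DefHam} and \eqref{DefCi2} and exhibit the required cancellation; this is the one computation that carries the remaining content of the theorem, and it is absent. Until (i) and (ii) are supplied, the proposal is a plausible reduction of the statement, not a proof of it.
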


\begin{proof}
    The proof follows from a direct computation using the decomposition of the tangent space along with the definition of trivial and isomonodromic times. It is detailed in \autoref{proof}.
\end{proof}

\begin{corollary}
    The symplectic structure manifested by the isomonodromic deformations of the horizontal sections is independent of the set of trivial times $\mathcal{T}_{trivial}$
\end{corollary}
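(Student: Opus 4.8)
The plan is to read the corollary off the reduced expression for $\Omega$ established in \autoref{Sym2form} and to make the word ``independent'' precise as the vanishing of $\Omega$ in every trivial direction. The trivial tangent directions are spanned by the generators $\mathcal{L}_{\mathbf{w}_k}$ ($1\leq k\leq r_\infty-1$), $\mathcal{L}_{\mathbf{u}_{-1}}$ and $\mathcal{L}_{\mathbf{u}_0}$ of \autoref{TrivialVectors}, which are dual to the trivial times $T_{\infty,k},T_1,T_2$ of \autoref{Times}. In the chart $(\check{q},\check{p},\tau,T_{\infty,k},T_1,T_2)$ the corollary is the statement that, after expansion, the reduced two-form $\Omega=\sum_{j=1}^g d\check{q}_j\wedge d\check{p}_j-\sum_{\tau\in\mathcal{T}_{\text{iso}}}d\tau\wedge d\,\text{Ham}^{(\boldsymbol{\alpha}_\tau)}$ contains neither a differential $dT_{\infty,k},dT_1,dT_2$ nor a coefficient depending on these times.

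First I would dispose of the Darboux block. Its coefficients are constant, and by \autoref{ShiftDarbouxCoordinates} the shifted coordinates obey $\mathcal{L}_{\mathbf{w}_k}[\check{q}_j]=\mathcal{L}_{\mathbf{w}_k}[\check{p}_j]=\mathcal{L}_{\mathbf{u}_{-1}}[\check{q}_j]=\mathcal{L}_{\mathbf{u}_{-1}}[\check{p}_j]=\mathcal{L}_{\mathbf{u}_0}[\check{q}_j]=\mathcal{L}_{\mathbf{u}_0}[\check{p}_j]=0$, so $d\check{q}_j$ and $d\check{p}_j$ carry no trivial-time component; the sum $\sum_j d\check{q}_j\wedge d\check{p}_j$ is therefore already free of trivial times. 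All potential trivial-time dependence of $\Omega$ is thus concentrated in the one-forms $d\,\text{Ham}^{(\boldsymbol{\alpha}_\tau)}$. Expanding each such differential in the chart above, and using that $d\tau$ itself is a coordinate differential with no trivial-time part, the corollary collapses to the single family of identities $\partial_{T}\,\text{Ham}^{(\boldsymbol{\alpha}_\tau)}=0$ for every trivial time $T\in\{T_{\infty,k},T_1,T_2\}$ and every $\tau\in\mathcal{T}_{\text{iso}}$.

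These identities are the crux, and I expect them to be the main obstacle. For the even-indexed times they follow cleanly: by \autoref{Times} and \autoref{ShiftDarbouxCoordinates}, $\mathcal{L}_{\mathbf{w}_k}$ acts in this chart as $\hbar\,\partial_{T_{\infty,k}}$ (it sends $T_{\infty,j}\mapsto\hbar\delta_{jk}$, annihilates $T_1,T_2,\check{q}_j,\check{p}_j$, and annihilates the $\tau$'s, which are built from odd-indexed Birkhoff times only), so $\partial_{T_{\infty,k}}\text{Ham}^{(\boldsymbol{\alpha}_\tau)}=\hbar^{-1}\mathcal{L}_{\mathbf{w}_k}[\text{Ham}^{(\boldsymbol{\alpha}_\tau)}]$. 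Commutation of the deformation operators --- they are plain coordinate fields in this chart, whence $[\mathcal{L}_{\mathbf{w}_k},\mathcal{L}_{\boldsymbol{\alpha}_\tau}]=0$ --- applied to $\check{q}_j,\check{p}_j$ together with the Hamilton equations of the Hamiltonian-evolution theorem forces the $(\check{q},\check{p})$--gradient of $\partial_{T_{\infty,k}}\text{Ham}^{(\boldsymbol{\alpha}_\tau)}$ to vanish, reducing the residue to a pure function of the times. The genuinely delicate cases are $T_1,T_2$, where the scaling and translation generators $\mathbf{u}_{-1},\mathbf{u}_0$ mix the $T_{\infty,j}$'s and act nontrivially on the $\tau$'s, so one cannot simply invert $\mathcal{L}_{\mathbf{u}}\mapsto\partial_{T}$. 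Here I would remove the remaining pure-time residues by direct substitution of the shift $\check{q}_j=T_2q_j+T_1$, $\check{p}_j=T_2^{-1}(p_j-\tfrac12\td{P}_1(q_j))$ of \autoref{ShiftDarbouxCoordinates} into the explicit Hamiltonian \eqref{DefHam}: the shift is engineered precisely so that the $T_1,T_2$ entering through $q_j,p_j$ and through the Toeplitz and Vandermonde prefactors $\nu^{(\boldsymbol{\alpha})},c^{(\boldsymbol{\alpha})}$ recombine into the isomonodromic times $\tau$, leaving $\text{Ham}^{(\boldsymbol{\alpha}_\tau)}$ a function of $(\check{q},\check{p},\tau)$ alone. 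With $\partial_{T}\text{Ham}^{(\boldsymbol{\alpha}_\tau)}=0$ secured for all trivial $T$, the expansion of $\Omega$ has no trivial-time differential and no trivial-time coefficient, which is exactly the claim.
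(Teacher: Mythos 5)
Your proposal is correct and follows essentially the same route as the paper: the corollary is read off the reduced expression of $\Omega$ in \autoref{Sym2form}, whose proof already supplies the vanishing of $\mathcal{L}_{\mathbf{w}_k}$, $\mathcal{L}_{\mathbf{u}_{-1}}$, $\mathcal{L}_{\mathbf{u}_0}$ on the shifted coordinates $(\check{q}_j,\check{p}_j)$ and the vanishing of the Hamiltonians in the trivial directions. Your additional commutator argument for $\partial_{T}\,\text{Ham}^{(\boldsymbol{\alpha}_\tau)}$ and the direct substitution for $T_1,T_2$ merely make explicit a point the paper handles implicitly through the explicit reduced Hamiltonian of \autoref{HamTheoremReduced}, which exhibits $\text{Ham}^{(\boldsymbol{\alpha}_\tau)}$ as a function of $(\check{\mathbf{q}},\check{\mathbf{p}},\tau)$ alone.
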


The above result allows one to conclude that the symplectic structure is independent of the trivial times, and thus, one could fix these times to an arbitrary value without affecting the underlying symplectic structure. In \cite{Marchal2024}, a specific choice was made allowing one to reduce the Hamiltonian structure to a particular form, this form is equivalent to setting the even times $t_{\infty,2k} =0$, $t_{\infty,2r_\infty-3} =2$ and $t_{\infty,2r_\infty-5}=0$. This choice also is a manifestation of the traceless property of the connection matrix in the respective gauge. In terms of trivial times one takes

\begin{definition}[Canonical choice of the trivial times def 8.1 of \cite{Marchal2024}]\label{TrivialTimesChoice} Define the ``canonical choice of trivial times'' by choosing
\bea T_{\infty,k}&=&0 \,\,\,,\,\, \forall \, k\in \llbracket 0,r_\infty-1\rrbracket, \cr
T_{1}&=&0 ,\cr
T_{2}&=&1.
\eea 
\end{definition}
Let us note that this choice is not unique and one could evidently go for another choice of the trivial times without affecting the symplectic structure, the term ``canonical'' is employed since this particular choice allows one to get a particular reduction of the Hamiltonian. \\

In the rest of the paper, the trivial times are set to their canonical values. The canonical choice of trivial times implies that
\begin{itemize}\item All even irregular times are set to $0$: for all $k\in \llbracket 1, r_\infty-1\rrbracket$: $t_{\infty,2k}=0$.
\item $t_{\infty,2r_\infty-3}=2$ and $t_{\infty,2r_\infty-5}=0$.
\item $\td{P}_1$ is identically null. This is equivalent to say that $\td{L}$ and $\check{L}$ are traceless. This implies that under a potential additional trivial gauge transformation, one may choose a gauge in which $\td{L}$, $\check{L}$,$\td{A}_{\boldsymbol{\alpha}^{\tau}}$ and $\check{A}_{\boldsymbol{\alpha}^{\tau}}$ are traceless for any isomonodromic time $\tau\in \mathcal{T}_{\text{iso}}$.
\item The shifted Darboux coordinates are identical to the initial Darboux coordinates:
\beq \forall \, j\in \llbracket 1,g\rrbracket\,:\, \check{q}_j=q_j \,\text{ and }\, \check{p}_j=p_j\eeq
\item The isomonodromic times $\tau_k$ identify with an irregular time:
\beq \label{Identifisoirreg}\forall\, k\in \llbracket 1,g\rrbracket\,:\, \tau_k= \frac{1}{2}t_{\infty,2r_\infty-2k-5} \,\, \Leftrightarrow\,\, \frac{1}{2}t_{\infty,2k-1}=\tau_{r_\infty-k-2}\eeq
\item $\td{P}_2$ reduces to $\td{P}_2(\lambda)=-\lambda$ if $r_\infty=3$ or for $r_\infty\geq 4$:
\bea \label{ReducedtdP2}\td{P}_2(\lambda)&=&-\lambda^{2r_\infty-5}-\sum_{k=r_\infty-2}^{2r_\infty-7}\left(2\tau_{2r_\infty-k-6}+\sum_{m=k-r_\infty+6}^{r_\infty-3}\tau_{r_\infty-m-2}\tau_{r_\infty-k+m-5}\right)\lambda^k\cr
&&-\left(2\tau_{r_\infty-3}+\sum_{m=3}^{r_\infty-3}\tau_{r_\infty-m-2}\tau_{m-2}\right)\lambda^{r_\infty-3}
\eea
\end{itemize}

After the reduction one gets the following result 

\begin{theorem}[Hamiltonian representation for the canonical choice of trivial times Theorem 8.1 of \cite{Marchal2024}]\label{HamTheoremReduced} The canonical choice of the trivial times given by \autoref{TrivialTimesChoice} and the definition of trivial times (\autoref{Times}) imply that for any isomonodromic time $\tau\in \mathcal{T}_{\text{iso}}$:
\beq \label{DefHamReduced} \text{Ham}^{(\boldsymbol{\alpha}^\tau)}(\check{\mathbf{q}},\check{\mathbf{p}})=\sum_{k=0}^{r_\infty-4} \nu_{\infty,k+1}^{(\boldsymbol{\alpha}^\tau)}H_{\infty,k}\eeq
In other words, the Hamiltonian is a (time-dependent) linear combination of the isomonodromic oper coefficients $(H_{\infty,k})_{0\leq k\leq r_\infty-4}$ that are determined by
\beq  \label{ReducedDefCi2}\begin{pmatrix}1&\check{q}_1 &\dots &\dots &\check{q}_1^{g-1}\\
1& \check{q}_2&\dots &\dots&\check{q}_2^{g-1} \\
\vdots & & & & \vdots\\
\vdots & & & & \vdots\\
1& \check{q}_{g} &\dots & \dots& \check{q}_{g}^{g-1}\end{pmatrix}\begin{pmatrix} H_{\infty,0}\\ \vdots\\ \vdots\\ H_{\infty,r_\infty-4}\end{pmatrix}=
\begin{pmatrix} \check{p}_1^2+\td{P}_2(\check{q}_1)+\hbar \underset{i\neq 1}{\sum}\frac{\check{p}_i-\check{p}_1}{\check{q}_1-\check{q}_i}\\
\vdots\\ \vdots\\
\check{p}_g^2 +\td{P}_2(\check{q}_g)+\hbar \underset{i\neq g}{\sum}\frac{\check{p}_i-\check{p}_g}{\check{q}_g-\check{q}_i}
\end{pmatrix}
\eeq
where the coefficients of the linear combination are given by 
\beq\label{Matrixentries} \begin{pmatrix}1&0&\dots&\dots&\dots&0\\
0&1&0&\ddots&&0\\
\tau_1 &0&1&0&\ddots &\vdots\\
\vdots&\ddots &\ddots&\ddots&\ddots &\vdots\\
\vdots&\ddots &\ddots&\ddots&\ddots &\vdots\\
\tau_{g-2}&\tau_{g-3}&\dots& \tau_1&0&1 
\end{pmatrix}\begin{pmatrix}\nu_{\infty,1}^{(\boldsymbol{\alpha}^{\tau_j})}\\ \vdots\\ \vdots\\\nu_{\infty,r_\infty-3}^{(\boldsymbol{\alpha}^{\tau_j})}\end{pmatrix}=\frac{1}{2r_\infty-2j-5} \mathbf{e}_{j}
\eeq
\end{theorem}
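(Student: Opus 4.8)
The plan is to start from the general Hamiltonian \eqref{DefHam}, which splits as the \emph{isospectral} piece $\sum_{k=0}^{r_\infty-4}\nu_{\infty,k+1}^{(\boldsymbol{\alpha})}H_{\infty,k}$ plus three $\hbar$-correction terms built respectively from the coefficients $c_{\infty,k}^{(\boldsymbol{\alpha})}$, $\nu_{\infty,0}^{(\boldsymbol{\alpha})}$ and $\nu_{\infty,-1}^{(\boldsymbol{\alpha})}$. The reduction to \eqref{DefHamReduced} amounts to proving that, for an isomonodromic deformation $\boldsymbol{\alpha}^\tau$ under the canonical choice of trivial times (\autoref{TrivialTimesChoice}), these three correction terms vanish identically; the two auxiliary systems \eqref{ReducedDefCi2} and \eqref{Matrixentries} are then obtained by specializing \eqref{DefCi2} and \eqref{RelationNuAlphaInfty} to the canonical slice.

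The crucial structural input is that the isomonodromic times $\tau_k$ of \autoref{Times} are functions of the \emph{odd} irregular times only, whereas the trivial times comprise all the even times $T_{\infty,k}=t_{\infty,2k}$ together with $T_1,T_2$ (which depend only on $t_{\infty,2r_\infty-3}$ and $t_{\infty,2r_\infty-5}$). Since $\partial_{\tau_j}$ holds every trivial coordinate fixed, I would first record two vanishing statements for the deformation vector $\boldsymbol{\alpha}^{\tau_j}=\big(\partial t_{\infty,k}/\partial\tau_j\big)_k$: all its even components vanish, $\alpha^{\tau_j}_{\infty,2m}=0$, and its two top odd components vanish as well, $\alpha^{\tau_j}_{\infty,2r_\infty-3}=\alpha^{\tau_j}_{\infty,2r_\infty-5}=0$ (because $t_{\infty,2r_\infty-3},t_{\infty,2r_\infty-5}$ are frozen by $T_2,T_1$). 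Feeding these into \eqref{calphaexpr}, whose right-hand side is a sum of products $\alpha_{\text{odd}}\,t_{\text{even}}$ and $\alpha_{\text{even}}\,t_{\text{odd}}$, makes that right-hand side vanish (the first type because $t_{\text{even}}=0$ on the canonical slice, the second because $\alpha^{\tau}_{\text{even}}=0$); as $M_\infty$ of \eqref{MatrixMInfty} is lower triangular with nonzero diagonal $t_{\infty,2r_\infty-3}=2$, it is invertible and hence $c^{(\boldsymbol{\alpha}^\tau)}_{\infty,k}=0$ for all $k$, killing the first correction term. For the other two, I would solve \eqref{RelationNuAlphaInfty} top-down: its first two right-hand entries are proportional to $\alpha^{\tau}_{\infty,2r_\infty-3}$ and $\alpha^{\tau}_{\infty,2r_\infty-5}$, both zero, so the triangular shape forces $\nu^{(\boldsymbol{\alpha}^\tau)}_{\infty,-1}=\nu^{(\boldsymbol{\alpha}^\tau)}_{\infty,0}=0$. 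This disposes of the remaining two terms and yields \eqref{DefHamReduced}.

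For the two auxiliary systems I would specialize directly. The system \eqref{ReducedDefCi2} is \eqref{DefCi2} with $\td P_1\equiv 0$ (so the $-\td P_1(q_j)p_j$ terms drop), with the identifications $\check q_j=q_j,\ \check p_j=p_j$, with $(V_\infty)^t$ read off as the displayed Vandermonde matrix, and with $\td P_2$ in the reduced form \eqref{ReducedtdP2}. For \eqref{Matrixentries} I would compute $\boldsymbol{\alpha}^{\tau_j}$ explicitly on the canonical slice: the identification \eqref{Identifisoirreg}, $t_{\infty,2r_\infty-5-2k}=2\tau_k$, is an exact identity there (all higher-order terms in \autoref{Times} carry the now-vanishing factor $\tfrac12 t_{\infty,2r_\infty-5}$), so $\boldsymbol{\alpha}^{\tau_j}$ has a single nonzero entry, in position $2r_\infty-5-2j$. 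Substituting into the right-hand side of \eqref{RelationNuAlphaInfty} leaves a single nonzero entry in row $j+2$; since $\nu_{\infty,-1}=\nu_{\infty,0}=0$ their columns drop out of the triangular system, and the remaining equations for $\nu_{\infty,1},\dots,\nu_{\infty,r_\infty-3}$ close up into an $(r_\infty-3)\times(r_\infty-3)$ lower-triangular Toeplitz system. Using $t_{\infty,2r_\infty-3}=2$, $t_{\infty,2r_\infty-5}=0$ and $t_{\infty,2r_\infty-3-2d}=2\tau_{d-1}$ for $d\ge 2$, its matrix is exactly a scalar multiple of the matrix displayed in \eqref{Matrixentries}, and its right-hand side collapses to the single vector $\tfrac{1}{2r_\infty-2j-5}\mathbf{e}_j$.

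I expect the main obstacle to be bookkeeping rather than conceptual: three index conventions must be aligned simultaneously — the irregular-time index $k$, the $\nu$-index starting at $-1$, and the reduced Toeplitz index $s=\ell-2$ — and the normalization constants (the factor relating $t$ to $\tau$ and the precise scaling built into $\boldsymbol{\alpha}^\tau$) must be tracked carefully so that the constant in \eqref{Matrixentries} comes out exactly. The conceptual content, by contrast, is the clean \emph{decoupling} that occurs once the even times and the two top odd times are frozen: isomonodromic motion then takes place entirely within the odd times identified with the $\tau$'s, which is exactly the mechanism forcing every non-isospectral contribution to the Hamiltonian to drop out.
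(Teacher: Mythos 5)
Your strategy is sound and is, as far as one can tell, the intended one: the paper does not actually prove this statement (it is imported verbatim as Theorem 8.1 of \cite{Marchal2024}, with the proof explicitly omitted and the reader referred to op.\ cit.), so the only available check is internal consistency, and your argument passes it. The key observations --- that on the canonical slice the vector $\boldsymbol{\alpha}^{\tau_j}$ has vanishing even components and vanishing components in positions $2r_\infty-3$ and $2r_\infty-5$, hence the right-hand side of \eqref{calphaexpr} dies (each summand is of the form $\alpha_{\mathrm{odd}}\,t_{\mathrm{even}}$ or $\alpha_{\mathrm{even}}\,t_{\mathrm{odd}}$) and the first two rows of \eqref{RelationNuAlphaInfty} force $\nu_{\infty,-1}^{(\boldsymbol{\alpha}^\tau)}=\nu_{\infty,0}^{(\boldsymbol{\alpha}^\tau)}=0$ by triangularity and invertibility of the diagonal --- are exactly what is needed to kill the three $\hbar$-corrections in \eqref{DefHam}. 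Your justification that $t_{\infty,2r_\infty-2j-5}=2\tau_j$ holds exactly on the slice (all higher terms of \autoref{Times} carrying powers of the now-vanishing $t_{\infty,2r_\infty-5}$) is correct and is confirmed by the expression for $\partial_{\tau_k}$ in \autoref{proof}.

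One normalization point deserves attention. With $\alpha^{\tau_j}_{\infty,2r_\infty-2j-5}=\partial t_{\infty,2r_\infty-2j-5}/\partial\tau_j=2$, the surviving entry of the right-hand side of \eqref{RelationNuAlphaInfty} is $\tfrac{4}{2r_\infty-2j-5}$ (in row $j+2$), and the reduced $(r_\infty-3)\times(r_\infty-3)$ Toeplitz matrix is $2$ times the matrix displayed in \eqref{Matrixentries}; dividing through gives $\tfrac{2}{2r_\infty-2j-5}\,\mathbf{e}_j$ on the right, not $\tfrac{1}{2r_\infty-2j-5}\,\mathbf{e}_j$ as you (and \eqref{Matrixentries}) state. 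This extra factor of $2$ is in fact consistent with \eqref{Reducednu} in \autoref{AppD}, which carries the prefactor $\tfrac{2}{2r_\infty-2j-5}$, so the discrepancy looks like an internal inconsistency of the stated formula rather than an error in your derivation; still, since you asserted the constant ``comes out exactly,'' you should either carry the computation through and flag the mismatch, or adopt a normalization of $\boldsymbol{\alpha}^{\tau_j}$ with unit component and say so explicitly. Everything else --- the specialization of \eqref{DefCi2} to \eqref{ReducedDefCi2} via $\td P_1\equiv 0$, $(V_\infty)^t$ as the displayed Vandermonde, and $\check q_j=q_j$, $\check p_j=p_j$ --- is routine and correct as you describe it.
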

Before concluding this section, let us note that the reduced Hamiltonian is valid for any choice of trivial times since it holds no dependence on these parameters, yet, different choices will indeed affect the choice of ``shifted'' coordinates. This Hamiltonian is the general Hamiltonian governing the Painlev\'{e} I hierarchy as one could vary the order of the pole and obtain the elements of the hierarchy. The purpose of the next section is thus to relate this Hamiltonian to the isospectral ones, thus obtaining a correspondence on both sides, note however that this does not ensure the correspondence between the symplectic structures as one needs to relate the coordinates used in both setups. This correspondence is the main objective of the next section.

\section{Twisted isomonodromic-isospectral correspondence}\label{Corrspondence}

The goal of this section is to establish a direct relation between the two sets of isomonodromic oper coefficients and isospectral Hamiltonians $\mathbf{H}_{\infty,k}$ and $\mathbf{I}_{\infty,k}$ defined in the previous two sections, thus relating the Hamiltonian formulation. Note that for the correspondence to be complete, one needs to build a map also between the set of coordinates whose evolution characterizes the symplectic structures on both sides. This is the ultimate goal of this section, and for this, it is reasonable to define what is a correspondence in this context: \\

\textit{\textbf{``A correspondence between the isospectral and isomonodromic symplectic structures is manifested by the existence of a map between the sets of Hamiltonians $\mathbf{I}_{\infty,k}$ and oper coefficients $\mathbf{H}_{\infty,k}$ along with a map relating two sets of coordinates defined on both sides.''}} \\

Following this, the section is split into two parts, each one of which treats a side of the correspondence while discussing relative consequences. The question of an isospectral coordinate system left open in \cite{BertolaHarnadHurtubise2022} is discussed in the second part for the twisted case.

\subsection{Relation between $\mathbf{H}_{\infty,k}$ and $\mathbf{I}_{\infty,k}$}
The first step is a relation between the sets $\mathbf{I}_{\infty,k}$ and $\mathbf{H}_{\infty,k}$, for this, note that one needs the initial gauge rather than the oper one, it is then straightforward to write 
\begin{align}
    \det \td{L} = & \det (L + G \partial_\lambda G^{-1}) = - [L]_{2,1} - [G \partial_\lambda G^{-1}]_{2,1} \cr
    = & \td{P}_2(\lambda) -\sum_{k=0}^{r_\infty-4} H_{\infty,k}\lambda^k + \sum_{j=1}^g \frac{p_j}{\lambda-q_j}+ [\td{L}]_{1,2} \partial_\lambda \frac{[\td{L}]_{1,1}}{[\td{L}]_{1,2}} 
\end{align}
The connection matrix $\td{L}$ in the initial gauge admits a polar structure bounded by the divisor considered containing only one ramified pole (fixed at $\{\infty \}$), henceforth, its determinant is determined by the singular part at the pole. The expansion of the determinant is completely determined once the behavior of the last term $[\td{L}]_{1,2} \frac{[\td{L}]_{1,1}}{[\td{L}]_{1,2}}$ is given. Note that the term depending on the coordinates holds no contribution to the expansion. On the other hand, from \autoref{specinv2}, one has
\begin{align}
    \det \td{L} =& - y_1 (\lambda) y_2( \lambda) =- \left( - \frac{1}{2} \sum_{k=1}^{2r_\infty-2} t_{\infty,k} \lambda^{\frac{k}{2}-1} + \sum_{k=1}^{2r_\infty-2} k I_{\infty,k} \lambda^{-\frac{k}{2}-1} + O\left( \lambda^{-r_\infty-2} \right) \right) \cr
    & \left( - \frac{1}{2} \sum_{k=1}^{2r_\infty-2} (-1)^k t_{\infty,k} \lambda^{\frac{k}{2}-1} + \sum_{k=1}^{2r_\infty-2} (-1)^k k I_{\infty,k}\lambda^{-\frac{k}{2}-1} + O\left( \lambda^{-r_\infty-2} \right) \right) \cr
    = & -\frac{1}{4} \sum_{k=1}^{2r_\infty-2} \sum_{j=1}^{2r_\infty-2} (-1)^j t_{\infty,k} t_{\infty,j}  \lambda^{\frac{k+j}{2} -2} +  \sum_{j=1}^{2r_\infty-2} \sum_{k=1}^{2r_\infty-2} (-1)^k t_{\infty,j} k I_{\infty,k}\lambda^{\frac{j-k}{2}-2}  + O(\lambda^{-2}) \cr
\end{align}
In order to identify the orders of the determinant, consider first the double time term in the above expansion, one has (omitting the $\frac{1}{4}$ factor)
\begin{align}
\sum_{k=2}^{2r_\infty-3} \sum_{j=1}^{k-1}  (-1)^j t_{\infty,j} t_{\infty,k-j}  \lambda^{\frac{k}{2} -2} + \sum_{k=2r_\infty-2}^{4r_\infty-4} \sum_{j=k - 2r_\infty+2}^{2 r_\infty-2}  (-1)^j t_{\infty,j} t_{\infty,k-j}  \lambda^{\frac{k}{2} -2}  
\end{align}
One realizes that the terms for an odd value of the variable $k$ have no contribution and thus canceling these terms one gets
\begin{align}
   \sum_{k=-1}^{r_\infty-4} \sum_{j=1}^{2 k+3}  (-1)^j t_{\infty,j} t_{\infty,2k+4-j}  \lambda^{k} + \sum_{k= r_\infty-3}^{2 r_\infty-4} \td{P}^{(2)}_{\infty,k}  \lambda^k
\end{align}
The analysis for the second term goes in the same spirit, to recover the result, the reduction from $\mathfrak{gl}_2 (\mathbb{C}) \rightarrow \mathfrak{sl}_2(\mathbb{C})$ must be used. For this, one needs to make sure that the sum of the two eigenvalues $y_1$ and $y_2$ whose expansions are respectively given in \eqref{expansion} is zero (this is because for traceless matrices, the eigenvalues admit opposite expansions). Note that this only possible if the even times (this is also described in the reduction), and the even isospectral Hamiltonians $I_{2 k},$ for $k\in \llbracket 1 , r_\infty-1 \rrbracket$ are all set to zero. Taking this into account, one has the second sum that reduces to 
\begin{align}
    \sum_{i=0}^{2r_\infty-3} \sum_{k=1}^{2 r_\infty -2-i} (-1)^k  k t_{\infty,i+k}  I_{\infty,k} \lambda^{\frac{i}{2}-2}  + \sum_{i=- 2 r_\infty+3}^{-1} \sum_{k=1-i}^{2 r_\infty -2} (-1)^k  k t_{\infty,i+k}  I_{\infty,k} \lambda^{\frac{i}{2}-2} 
\end{align}
Note that one needs the positive integer expansion in the local coordinate in order to match the terms on the isomonodromic side, to this end, the second term contributes with the powers of the local coordinate via
\begin{align}
\sum_{i=0}^{2r_\infty-3} \sum_{k=1}^{2 r_\infty -2-i} (-1)^k  k t_{\infty,i+k}  I_{\infty,k} \lambda^{\frac{i}{2}-2} + O(\lambda^{-1}) =  \sum_{i=2}^{r_\infty-2} \sum_{k=1}^{2 r_\infty -2-2i} (-1)^k  k t_{\infty,2i+k}  I_{\infty,k} \lambda^{i-2}  + O(\lambda^{-1}) \nonumber \\
= - \sum_{i=0}^{r_\infty-4} \sum_{k=1}^{ r_\infty -3-i}  (2k-1) t_{\infty,2i+3+2k}  I_{\infty,2k-1} \lambda^{i}  + O(\lambda^{-1})
\end{align}
where in the last step we have applied the reduction, eliminating the even terms and translated the variable $i \to i+2$. In particular, this allows us to get the following result.
\begin{theorem}[Relation between $\mathbf{I}_{\infty,k}$ and $\mathbf{H}_{\infty,k}$] The identification of the determinant of the connection matrix establishes a direct map between the sets $\mathbf{I}_{\infty,k}$ and $\mathbf{H}_{\infty,k}$ given by 
\begin{align}
    H_{\infty,k} =  \Res_{\lambda \to \infty} \lambda^{-k-1} [\td{L}(\lambda)]_{1,2} \partial_\lambda \frac{[\td{L}(\lambda)]_{1,1}}{[\td{L}(\lambda)]_{1,2}} - \frac{1}{4}\sum_{j=1}^{2 k+3} (-1)^j  t_{\infty,j} t_{\infty,2k+4-j}  + \sum_{j=1}^{ r_\infty -3 -k}  (2j-1) t_{\infty,2k+3 -2j}  I_{\infty,2j-1}
\end{align}
written in a matrix form as 
\small{\beq\label{MatrixIHInfty} \hat{M}_{\infty}(\mathbf{t})\begin{pmatrix}I_{\infty,1}\\3I_{\infty,3}\\\vdots\\ (2r_{\infty}-7)I_{\infty,2r_\infty-7} \end{pmatrix}= \begin{pmatrix}H_{\infty,r_\infty-4}\\ H_{\infty,r_\infty-5}\\\vdots \\ H_{\infty,0}\end{pmatrix}
+N_\infty(\lambda,\mathbf{t},\mathbf{q},\mathbf{p})
\eeq}
with the matrix $\hat{M}_{\infty}(\mathbf{t})$ is given in (\ref{MatrixMInfty}) and reduces under the reduction to
\begin{align}
  \begin{pmatrix}t_{\infty,2r_\infty-3}&0&\dots& &0\\
t_{\infty,2r_\infty-5}& t_{\infty,2r_\infty-3}&0& \ddots& \vdots\\
\vdots& \ddots& \ddots&&\vdots \\
t_{\infty,7}& & \ddots &\ddots&0 \\
t_{\infty,5}& t_{\infty,7} & \dots & &t_{\infty,2r_\infty-3}\end{pmatrix}.
\end{align}
where the last two lines and columns are excluded giving a $(r_\infty - 3) \times (r_\infty - 3) $ system. The matrix $N_\infty(\lambda,\mathbf{t},\mathbf{q},\mathbf{p})$ is given by 
\begin{align}
   N_\infty(\lambda,\mathbf{t},\mathbf{q},\mathbf{p}) := \begin{pmatrix} \frac{1}{4} \underset{j=1}{\overset{2r_\infty-5}{\sum}} (-1)^j t_{\infty,2r_\infty-4-j}t_{\infty,j}-\underset{\lambda\to \infty}{\Res}\lambda^{-r_\infty+3} [\td{L}(\lambda)]_{1,2} \partial_\lambda \frac{[\td{L}(\lambda)]_{1,1}}{[\td{L}(\lambda)]_{1,2}} \\
 \frac{1}{4}\underset{j=1}{\overset{2r_\infty-7}{\sum}} (-1)^j t_{\infty,2r_\infty-6-j}t_{\infty,j}-\underset{\lambda\to \infty}{\Res}\lambda^{-r_\infty+4} [\td{L}(\lambda)]_{1,2} \partial_\lambda \frac{[\td{L}(\lambda)]_{1,1}}{[\td{L}(\lambda)]_{1,2}}  \\
\vdots\\
\frac{1}{4}\underset{j=1}{\overset{3}{\sum}}(-1)^j t_{\infty,4-j}t_{\infty,j}-\underset{\lambda\to \infty}{\Res}\lambda^{-1} [\td{L}(\lambda)]_{1,2} \partial_\lambda \frac{[\td{L}(\lambda)]_{1,1}}{[\td{L}(\lambda)]_{1,2}} 
  \end{pmatrix}
\end{align}
Note in particular that this result is valid only for the canonical choice of trivial times made after the reduction.
\end{theorem}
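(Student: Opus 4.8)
The plan is to obtain the claimed identity by computing the single quantity $\det\td{L}$ in two independent ways and matching Laurent coefficients at the ramified pole. First I would set up the gauge identity. Since $\Psi=G\td{\Psi}$ with $G$ as in \eqref{GaugeGexpr}, one has $G\td{L}G^{-1}=L+G\partial_\lambda G^{-1}$, and because conjugation preserves determinants, $\det\td{L}=\det\bigl(L+G\partial_\lambda G^{-1}\bigr)$. Using the companion form of $L$ in \eqref{CompanionMatrix} together with the lower-triangular shape of $G$, the matrix $G\partial_\lambda G^{-1}$ has vanishing first row, so the $2\times2$ determinant collapses to $-[L]_{2,1}-[G\partial_\lambda G^{-1}]_{2,1}$. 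Substituting the explicit $L_{2,1}$ this yields the oper-side expansion
\[\det\td{L}=\td{P}_2(\lambda)-\sum_{k=0}^{r_\infty-4}H_{\infty,k}\lambda^k+\sum_{j=1}^g\frac{p_j}{\lambda-q_j}+[\td{L}]_{1,2}\,\partial_\lambda\frac{[\td{L}]_{1,1}}{[\td{L}]_{1,2}}.\]

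Second, I would compute the same object spectrally as $\det\td{L}=-y_1(\lambda)y_2(\lambda)$, inserting the eigenvalue expansions of \autoref{specinv2}. The product separates into a pure time--time part proportional to $t_{\infty,k}t_{\infty,j}$, carried by the powers $\lambda^{(k+j)/2-2}$, and a mixed time--Hamiltonian part proportional to $t_{\infty,j}\,I_{\infty,k}$, carried by $\lambda^{(j-k)/2-2}$. Only index pairs of matched parity produce integer powers of $\lambda$; here I would invoke the reduction $\mathfrak{gl}_2(\mathbb{C})\to\mathfrak{sl}_2(\mathbb{C})$, which forces the even times $t_{\infty,2k}$ and the even isospectral Hamiltonians $I_{\infty,2k}$ to vanish (equivalently $y_2=-y_1$), leaving only odd indices.

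The crux is to extract and match the coefficient of each power $\lambda^k$ for $k\in\llbracket 0,r_\infty-4\rrbracket$. On the oper side only $-H_{\infty,k}$ and the polynomial part of $[\td{L}]_{1,2}\partial_\lambda\bigl([\td{L}]_{1,1}/[\td{L}]_{1,2}\bigr)$ reach these powers, since the apparent-singularity sum is $O(\lambda^{-1})$ and $\td{P}_2$ is supported on strictly higher powers after the canonical reduction. On the spectral side I would reindex the surviving double sums so that the mixed part becomes $\sum_j (2j-1)\,t_{\infty,2k+3-2j}\,I_{\infty,2j-1}$ and the time--time part is recognized as $\tfrac14\sum_{j=1}^{2k+3}(-1)^j t_{\infty,j}t_{\infty,2k+4-j}$; one must check that these summation ranges are consistent with the coefficients $\td{P}^{(2)}_{\infty,k}$ already fixed in $\td{P}_2$, so that nothing is double counted. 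Writing the extraction of the $\lambda^k$ coefficient of the last term as the residue $\Res_{\lambda\to\infty}\lambda^{-k-1}[\td{L}]_{1,2}\partial_\lambda\bigl([\td{L}]_{1,1}/[\td{L}]_{1,2}\bigr)$ then gives, after solving for $H_{\infty,k}$, exactly the scalar identity in the statement.

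Finally, stacking the $r_\infty-3$ identities for $k=0,\dots,r_\infty-4$ produces a linear system in the unknowns $(2j-1)I_{\infty,2j-1}$. The coefficient of $I_{\infty,2j-1}$ in the equation for $H_{\infty,k}$ depends only on the index shift $2k+3-2j$, which is precisely the lower-triangular Toeplitz pattern of $\hat{M}_\infty(\mathbf{t})$ of \eqref{MatrixMInfty}, truncated to its odd-time block under the reduction; the remaining time--time contractions together with the residues of $[\td{L}]_{1,2}\partial_\lambda\bigl([\td{L}]_{1,1}/[\td{L}]_{1,2}\bigr)$ assemble into the vector $N_\infty(\lambda,\mathbf{t},\mathbf{q},\mathbf{p})$. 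I expect the main obstacle to be entirely combinatorial: controlling the summation ranges of the two double sums in the presence of the half-integer powers and carrying out the parity reduction without misaligning indices, rather than any conceptual difficulty.
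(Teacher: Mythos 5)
Your proposal is correct and follows essentially the same route as the paper: computing $\det\td{L}$ once via the oper gauge identity $\det\td{L}=\det(L+G\partial_\lambda G^{-1})=-[L]_{2,1}-[G\partial_\lambda G^{-1}]_{2,1}$ and once spectrally as $-y_1y_2$, then applying the $\mathfrak{gl}_2\to\mathfrak{sl}_2$ reduction and matching integer powers of $\lambda$, with the higher-order time--time terms cancelling against $\td{P}_2$ and the index shift $2k+3-2j$ producing the truncated Toeplitz matrix. No gaps; the combinatorial bookkeeping of the summation ranges you flag as the main obstacle is exactly what the paper's own derivation carries out.
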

As a corollary one gets the Hamiltonian system in terms of the isospectral Hamiltonians

\begin{corollary}[Hamiltonian system in $\mathbf{I_{\infty,k}}$]\label{Hamwithiso} The reduction of the symplectic structure and the relation between the sets $\mathbf{I}_{\infty,k}$ and $\mathbf{H}_{\infty,k}$ allows one to express the general Hamiltonian of the Painlev\'{e} I hierarchy using the isospectral Hamiltonians via 
\begin{align}
    \text{Ham}^{(\boldsymbol{\alpha}^\tau)}(\check{\mathbf{q}},\check{\mathbf{p}})=&  \sum_{k=0}^{r_\infty-4} \nu_{\infty,k+1}^{(\boldsymbol{\alpha}^\tau)} \bigg(  \Res_{\lambda \to \infty} \lambda^{-k-1} [\td{L}(\lambda)]_{1,2} \partial_\lambda \frac{[\td{L}(\lambda)]_{1,1}}{[\td{L}(\lambda)]_{1,2}} \nonumber \\
    &- \frac{1}{4} \sum_{j=1}^{2 k+3} (-1)^j  t_{\infty,j} t_{\infty,2k+4-j}  + \sum_{j=1}^{ r_\infty -3 -k}  (2j-1) t_{\infty,2k+3 -2j}  I_{\infty,2j-1} \bigg)
\end{align}
where the coefficients $\nu_{\infty,k+1}^{(\boldsymbol{\alpha}^\tau)}$ are given in \eqref{Matrixentries}.
\end{corollary}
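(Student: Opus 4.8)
The plan is to compute the determinant $\det\td L(\lambda)$ in two independent ways and then to match the coefficients of the non-negative powers of $\lambda$. On the oper side, working in the gauge of \eqref{GaugeGexpr}, I would use $\det\td L = \det\!\big(L + G\,\partial_\lambda G^{-1}\big)$ together with the companion structure of $L$, so that $\det\td L$ equals $\td P_2(\lambda) - \sum_{k=0}^{r_\infty-4} H_{\infty,k}\lambda^k$ plus the polar contribution $\sum_j \frac{p_j}{\lambda-q_j}$ and the gauge correction $[\td L]_{1,2}\,\partial_\lambda\frac{[\td L]_{1,1}}{[\td L]_{1,2}}$. The key observation is that, in the range $0\le k\le r_\infty-4$, neither $\td P_2$ (whose powers start at $\lambda^{r_\infty-3}$) nor the apparent-singularity poles (which are $O(\lambda^{-1})$ at infinity) contribute, so the coefficient of $\lambda^k$ is exactly $-H_{\infty,k}$ together with $\Res_{\lambda\to\infty}\lambda^{-k-1}[\td L]_{1,2}\partial_\lambda\frac{[\td L]_{1,1}}{[\td L]_{1,2}}$, the latter being merely the extraction of the $\lambda^k$-coefficient of the gauge term.

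On the spectral side, I would compute $\det\td L = -y_1 y_2$ from the eigenvalue expansions of \autoref{specinv2}, splitting the product into a part quadratic in the times, a mixed time–isospectral part, and a part quadratic in the $I_{\infty,k}$. The last is $O(\lambda^{-2})$ and hence irrelevant for the orders in question. For the quadratic-in-times part I would invoke the parity cancellation: after re-indexing the double sum by its total degree, the coefficient of every half-integer (odd-index) power vanishes because the substitution $j\mapsto(\text{index})-j$ reverses the sign of $(-1)^j$; only integer powers survive, producing the contribution $-\tfrac14\sum_{j=1}^{2k+3}(-1)^j t_{\infty,j}t_{\infty,2k+4-j}$ at order $\lambda^k$.

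For the mixed part I would apply the $\mathfrak{sl}_2$ reduction, imposing $y_1+y_2=0$, which forces all even times and all even isospectral Hamiltonians to vanish; extracting the coefficient of $\lambda^k$ for $k\ge 0$ and translating the summation index then yields the term $\sum_{j=1}^{r_\infty-3-k}(2j-1)t_{\infty,2k+3-2j}I_{\infty,2j-1}$. Equating the two expressions order by order for $0\le k\le r_\infty-4$ and solving for $H_{\infty,k}$ gives the stated scalar identity. Collecting these $r_\infty-3$ equations, the coefficients of the odd isospectral Hamiltonians assemble into the lower-triangular Toeplitz matrix $\hat M_\infty(\mathbf t)$ of \eqref{MatrixMInfty} built from the odd times, while the residue and time-square terms form the vector $N_\infty$; specialising to the canonical trivial times $t_{\infty,2r_\infty-3}=2$, $t_{\infty,2r_\infty-5}=0$ then fixes the diagonal and reduces the system to the $(r_\infty-3)\times(r_\infty-3)$ form.

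The main obstacle will be the bookkeeping of the fractional powers $\lambda^{k/2}$ produced by the ramification: one must track precisely which products land on integer powers, verify the parity cancellations and re-indexings, and confirm the order bounds that allow both the $I\times I$ terms and the apparent-singularity poles to drop out of the relevant range. This ramified half-integer accounting is exactly where the argument departs from the generic (unramified) case and constitutes the delicate core of the computation.
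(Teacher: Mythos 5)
Your proposal follows the paper's route exactly: the corollary is obtained by substituting the relation between $H_{\infty,k}$ and $I_{\infty,k}$ --- which the paper establishes precisely by your two-fold computation of $\det\td{L}$ (the oper/gauge side $\det(L+G\partial_\lambda G^{-1})$ versus $-y_1y_2$, with the parity cancellation in the quadratic-in-times sum and the $\mathfrak{sl}_2$ reduction killing the even times and even $I_{\infty,k}$) --- into the reduced Hamiltonian $\text{Ham}^{(\boldsymbol{\alpha}^\tau)}=\sum_{k=0}^{r_\infty-4}\nu_{\infty,k+1}^{(\boldsymbol{\alpha}^\tau)}H_{\infty,k}$ of \autoref{HamTheoremReduced}. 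The only step you leave implicit is that final substitution, which is immediate.
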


\begin{remark}
     Note that the above expression of the Hamiltonian depends on the choice of coordinates that one uses, in particular, in the next section several sets of coordinates are established in order to define an isospectral set of coordinates and the expression of the Hamiltonian depends on this choice. For instance, if one wishes to express the Hamiltonian in the variables $(\mathbf{Q},\mathbf{R})$ (introduced in the next section), one cannot just perform a simple replacement since the change of coordinates is not symplectic. There are however several ways to reach the Hamiltonian, given a set of coordinates on the connection and auxiliary matrices.    
\end{remark}

\begin{remark}
    In order to be more precise on the reduction of the trace, let us note that the passage $\mathfrak{gl}_2 (\mathbb{C}) \rightarrow \mathfrak{sl}_2(\mathbb{C})$ has direct consequences on the eigenvalues of the matrix $\td{L}$ used in the above identification. This could be realized immediately from the expansion of the eigenvalues since for traceless matrices, one has the characteristic equation that gives $y^2 - \det(\td{L}) = 0$ giving opposite eigenvalues
    \begin{align}
        y = \pm \sqrt{\det (\td{L})}
    \end{align}
    We already know that the determinant is an odd polynomial of the local coordinate, thus the eigenvalues admit in this case an expansion dependent only on half integer powers. This is equivalent to say also that $y_1 + y_2 = 0$. Note also that the connection matrix in different gauges have different eigenvalues due to the additional term $\partial_\lambda G \, G^{-1}$ that is not necessarily traceless. This observation is thus specific to this gauge. 
\end{remark}

\subsection{From isomonodromic to isospectral coordinates} 
In this section, the correspondence between the isomonodromic and the isospectral symplectic structures is established, this is done through the explicit relation between the isomonodromic coordinates and the isospectral ones $(\mathbf{q}, \mathbf{p} ) \to (\mathbf{u}, \mathbf{v})$. The problem of finding an appropriate candidate to trivialize the symplectic structure has always been a problem on the isospectral side, for instance, this issue was left open in \cite{BertolaHarnadHurtubise2022} on the generic side, the twisted side remains less considered. More recently, several techniques were employed to unify both approaches, of particular interest is the one used in \cite{BertolaHarnadHurtubise2022} and applied in \cite{Marchal_2024} to obtain the link between the two setups, this perspective is adopted here. More precisely, the starting point is the algebraic structure of the first line of $\td{L}$, the isospectral coordinates are chosen so that an additional condition is satisfied by the Lax pair $(\td{L},\td{A})$ of the zero curvature equation, the coordinates are then defined to be the set of coordinates for which 
\begin{align}\label{isospecondition}
    \delta^{(\boldsymbol{\alpha})}_{\mathbf{t}} \td{L}(\lambda) = \partial_\lambda \td{A}_{\boldsymbol{\alpha}}(\lambda)
\end{align}
This condition is referred to as the ``\textbf{\textit{isospectral condition}}'' and any set of coordinates satisfying this condition is referred to as a set of ``\textbf{\textit{isospectral coordinates}}''. \\

It is worth mentioning at this stage that the isospectral coordinates are not unique since any symplectic time independent change of coordinates preserves the symplectic structure. In particular, this set of coordinates is not canonical and therefore a time independent but non-symplectic change of coordinates is required to trivialize the symplectic structure. Furthermore, the map linking $(\mathbf{q}, \mathbf{p} ) \to (\mathbf{u}, \mathbf{v})$ cannot be time independent and symplectic since it is required to cancel the additional time dependent terms in \autoref{Hamwithiso}. Note that this is non-trivial since the essence of the problem lies in finding a map that takes a set of coordinates to another ensuring that the additional time dependent terms in the general Hamiltonian cancel out. There is no general method that ensures the existence of such a map and no actual recipe to construct it. This is exactly why instead of tackling the problem directly, one makes use of the explicit geometric construction providing the coordinate dependent formulas for the Lax pair and solves the isospectral condition which provides a map between both sets of coordinates. 

\subsubsection{An intermediate step}

In order to link both sets of coordinates, an intermediate set of coordinates denotes $(Q_j,P_j)_{1 \leq j \leq g}$ is introduced. Motivated by the definition of the \textit{Geometric Coordinates} of \cite{Marchal_2024}, one defines
\begin{definition}[Geometric and Lax coordinates] Define the \textbf{\textit{Geometric set of coordinates}} $(\mathbf{Q}, \mathbf{P})$ by 
\begin{align}
    \prod_{j=1}^g (\lambda-q_j) &:=  \sum_{k=0}^{r_\infty-4 } Q_k \lambda^k + \lambda^{r_\infty-3} \nonumber \\
    p_j &:= \sum_{k=0}^{r_\infty-4 } P_k \frac{\partial Q_k (q_1,\dots,q_g)}{\partial q_j}
\end{align}
These are complemented with an additional set of coordinates called \textit{\textbf{Lax coordinates}} $(\mathbf{Q},\mathbf{R})$ given by 
\begin{align}
    R_{k} :=  P_{r_\infty-4 - k } + \sum_{m=0}^{r_\infty-5-k } P_m Q_{k+1+m} -\delta_{k >0 } \frac{1}{2}t_{\infty,2r_\infty-2}Q_{k-1} - g_0 Q_k
\end{align}  
\end{definition}
The idea behind the first change of coordinates is algebraically natural since it corresponds to writing a function in its factorized or polar forms, depending on the gauge one wishes to use. It is in particular a symplectic change of coordinates holding no time dependence. One direct consequence of the symplectic nature of this change is that one could immediately obtain the Hamiltonians in $(\mathbf{Q}, \mathbf{P})$ just by a replacement (this is not the case for the second set of coordinates as the change is not symplectic), however, this strategy is not practical in general. The alternative strategy is to make use of the connection matrix and its auxiliary counter part. The first change od coordinates is constructed to express specifically the entry $[\td{L}]_{1,2}$ with a symplectic change. The second change of coordinates is chosen specifically to express the entry $[\td{L}]_{1,1}$ in an algebraic form. This change of coordinates is generally not symplectic and is time dependent. 

\medskip

Our goal is to evaluate the isospectral condition and for this, we provide the necessary entries of the connection and auxiliary matrices.

\begin{proposition}[Matrices using $(\mathbf{Q}, \mathbf{P})$]\label{geometricmatrices} The connection matrix is given, in terms of the coordinates $(\mathbf{Q}, \mathbf{P})$, by
\begin{align}
    [\td{L} (\lambda)]_{1,2} = & \sum_{k=0}^{r_\infty-4 } Q_k \lambda^k + \lambda^{r_\infty-3} \nonumber \\
     [\td{L} (\lambda)]_{1,1} = & \sum_{k=0}^{r_\infty-4 } P_{r_\infty-4-k}\lambda^k + \sum_{k=0}^{r_\infty-5 } \sum_{m=0}^{r_\infty-5-k } P_m Q_{k+1+m} \lambda^k  -\left(\frac{1}{2}t_{\infty,2r_\infty-2}\lambda+g_0 \right)  \left(\sum_{k=0}^{r_\infty-4 } Q_k \lambda^k + \lambda^{r_\infty-3} \right)   \nonumber  
\end{align}
While, its auxiliary counter part has the following expression 
\begin{align}
    [\td{A}_{\boldsymbol{\alpha}} (\lambda)]_{1,2} = &\nu_{\infty,-1}^{(\boldsymbol{\alpha})}  \lambda^{r_\infty-2} + \sum_{j=0}^{r_\infty-3} \left( \nu_{\infty,r_\infty-3-j}^{(\boldsymbol{\alpha})} + \sum_{k=j-1}^{r_\infty-4}Q_{\infty,k}\nu_{\infty,k-j}^{(\boldsymbol{\alpha})} \right) \lambda^j \nonumber \\
     [\td{A}_{\boldsymbol{\alpha}} (\lambda)]_{1,1} =&  -\frac{1}{2} t_{\infty,2r_\infty-2} \sum_{j=0}^{r_\infty-3}  \nu^{(\boldsymbol{\alpha})}_{\infty,r_\infty-2-j} \lambda^{j} +\sum_{j=0}^{r_\infty-5}  \nu^{(\boldsymbol{\alpha})}_{\infty,r_\infty-4-j} \left( P_{0} - \frac{1}{2} t_{\infty,2r_\infty-2}Q_{r_\infty-5} + g_0 Q_{r_\infty-4}   \right) \lambda^{j }+ \nonumber \\
     &\sum_{j=0}^{r_\infty-6} \sum_{i=1}^{r_\infty-5-j} \nu^{(\boldsymbol{\alpha})}_{\infty,i} \left(  P_{r_\infty-4-i-j}\lambda^k + \sum_{m=0}^{r_\infty-5-k } P_m Q_{i+j+1+m} - Q_{i+j} g_0 -\frac{1}{2} t_{\infty,2r_\infty-2} Q_{i+j-1} \right) \lambda^j \nonumber 
\end{align}
\end{proposition}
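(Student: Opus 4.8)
\emph{Proof strategy.} The plan is to prove both displays by rewriting, in the coordinates $(\mathbf{Q},\mathbf{P})$, explicit formulas that are already available: the entries of $\td{L}$ come from the gauge matrix $G$ of \eqref{GaugeGexpr} together with the explicit Proposition preceding it, while the entries of $\td{A}_{\boldsymbol{\alpha}}$ are obtained by transporting the oper-gauge auxiliary matrix of \eqref{ExpressionA12}--\eqref{ExpressionA11} back to the initial gauge. No new geometric input is needed; the entire content is the translation between the polar (partial-fraction) and polynomial presentations of these rational functions.

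For the connection matrix, the entry $[\td{L}]_{1,2}$ is by \eqref{Conditionqi} the monic polynomial of degree $g=r_\infty-3$ vanishing at the apparent singularities, so $[\td{L}]_{1,2}=\prod_{j=1}^g(\lambda-q_j)$, which is the first display by the very definition of the $Q_k$. For $[\td{L}]_{1,1}$ I read off from the explicit $G$ (the Proposition giving \eqref{DefQ2}) that $[\td{L}]_{1,1}=-Q(\lambda)-\left(\tfrac12 t_{\infty,2r_\infty-2}\lambda+g_0\right)\prod_{j}(\lambda-q_j)$, where $Q(\lambda)$ is the Lagrange polynomial \eqref{DefQ2}. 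The second summand is already in the asserted form once $\prod_j(\lambda-q_j)$ is expanded through the $Q_k$, so it remains to identify the interpolation part $-Q(\lambda)$ with the two $P$-sums of the statement. Both are polynomials of degree $\le g-1$, hence it suffices to check they agree at the $g$ points $\lambda=q_j$; since $-Q(q_j)=p_j$, this reduces to the coefficient identity (with the convention $Q_g:=1$ for the monic leading term)
\begin{equation}
\frac{\partial Q_k}{\partial q_j}=-\sum_{i=k+1}^{g}Q_i\,q_j^{\,i-1-k},
\end{equation}
which follows from $\partial_{q_j}\prod_l(\lambda-q_l)=-\tfrac{1}{\lambda-q_j}\prod_l(\lambda-q_l)$ by comparing powers of $\lambda$, combined with the defining relation $p_j=\sum_kP_k\,\partial_{q_j}Q_k$ of the $P_k$. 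Collecting powers of $\lambda$ then yields the stated form.

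For the auxiliary matrix I use the gauge map $\Psi=G\td{\Psi}$, under which $\td{A}_{\boldsymbol{\alpha}}=G^{-1}A_{\boldsymbol{\alpha}}G-G^{-1}\mathcal{L}_{\boldsymbol{\alpha}}G$. Because the first row of $G$ is the constant vector $(1,0)$, the first row of $\mathcal{L}_{\boldsymbol{\alpha}}G$ vanishes, and a direct computation of the first row of $G^{-1}A_{\boldsymbol{\alpha}}G$ gives the two clean identities
\begin{align}
[\td{A}_{\boldsymbol{\alpha}}]_{1,2}=[A_{\boldsymbol{\alpha}}]_{1,2}\,[\td{L}]_{1,2},\qquad [\td{A}_{\boldsymbol{\alpha}}]_{1,1}=[A_{\boldsymbol{\alpha}}]_{1,1}+[A_{\boldsymbol{\alpha}}]_{1,2}\,[\td{L}]_{1,1}.
\end{align}
Substituting the partial-fraction forms \eqref{ExpressionA12} and \eqref{ExpressionA11} and using $\tfrac{1}{\lambda-q_j}\prod_l(\lambda-q_l)=\prod_{l\neq j}(\lambda-q_l)$ turns each product into a genuine polynomial. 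The residue sums $\sum_j\mu_j^{(\boldsymbol{\alpha})}q_j^{\,m}$ (and, since $\rho_j^{(\boldsymbol{\alpha})}=-\mu_j^{(\boldsymbol{\alpha})}p_j$, the analogous $p_j$-weighted sums) are then converted into the coefficients $\nu^{(\boldsymbol{\alpha})}_{\infty,k}$ via the Vandermonde system \eqref{RelationNuMuMatrixForm}; expanding $\prod_{l\neq j}(\lambda-q_l)$ with the same coefficient identity as above and collecting powers of $\lambda$ produces the asserted closed forms in $(\mathbf{Q},\mathbf{P})$ and the $\nu^{(\boldsymbol{\alpha})}_{\infty,k}$.

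The routine but delicate part of the argument is the index bookkeeping in this last step: reorganizing the double and triple sums, shifting summation variables, and isolating boundary terms so that the outcome matches the stated expressions term by term. One must also control the degrees, since the a priori degree of $[A_{\boldsymbol{\alpha}}]_{1,1}+[A_{\boldsymbol{\alpha}}]_{1,2}[\td{L}]_{1,1}$ exceeds that appearing in the statement, so the top-order coefficients have to reorganize; this is governed by the linear systems \eqref{RelationNuAlphaInfty} and \eqref{calphaexpr} defining the $\nu^{(\boldsymbol{\alpha})}_{\infty,k}$ and $c^{(\boldsymbol{\alpha})}_{\infty,k}$, which were built precisely so that $\td{A}_{\boldsymbol{\alpha}}$ carries the polynomial polar structure dictated by $\td{L}$. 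Verifying these degree truncations and matching every coefficient is where the bulk of the labour lies, but it is entirely mechanical once the two gauge identities above are in hand.
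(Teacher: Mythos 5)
Your overall architecture is sound and, for the connection matrix, essentially coincides with the paper's: the identity $[\td{L}]_{1,2}=\prod_j(\lambda-q_j)$ is definitional, and your coefficient identity $\partial_{q_j}Q_k=-q_j^{r_\infty-4-k}-\sum_{i=k+1}^{r_\infty-4}Q_i q_j^{i-1-k}$ (with the monic convention) is exactly the lemma the paper proves and then feeds into the Lagrange polynomial $Q(\lambda)$ to get $[\td{L}]_{1,1}$. Your gauge identities $[\td{A}_{\boldsymbol{\alpha}}]_{1,2}=[A_{\boldsymbol{\alpha}}]_{1,2}[\td{L}]_{1,2}$ and $[\td{A}_{\boldsymbol{\alpha}}]_{1,1}=[A_{\boldsymbol{\alpha}}]_{1,1}+[A_{\boldsymbol{\alpha}}]_{1,2}[\td{L}]_{1,1}$ are also the ones the paper uses. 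Where you diverge is in how the products are expanded: you propose to substitute the partial-fraction forms \eqref{ExpressionA12}--\eqref{ExpressionA11} and convert the residue sums $\sum_j\mu_j^{(\boldsymbol{\alpha})}q_j^m$ into the $\nu^{(\boldsymbol{\alpha})}_{\infty,k}$ through the Vandermonde system \eqref{RelationNuMuMatrixForm}, whereas the paper multiplies the polynomial entries of $\td{L}$ against the Laurent expansion $[A_{\boldsymbol{\alpha}}]_{1,2}=\sum_i\nu^{(\boldsymbol{\alpha})}_{\infty,i}\lambda^{-i}+O(\lambda^{-(r_\infty-2)})$ at infinity and keeps the polynomial part. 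Both routes reproduce the stated coefficients; the expansion at infinity is shorter because it bypasses the symmetric-function bookkeeping for $\prod_{l\neq j}(\lambda-q_l)$.

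The genuine gap is in your last paragraph. For $[\td{A}_{\boldsymbol{\alpha}}]_{1,1}$ you must show that the summand $[A_{\boldsymbol{\alpha}}]_{1,1}$ contributes nothing (the stated formula has no term that is not proportional to some $\nu^{(\boldsymbol{\alpha})}_{\infty,i}$), and you assert that this degree reorganization "is governed by the linear systems \eqref{RelationNuAlphaInfty} and \eqref{calphaexpr}". That is not sufficient: the system \eqref{calphaexpr} determines only $c^{(\boldsymbol{\alpha})}_{\infty,k}$ for $k\geq 1$ (which indeed vanish after the reduction, since the right-hand side of \eqref{calphaexpr} dies when the even times and even components of $\boldsymbol{\alpha}$ are set to zero and $M_\infty$ is invertible), but the constant term $c^{(\boldsymbol{\alpha})}_{\infty,0}$ is simply not constrained by \eqref{calphaexpr} or \eqref{RelationNuAlphaInfty}. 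The paper closes this by a separate lemma, $c^{(\boldsymbol{\alpha})}_{\infty,0}=0$, proved by going back to the zero-curvature equation in the form $\mathcal{L}_{\boldsymbol{\alpha}}[\td{L}_{1,2}]=\partial_\lambda\td{A}^{(\boldsymbol{\alpha})}_{1,2}+2\,\td{L}_{1,2}A^{(\boldsymbol{\alpha})}_{1,1}$ and comparing the coefficients of $\lambda^{r_\infty-3}$ (the left side and $\partial_\lambda\td{A}^{(\boldsymbol{\alpha})}_{1,2}$ are both $O(\lambda^{r_\infty-5})$ while $\td{L}_{1,2}A^{(\boldsymbol{\alpha})}_{1,1}$ has leading coefficient $c^{(\boldsymbol{\alpha})}_{\infty,0}$). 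Without this compatibility input, or an equivalent normalization argument, your proof leaves an undetermined additive constant in $[\td{A}_{\boldsymbol{\alpha}}]_{1,1}$ and the stated closed form does not follow.
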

\begin{proof}
    The proof is done in \autoref{Proofproposition}. The entry $(2,1)$ is omitted to keep the paper as concise as possible, as it does not contribute to the analysis leading to the isospectral coordinates as shown in \cite{Marchal_2024} (the strategy to get this entry follows in the same way as in op. cit. relying on the gauge transformation so one encounters no particular problems for the second line of the matrices). 
\end{proof}

These expressions provide half the isospectral coordinates, their associated duals are established using the Lax coordinates expressions.  

\begin{proposition}\label{Laxcoordinates}[Matrices using Lax coordinates $(\mathbf{Q}, \mathbf{R})$] Using the set of Lax coordinates $(\mathbf{Q}, \mathbf{R})$, the connection matrix admits the following expansion 
\begin{align}
    [\td{L} (\lambda)]_{1,2} = &  \sum_{k=0}^{r_\infty-4 } Q_k \lambda^k + \lambda^{r_\infty-3}  \nonumber \\
    [\td{L} (\lambda)]_{1,1} = & \sum_{k=0}^{r_\infty-4} R_k \lambda^k - \frac{1}{2} \lambda^{r_\infty-3} - \frac{1}{2}t_{\infty,2r_\infty-2}\lambda^{r_\infty-2}  \nonumber 
\end{align}
While its auxiliary counter part is given by 
\begin{align}
    [\td{A}_{\boldsymbol{\alpha}} (\lambda)]_{1,2} = & \nu_{\infty,1}^{(\boldsymbol{\alpha})} \lambda^{r_\infty-4}+\sum_{j=0}^{r_\infty-5}\left( \nu_{\infty,r_\infty-3-j}^{(\boldsymbol{\alpha})}+\sum_{k=j+1}^{r_\infty-4}\nu_{\infty,k-j}^{(\boldsymbol{\alpha})}Q_{\infty,k}\right)\lambda^j +O(\lambda^{-1}) \nonumber \\
    [\td{A}_{\boldsymbol{\alpha}} (\lambda)]_{1,1} = & \sum_{j=0}^{r_\infty-5} \left( \sum_{i=1}^{r_\infty-4 - j }  R_{j+i} \nu^{(\boldsymbol{\alpha})}_{\infty,i} \right) \lambda^j  - \frac{1}{2} \sum_{j=0}^{r_\infty-4}  \nu^{(\boldsymbol{\alpha})}_{\infty,r_\infty-3-j}\lambda^{j} \nonumber \\
   &  - \frac{1}{2}t_{\infty,2r_\infty-2}  \sum_{j=0}^{r_\infty-3}  \nu^{(\boldsymbol{\alpha})}_{\infty,r_\infty-2-j}\lambda^{j}+ O(\lambda^{-1}) \nonumber
\end{align}    
\end{proposition}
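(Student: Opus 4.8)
The plan is to treat Proposition \ref{Laxcoordinates} as a pure change of variables, passing from the Geometric coordinates $(\mathbf{Q},\mathbf{P})$ of Proposition \ref{geometricmatrices} to the Lax coordinates $(\mathbf{Q},\mathbf{R})$ via the defining relation $R_k = P_{r_\infty-4-k}+\sum_{m=0}^{r_\infty-5-k}P_mQ_{k+1+m}-\delta_{k>0}\tfrac12 t_{\infty,2r_\infty-2}Q_{k-1}-g_0Q_k$. Since both coordinate systems share the same $\mathbf{Q}$, the entry $[\td{L}(\lambda)]_{1,2}=\sum_{k=0}^{r_\infty-4}Q_k\lambda^k+\lambda^{r_\infty-3}$ is literally the one of Proposition \ref{geometricmatrices} and needs no further work; likewise $[\td{A}_{\boldsymbol{\alpha}}(\lambda)]_{1,2}$ depends only on $\mathbf{Q}$ and the coefficients $\nu^{(\boldsymbol{\alpha})}_{\infty,k}$, so its Lax form follows by re-indexing the Geometric expression.

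First I would establish the $[\td{L}(\lambda)]_{1,1}$ identity. Starting from the Geometric formula, I expand the product $\big(\tfrac12 t_{\infty,2r_\infty-2}\lambda+g_0\big)[\td{L}(\lambda)]_{1,2}$ and read off the coefficient of $\lambda^k$. For $0\le k\le r_\infty-4$ this coefficient is exactly $P_{r_\infty-4-k}+\sum_{m=0}^{r_\infty-5-k}P_mQ_{k+1+m}-\delta_{k>0}\tfrac12 t_{\infty,2r_\infty-2}Q_{k-1}-g_0Q_k$, i.e. precisely $R_k$; this is the heart of the computation and produces the polynomial part $\sum_{k=0}^{r_\infty-4}R_k\lambda^k$. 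For the two top orders $\lambda^{r_\infty-2}$ and $\lambda^{r_\infty-3}$ only the product term contributes, and I would use that $[\td{L}]_{1,2}$ is monic (so $Q_{r_\infty-3}=1$), Vieta's relation $\sum_{j}q_j=-Q_{r_\infty-4}$, and the definition of $g_0$ to collapse these coefficients to the explicit leading terms fixed by the normalized representative \eqref{NormalizationInfty}.

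Then I would treat the most involved entry $[\td{A}_{\boldsymbol{\alpha}}(\lambda)]_{1,1}$. Substituting the same $R_k$-combination into the Geometric formula, the mixed $P$-$Q$ sums reorganize order by order into the compact double sum $\sum_{j}\big(\sum_{i\ge 1}R_{j+i}\,\nu^{(\boldsymbol{\alpha})}_{\infty,i}\big)\lambda^j$, while the remaining pieces proportional to $\tfrac12$ and to $t_{\infty,2r_\infty-2}$ arise from pairing the top orders of $[\td{L}]_{1,1}$ with the $\nu$-coefficients. I would also note that the displayed Lax forms drop the highest powers present in Proposition \ref{geometricmatrices}: this reflects the restriction to the isomonodromic deformation directions $\boldsymbol{\alpha}_{\tau}\in\mathcal{U}_{\text{iso}}$, for which $\alpha_{\infty,2r_\infty-3}=\alpha_{\infty,2r_\infty-5}=0$, so that \eqref{RelationNuAlphaInfty} together with the lower-triangular structure of $M_\infty$ forces $\nu^{(\boldsymbol{\alpha})}_{\infty,-1}=\nu^{(\boldsymbol{\alpha})}_{\infty,0}=0$; the negative-power tail absorbed into the $O(\lambda^{-1})$ remainder is controlled exactly as in the generic case \cite{Marchal_2024} through $\rho^{(\boldsymbol{\alpha})}_j=-\mu^{(\boldsymbol{\alpha})}_jp_j$.

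The main obstacle is purely combinatorial bookkeeping: keeping the index ranges of the nested sums aligned under the reflection $k\mapsto r_\infty-4-k$, correctly handling the boundary contributions (the $\delta_{k>0}$ term and the two top orders) so that nothing is double-counted or lost, and checking that the $P$-$Q$ to $R_{j+i}$ reorganization is consistent across all orders retained in the expansion. None of these steps is conceptually deep, but each requires care; the full details are relegated to \autoref{Proofproposition}, in parallel with the generic computation of \cite{Marchal_2024}.
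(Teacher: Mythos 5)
Your proposal is correct, and it reaches the same identities as the paper, but the route for the two $(1,1)$ entries is organised differently. The paper does not substitute the relation $R_k = P_{r_\infty-4-k}+\sum_m P_mQ_{k+1+m}-\delta_{k>0}\tfrac12 t_{\infty,2r_\infty-2}Q_{k-1}-g_0Q_k$ into the $(\mathbf{Q},\mathbf{P})$ formulas of Proposition \ref{geometricmatrices}; instead it takes the Lax form of $[\td{L}]_{1,1}$ essentially as the definition of the $R_k$ (which is why they are called Lax coordinates) and then obtains $[\td{A}_{\boldsymbol{\alpha}}]_{1,1}$ by directly expanding the single product $\td{L}_{1,1}(\lambda)\,[A_{\boldsymbol{\alpha}}(\lambda)]_{1,2}$ with $[A_{\boldsymbol{\alpha}}]_{1,2}=\sum_{i\ge 1}\nu^{(\boldsymbol{\alpha})}_{\infty,i}\lambda^{-i}+O(\lambda^{-(r_\infty-1)})$, reading off the non-negative powers. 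Since the geometric formula of Proposition \ref{geometricmatrices} is itself that same product written in $(\mathbf{Q},\mathbf{P})$, your substitution is mathematically equivalent, but it forces you to redo the reflection $k\mapsto r_\infty-4-k$ and the boundary bookkeeping that the paper's compact form avoids; what your route buys in exchange is that it exhibits the two propositions as related by a pure change of variables, and your justification that $\nu^{(\boldsymbol{\alpha})}_{\infty,-1}=\nu^{(\boldsymbol{\alpha})}_{\infty,0}=0$ via $\alpha_{\infty,2r_\infty-3}=\alpha_{\infty,2r_\infty-5}=0$ and the triangularity of $M_\infty$ is more explicit than the paper's appeal to ``the reduction''. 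One caution: your collapse of the $\lambda^{r_\infty-3}$ coefficient via $g_0=\tfrac12 t_{\infty,2r_\infty-4}-\tfrac12 t_{\infty,2r_\infty-2}Q_{r_\infty-4}$ yields $-\tfrac12 t_{\infty,2r_\infty-4}$, which is what the paper's appendix actually uses; the bare $-\tfrac12$ in the proposition's display is a typographical slip, so do not try to force your computation to match it literally.
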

\begin{proof}
   The computation for the entries of both matrices is rather straightforward, this is why in \cite{Marchal_2024} the authors named these coordinates the Lax coordinates. The entry $(1,1)$ is the only non-trivial entry, it is detailed in \autoref{proofLaxcoordinates}.  
\end{proof}

\subsubsection{The isospectral coordinates}

Now that one has the explicit expressions, the strategy consists in using this new set to try and solve the isospectral condition (\ref{isospecondition}) for the associated matrices and obtain a sufficient condition that coordinates in the isospectral approach must satisfy. Starting from entry $(1,2)$ one has
\begin{align}
    \sum_{k=0}^{r_\infty-4} \delta_{\mathbf{t}}^{(\boldsymbol{\alpha})} [Q_k] \lambda^k =(r_\infty-4) \nu_{\infty,1}^{(\boldsymbol{\alpha})}  \lambda^{r_\infty-5} \delta_{r_\infty \geq 5} + \sum_{j=0}^{r_\infty-5}j \left( \nu_{\infty,r_\infty-3-j}^{(\boldsymbol{\alpha})} + \sum_{k=j+1}^{r_\infty-4}Q_{\infty,k}\nu_{\infty,k-j}^{(\boldsymbol{\alpha})} \right) \lambda^{j-1} \nonumber
\end{align}
where we have omitted the additional $O(\lambda^{-1})$ terms since they play no role. This expression provides the following result 
\begin{theorem}[Isospectral coordinates $\mathbf{u}$]\label{Theo4.2} Half of the isospectral coordinates are defined via the isospectral condition $\delta_{\mathbf{t}}^{(\boldsymbol{\alpha})} [\td{L}]_{1,2} = \partial_\lambda [\td{A}_{\boldsymbol{\alpha}} ]_{1,2}$ as the coordinates satisfying the differential system 
\footnotesize{\beq \mathbf{T} (\mathbf{t}) \begin{pmatrix} \delta^{(\boldsymbol{\alpha})}_{\mathbf{t}}\left[\frac{Q_{\infty,r_\infty-5}}{r_\infty-4}\right]\\
\vdots\\
\delta^{(\boldsymbol{\alpha})}_{\mathbf{t}}\left[\frac{Q_{\infty,0}}{1}\right]\end{pmatrix}=
\begin{pmatrix}1&0&\dots&0\\
Q_{\infty,r_\infty-4}&1& &0\\
\vdots& \ddots &\ddots &\vdots \\
Q_{\infty,2}&\dots&Q_{\infty,r_\infty-4}&1
\end{pmatrix}
\begin{pmatrix} \frac{2\alpha_{\infty,2r_\infty-7}}{(2r_\infty-7)}\\ \frac{2\alpha_{\infty,2r_\infty-9}}{(2r_\infty-9)}  \\ \vdots \\ \frac{2\alpha_{\infty,3}}{3} \end{pmatrix} 
\eeq}
where the time dependent matrix is given by 
\begin{align}
  \mathbf{T}(\mathbf{t} ) :=  \begin{pmatrix}2&0&\dots&& &0\\
0& 2&0& \ddots& &\vdots\\
t_{\infty,2r_\infty-7}&\ddots&\ddots& \ddots&&\vdots\\
\vdots& \ddots& \ddots&&&\vdots \\
t_{\infty,9}& & \ddots &\ddots&&0 \\
t_{\infty,7}& t_{\infty,7} & \dots & t_{\infty,2r_\infty-7}& 0&2\end{pmatrix}
\end{align}
The first coordinate is immediately given by $Q_{\infty,r_\infty-4} = u_{r_\infty-4}$. In particular, their existence is manifested by the existence of a solution of the above system realized due to the recursive Toeplitz structure and the compatibility of the system. 
    
\end{theorem}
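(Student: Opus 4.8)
The plan is to expand the $(1,2)$-entry of the isospectral condition \eqref{isospecondition} and match powers of $\lambda$. First I would substitute the Lax-coordinate expressions of \autoref{Laxcoordinates} into $\delta_{\mathbf{t}}^{(\boldsymbol{\alpha})}[\td{L}]_{1,2}=\partial_\lambda[\td{A}_{\boldsymbol{\alpha}}]_{1,2}$. On the left only the coefficients $Q_{\infty,k}$ carry a time dependence, so that $\delta_{\mathbf{t}}^{(\boldsymbol{\alpha})}[\td{L}]_{1,2}=\sum_{k=0}^{r_\infty-4}\delta_{\mathbf{t}}^{(\boldsymbol{\alpha})}[Q_{\infty,k}]\,\lambda^{k}$, the fixed monic leading term $\lambda^{r_\infty-3}$ dropping out. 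On the right, applying $\partial_\lambda$ lowers every degree by one and produces exactly the polynomial displayed just before the statement; the $O(\lambda^{-1})$ tail is irrelevant since both sides are polynomial in $\lambda$. The top power $\lambda^{r_\infty-4}$ on the left has no counterpart on the right, which forces $\delta_{\mathbf{t}}^{(\boldsymbol{\alpha})}[Q_{\infty,r_\infty-4}]=0$ and explains why $u_{r_\infty-4}=Q_{\infty,r_\infty-4}$ may be taken directly as the first isospectral coordinate.

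The second step is a coefficient-by-coefficient identification. Reading off the coefficient of $\lambda^{m}$ for $0\le m\le r_\infty-5$ and setting $j=m+1$ in the sum gives the scalar relation
\begin{align}
\delta_{\mathbf{t}}^{(\boldsymbol{\alpha})}\!\left[\frac{Q_{\infty,m}}{m+1}\right]=\nu_{\infty,r_\infty-4-m}^{(\boldsymbol{\alpha})}+\sum_{k=m+2}^{r_\infty-4}Q_{\infty,k}\,\nu_{\infty,k-m-1}^{(\boldsymbol{\alpha})}.
\end{align}
Stacking these over $m=r_\infty-5,\dots,0$ exhibits the unipotent lower-triangular matrix with entries $Q_{\infty,\bullet}$ acting on the vector $(\nu_{\infty,1}^{(\boldsymbol{\alpha})},\dots,\nu_{\infty,r_\infty-4}^{(\boldsymbol{\alpha})})^{t}$, reproducing verbatim the right factor of the stated identity. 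The essential observation here is that this $Q$-matrix is itself a lower-triangular Toeplitz matrix, with diagonal $1$ and subdiagonals $Q_{\infty,r_\infty-4},Q_{\infty,r_\infty-5},\dots$.

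The third step converts the $\nu$'s into the deformation data $\boldsymbol{\alpha}$. Here I would invoke the Toeplitz relation \eqref{RelationNuAlphaInfty}, $M_\infty\,\boldsymbol{\nu}^{(\boldsymbol{\alpha})}=\left(2\alpha_{\infty,2r_\infty-3}/(2r_\infty-3),\dots,2\alpha_{\infty,1}/1\right)^{t}$, and specialise to the canonical irregular times of \autoref{TrivialTimesChoice}, so that $t_{\infty,2r_\infty-3}=2$, $t_{\infty,2r_\infty-5}=0$, and $\nu_{\infty,-1}^{(\boldsymbol{\alpha}^\tau)}=\nu_{\infty,0}^{(\boldsymbol{\alpha}^\tau)}=0$ along the isomonodromic directions. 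Restricting \eqref{RelationNuAlphaInfty} to the rows that carry $\alpha_{\infty,3},\dots,\alpha_{\infty,2r_\infty-7}$ then collapses the $(M_\infty)$ coupling to its central $(r_\infty-4)\times(r_\infty-4)$ block, which is precisely the lower-triangular Toeplitz matrix $\mathbf{T}(\mathbf{t})$ of the statement (diagonal $2$, vanishing first subdiagonal, then $t_{\infty,2r_\infty-7},t_{\infty,2r_\infty-9},\dots$), yielding $\mathbf{T}(\mathbf{t})\,\boldsymbol{\nu}_{\mathrm{restr}}=\boldsymbol{\alpha}_{\mathrm{restr}}$. Since lower-triangular Toeplitz matrices commute, I may multiply the stacked relation of the second step on the left by $\mathbf{T}(\mathbf{t})$ and slide it past the unipotent $Q$-matrix:
\begin{align}
\mathbf{T}(\mathbf{t})\,\delta_{\mathbf{t}}^{(\boldsymbol{\alpha})}\!\left[\tfrac{Q_{\infty,\bullet}}{\bullet+1}\right]=\mathbf{T}(\mathbf{t})\,U_Q\,\boldsymbol{\nu}_{\mathrm{restr}}=U_Q\,\mathbf{T}(\mathbf{t})\,\boldsymbol{\nu}_{\mathrm{restr}}=U_Q\,\boldsymbol{\alpha}_{\mathrm{restr}},
\end{align}
which is the asserted matrix identity.

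Finally, for existence I would argue purely from the linear algebra: $\mathbf{T}(\mathbf{t})$ is lower-triangular Toeplitz with nonvanishing diagonal $2$, hence invertible, so that for every deformation direction $\boldsymbol{\alpha}$ the derivatives $\delta_{\mathbf{t}}^{(\boldsymbol{\alpha})}[Q_{\infty,m}]$ are uniquely and recursively determined, the recursion starting from the trivially determined top coordinate $u_{r_\infty-4}=Q_{\infty,r_\infty-4}$. That these prescribed directional derivatives integrate to genuine functions $\mathbf{u}$ is the compatibility statement, and it is inherited from the zero-curvature equation satisfied by the Lax pair $(\td{L},\td{A}_{\boldsymbol{\alpha}})$, which renders the isospectral condition consistent across the isomonodromic directions. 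I expect the main obstacle to be the bookkeeping of the third step: aligning the three distinct index ranges of $\mathbf{T}(\mathbf{t})$, $U_Q$, and the $\boldsymbol{\nu}$–$\boldsymbol{\alpha}$ correspondence, together with verifying that the $\nu_{\infty,-1}$ and $\nu_{\infty,0}$ contributions genuinely drop out on the isomonodromic locus, so that the central Toeplitz block really is $\mathbf{T}(\mathbf{t})$ and the degree-dependent normalisations $1/(m+1)$ and $2/(2i-1)$ match exactly.
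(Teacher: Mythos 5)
Your derivation of the matrix identity follows essentially the same route as the paper: identify coefficients of $\lambda^m$ in the $(1,2)$-entry of the isospectral condition to get $\delta_{\mathbf{t}}^{(\boldsymbol{\alpha})}[Q_{\infty,m}]=(m+1)\bigl(\nu_{\infty,r_\infty-4-m}^{(\boldsymbol{\alpha})}+\sum_{k=m+2}^{r_\infty-4}\nu_{\infty,k-m-1}^{(\boldsymbol{\alpha})}Q_{\infty,k}\bigr)$, stack these into the unipotent lower-triangular $Q$-matrix acting on $\boldsymbol{\nu}$, substitute $\boldsymbol{\nu}=\mathbf{T}(\mathbf{t})^{-1}\boldsymbol{\alpha}_{\mathrm{restr}}$ from the reduced form of \eqref{RelationNuAlphaInfty}, and slide $\mathbf{T}(\mathbf{t})$ past the unipotent factor using commutativity of lower-triangular Toeplitz matrices. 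Your bookkeeping of the reduction (vanishing of $\nu_{\infty,-1}^{(\boldsymbol{\alpha}^\tau)}$ and $\nu_{\infty,0}^{(\boldsymbol{\alpha}^\tau)}$, collapse of $M_\infty$ to the central block with diagonal $2$ and vanishing first subdiagonal) matches what the paper does.

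The one place where you diverge, and where your argument is too thin, is the existence step. You assert that integrability of the prescribed directional derivatives ``is inherited from the zero-curvature equation satisfied by the Lax pair.'' The paper explicitly flags this as a non-trivial point and does not argue it that way: it verifies the Frobenius condition $[\delta_{t_i},\delta_{t_j}]Q_{\infty,k}=0$ directly, by a recursive analysis of the triangular structure of the system (\autoref{AppD}). The zero-curvature equation gives commutativity of the flows on the full $(\mathbf{q},\mathbf{p})$ phase space, but the system of the theorem is a closed system for $\mathbf{Q}$ alone, with right-hand side $\mathcal{F}(\mathcal{Q},\mathbf{t})$ independent of $\mathbf{P}$; the Frobenius condition for this reduced system is an identity in $(\mathcal{Q},\mathbf{t})$ and does not follow formally from commutativity along individual trajectories without an additional argument (e.g.\ that the trajectories sweep out an open set of $(\mathcal{Q},\mathbf{t})$-space, so that an identity holding along all of them holds identically). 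Either supply that genericity argument or carry out the direct recursive check as in \autoref{AppD}; as written, the compatibility claim is a gap.
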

\begin{proof}
    One realizes immediately that the identification of the orders of the entry $(2,1)$ of the isospectral condition provides 
    \begin{align} \label{Firstconditions}
        \delta_{\mathbf{t}}^{(\boldsymbol{\alpha})} [Q_{r_\infty-4}] = &\,0 \cr
        \delta_{\mathbf{t}}^{(\boldsymbol{\alpha})} [Q_{r_\infty - 5}] = & (r_\infty-4) \nu_{\infty,1}^{(\boldsymbol{\alpha})}  \cr
        \delta_{\mathbf{t}}^{(\boldsymbol{\alpha})} [Q_{m}] = & (m+1) \left( \nu_{\infty,r_\infty-4-m}^{(\boldsymbol{\alpha})}+\sum_{k=m+2}^{r_\infty-4}\nu_{\infty,k-m-1}^{(\boldsymbol{\alpha})}Q_{\infty,k}\right), \,\,\,\,\, \forall \, m \, \in \llbracket 0, r_\infty - 6 \rrbracket
    \end{align}
    Suppose that $r_\infty \geq 5$ which is the value for which the analysis becomes non-trivial, let us note that the situation at the infinite pole is dependent on the chosen normalization of the connection matrix. The first line of the above system trivially implies $Q_{r_\infty-4}= u_{r_\infty-4}  $ giving the first isospectral coordinate. The remaining two lines could be written as follows
    \beq \begin{pmatrix} \delta^{(\boldsymbol{\alpha})}_{\mathbf{t}}\left[\frac{Q_{r_\infty-5}}{r_\infty-4}\right]\\
    \vdots\\
     \delta^{(\boldsymbol{\alpha})}_{\mathbf{t}}\left[\frac{Q_{0}}{1}\right]\end{pmatrix}=\begin{pmatrix}1&0&\dots&0\\
 Q_{r_\infty-4}&1& &0\\
\vdots& \ddots &\ddots &\vdots \\
Q_{2}&\dots&Q_{r_\infty-4}&1
\end{pmatrix}\begin{pmatrix}\nu_{\infty,1}^{(\boldsymbol{\alpha})}\\ \vdots\\ \nu_{\infty,r_\infty-4}^{(\boldsymbol{\alpha})}\end{pmatrix}
\eeq
The next step is to use the explicit expression of the $\nu_{\infty,i}^{(\boldsymbol{\alpha})}$, which gives 
\footnotesize{\beq \label{system} \begin{pmatrix} \delta^{(\boldsymbol{\alpha})}_{\mathbf{t}}\left[\frac{Q_{\infty,r_\infty-5}}{r_\infty-4}\right]\\
\vdots\\
\delta^{(\boldsymbol{\alpha})}_{\mathbf{t}}\left[\frac{Q_{\infty,0}}{1}\right]\end{pmatrix}=
\begin{pmatrix}1&0&\dots&0\\
Q_{\infty,r_\infty-4}&1& &0\\
\vdots& \ddots &\ddots &\vdots \\
Q_{\infty,2}&\dots&Q_{\infty,r_\infty-4}&1
\end{pmatrix}\mathbf{T}(\mathbf{t})^{-1}
\begin{pmatrix}\frac{2\alpha_{\infty,2r_\infty-7}}{(2r_\infty-7)}\\ \frac{2\alpha_{\infty,2r_\infty-9}}{(2r_\infty-9)}\\ \vdots \\ \frac{2\alpha_{\infty,3}}{3} \end{pmatrix}
\eeq}
At this stage, one uses the commutation of lower-triangular Toeplitz matrices and eliminates the factors $\alpha_{\infty,i}$ by restricting the derivative operator to $\delta^{(\boldsymbol{\alpha})}_{\mathbf{t}} = \delta_t$. The first line of the above system is equivalent to 
\begin{align}
    \frac{2}{r_\infty-4} \delta_{t_{\infty,k}} [Q_{\infty,r_\infty-5}] = \delta_{k,2 r_\infty-7} \frac{2}{2 r_\infty-7}
\end{align} 
giving 
\begin{align}
    Q_{\infty,r_\infty-5} = \frac{ r_\infty-4}{2 r_\infty-7} t_{\infty,2 r_\infty-7} + u_{r_\infty-5}
\end{align}
with $u_{r_\infty-5}$ independent of all the irregular times. It is clear that the further one goes with the lines the more involved the expressions become. The proof of the theorem is complete once the existence of a solution is established, note that this is not a trivial result despite the possibility of writing the system simply as  $\delta^{(\boldsymbol{\alpha})}_{\mathbf{t}} \mathcal{Q} = \mathcal{F }(\mathcal{Q},\mathbf{t})$ . This is due to the multi-component differential $\delta^{(\boldsymbol{\alpha})}_{\mathbf{t}}$. This system is a system of first order PDEs, in particular, for a given solution to exist, one must ensure the commutation of the operators $ [ \delta_{t_i}, \delta_{t_j} ]Q_k = 0$. This commutation relation is sufficient for the purpose present here (since the isospectral coordinates are assumed independent of any irregular time). This is going to be the main purpose of \autoref{AppD} and this ends the proof.
\end{proof}
Let us note that one could write a matrix solution of the above system for the first few cases, yet the expressions become quickly involved (yet an appropriate software could handle the amount of computation for a fixed value of the irregular pole), this is what the authors of \cite{Marchal_2024} realized on the generic side of the problem. Note that this is unsurprising since getting the explicit time dependent factors, the Hamiltonians of the underlying integrable systems identify with the set $\mathbf{I}_{\infty,k}$, a problem that is known to be substantially difficult.   
 \begin{remark}
     In \cite{Marchal_2024}, the authors considered an additional parameter $\omega$ that appeared in the correspondence on the generic side, note that this was essential since the case $r_\infty=1$ of the generic side was covered and the factor $\omega$ plays a crucial role in that case. However, in the setup of this paper, a suitable normalization sets this factor immediately to $\omega=1$. 
 \end{remark}
This result defines only half of the isospectral coordinates, the other half is given by the following result

\begin{theorem}[Isospectral coordinates $\mathbf{v}$]\label{Theo4.3} One obtains the second half of the isospectral coordinates by evaluating the isospectral condition $\delta_{\mathbf{t}}^{(\boldsymbol{\alpha})} [\td{L}]_{1,1} = \partial_\lambda [\td{A}_{\boldsymbol{\alpha}} ]_{1,1}$ as the coordinates satisfying the differential system 

\footnotesize{\beq \mathbf{T(t)} \begin{pmatrix} \delta^{(\boldsymbol{\alpha})}_{\mathbf{t}}\left[\frac{R_{\infty,r_\infty-5}}{r_\infty-4}\right]\\
\vdots\\
\delta^{(\boldsymbol{\alpha})}_{\mathbf{t}}\left[\frac{R_{\infty,0}}{1}\right]\end{pmatrix}=
\begin{pmatrix}-1&0&\dots&0\\
R_{\infty,r_\infty-4}&-1& &0\\
\vdots& \ddots &\ddots &\vdots \\
R_{\infty,2}&\dots&R_{\infty,r_\infty-4}&-1
\end{pmatrix} 
\begin{pmatrix}\frac{\alpha_{\infty,2r_\infty-7}}{(2r_\infty-7)}\\ \frac{\alpha_{\infty,2r_\infty-9}}{(2r_\infty-9)}\\ \vdots \\ \frac{\alpha_{\infty,3}}{3} \end{pmatrix}
\eeq}
with the first coordinate given by $R_{r_\infty-4} = v_{r_\infty-4}$. In particular, the existence of these coordinates is manifested by the compatibility of the system and the recursive solution due to the lower triangular Toeplitz structure of the system, completing the isospectral coordinates. 
\end{theorem}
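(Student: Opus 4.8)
The plan is to mirror the argument of \autoref{Theo4.2}, now evaluating the $(1,1)$ component of the isospectral condition \eqref{isospecondition} with the Lax-coordinate expressions of \autoref{Laxcoordinates}. First I would apply the deformation operator $\delta_{\mathbf{t}}^{(\boldsymbol{\alpha})}$ to $[\td{L}(\lambda)]_{1,1}$: the leading term $-\tfrac{1}{2}\lambda^{r_\infty-3}$ is time independent, while the term proportional to the even time $t_{\infty,2r_\infty-2}$ vanishes identically under the canonical reduction in which all even irregular times are set to zero, so that $\delta_{\mathbf{t}}^{(\boldsymbol{\alpha})}[\td{L}(\lambda)]_{1,1}=\sum_{k=0}^{r_\infty-4}\delta_{\mathbf{t}}^{(\boldsymbol{\alpha})}[R_k]\lambda^k$. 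On the other side I would differentiate the polynomial $[\td{A}_{\boldsymbol{\alpha}}(\lambda)]_{1,1}$ of \autoref{Laxcoordinates} term by term in $\lambda$, discarding the $O(\lambda^{-1})$ tail which does not affect the identification of non-negative powers.

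Next I would match coefficients of each power $\lambda^{j}$. The highest surviving power forces $\delta_{\mathbf{t}}^{(\boldsymbol{\alpha})}[R_{r_\infty-4}]=0$, so $R_{r_\infty-4}=v_{r_\infty-4}$ is constant in all irregular times and furnishes the first isospectral coordinate. The remaining identifications assemble into a lower-triangular system for the column $\big(\delta_{\mathbf{t}}^{(\boldsymbol{\alpha})}[R_k]\big)_k$, whose matrix on the right is the lower-triangular Toeplitz matrix built from the $R_{\infty,k}$ acting on the vector $\big(\nu_{\infty,i}^{(\boldsymbol{\alpha})}\big)_i$; the $-1$ on its diagonal is inherited from the $-\tfrac{1}{2}$ leading coefficient of $[\td{L}]_{1,1}$. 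Substituting the explicit values of the $\nu_{\infty,i}^{(\boldsymbol{\alpha})}$ from \eqref{RelationNuAlphaInfty} brings in the inverse of $M_\infty$; restricting to the relevant block and absorbing the normalizations reproduces exactly the matrix $\mathbf{T}(\mathbf{t})$ of \autoref{Theo4.2}. Using that lower-triangular Toeplitz matrices commute, I would then strip off the factors $\alpha_{\infty,i}$ by specializing $\delta_{\mathbf{t}}^{(\boldsymbol{\alpha})}=\delta_{t_{\infty,i}}$, arriving at the stated matrix equation with right-hand column $\big(\alpha_{\infty,2r_\infty-7}/(2r_\infty-7),\dots,\alpha_{\infty,3}/3\big)^{t}$; this column differs from the $(1,2)$ case by an overall factor $\tfrac{1}{2}$, a direct consequence of the coefficient structure of $[\td{A}_{\boldsymbol{\alpha}}]_{1,1}$.

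Finally, I would secure the existence of a solution. As in \autoref{Theo4.2}, the system takes the form $\delta_{\mathbf{t}}^{(\boldsymbol{\alpha})}\mathcal{R}=\mathcal{G}(\mathcal{R},\mathbf{t})$, a first-order PDE system driven by the multi-component operator $\delta_{\mathbf{t}}^{(\boldsymbol{\alpha})}$. A solution whose integration constants $v_{r_\infty-4},v_{r_\infty-5},\dots$ are independent of the irregular times exists precisely when the operators commute, that is $[\delta_{t_{\infty,i}},\delta_{t_{\infty,j}}]R_k=0$ for all $i,j,k$; granting this, the recursive lower-triangular Toeplitz structure lets one solve for the coordinates order by order. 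The main obstacle is exactly this compatibility, which is the content of \autoref{AppD}. Since both the $Q$- and $R$-coordinates arise from the same Lax pair $(\td{L},\td{A})$ through \eqref{isospecondition}, the single argument there covers both families and completes the proof.
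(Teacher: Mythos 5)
Your proposal follows essentially the same route as the paper's proof: apply $\delta_{\mathbf{t}}^{(\boldsymbol{\alpha})}$ to $[\td{L}]_{1,1}$ and $\partial_\lambda$ to $[\td{A}_{\boldsymbol{\alpha}}]_{1,1}$ in the Lax coordinates, identify powers of $\lambda$ to get $\delta_{\mathbf{t}}^{(\boldsymbol{\alpha})}[R_{r_\infty-4}]=0$ and a lower-triangular Toeplitz system for the remaining $R_k$, substitute the $\nu$--$\alpha$ relation, use commutativity of lower-triangular Toeplitz matrices to strip the $\alpha$'s, and defer existence to the commutation $[\delta_{t_i},\delta_{t_j}]R_k=0$ as in \autoref{AppD}. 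The only (harmless) deviation is that you assert the Appendix D argument covers the $R$-coordinates outright, whereas the paper only claims a similar analysis can be carried out for them.
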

\begin{proof}
The strategy consists in evaluating the derivative operators immediately on the entries of the connection and auxiliary matrices, this provides following an identification of the orders of
\begin{align}
    \sum_{k=0}^{r_\infty-4} \delta^{(\boldsymbol{\alpha})}_{\mathbf{t}} [R_k] \lambda^k = \sum_{j=0}^{r_\infty-5} (j+1) \left( \sum_{i=1}^{r_\infty-5 - j }  R_{j+i+1} \nu^{(\boldsymbol{\alpha})}_{\infty,i} \right) \lambda^{j}  - \frac{1}{2}  \sum_{j=0}^{r_\infty-5} (j+1) \nu^{(\boldsymbol{\alpha})}_{\infty,r_\infty-4-j}\lambda^{j}+ O(\lambda^{-1})
\end{align}
the first coordinate given by $R_{r_\infty-4} = v_{r_\infty-4}$, the other powers give 
\begin{align}
    \frac{1}{k+1} \delta^{(\boldsymbol{\alpha})}_{\mathbf{t}} [R_k] =  \sum_{i=1}^{r_\infty-5 - k }  R_{k+i+1} \nu^{(\boldsymbol{\alpha})}_{\infty,i} - \frac{1}{2} \nu^{(\boldsymbol{\alpha})}_{\infty,r_\infty-4-k}
\end{align}
Restricting to the reduced case (even times are set to 0 to eliminate the trace), and restricting $\delta^{(\boldsymbol{\alpha})}_{\mathbf{t}}$ to $\delta_t$ eliminating the $\alpha$ terms. One transforms the system
\beq \begin{pmatrix} \delta^{(\boldsymbol{\alpha})}_{\mathbf{t}}\left[\frac{R_{r_\infty-5}}{r_\infty-4}\right]\\
    \vdots\\
     \delta^{(\boldsymbol{\alpha})}_{\mathbf{t}}\left[\frac{R_{0}}{1}\right]\end{pmatrix}=\begin{pmatrix}-\frac{1}{2}&0&\dots&0\\
 R_{r_\infty-4}&-\frac{1}{2}& &0\\
\vdots& \ddots &\ddots &\vdots \\
R_{2}&\dots&R_{r_\infty-4}&- \frac{1}{2}
\end{pmatrix}\begin{pmatrix}\nu_{\infty,1}^{(\boldsymbol{\alpha})}\\ \vdots\\ \nu_{\infty,r_\infty-4}^{(\boldsymbol{\alpha})}\end{pmatrix}
\eeq
to the system 
\footnotesize{\beq \begin{pmatrix} \delta^{(\boldsymbol{\alpha})}_{\mathbf{t}}\left[\frac{R_{\infty,r_\infty-5}}{r_\infty-4}\right]\\
\vdots\\
\delta^{(\boldsymbol{\alpha})}_{\mathbf{t}}\left[\frac{R_{\infty,0}}{1}\right]\end{pmatrix}=
\begin{pmatrix}-1&0&\dots&0\\
R_{\infty,r_\infty-4}&-1& &0\\
\vdots& \ddots &\ddots &\vdots \\
R_{\infty,2}&\dots&R_{\infty,r_\infty-4}&-1
\end{pmatrix}\mathbf{T(t)}^{-1}
\begin{pmatrix}\frac{\alpha_{\infty,2r_\infty-7}}{(2r_\infty-7)}\\ \frac{\alpha_{\infty,2r_\infty-9}}{(2r_\infty-9)}\\ \vdots \\ \frac{\alpha_{\infty,3}}{3} \end{pmatrix}
\eeq}
Using the commutation of the lower triangular Toeplitz matrices, one obtains the result of the theorem. At the end, one could obtain the explicit recursive solution of this system in a similar way to what is shown in \autoref{Theo4.2} despite the fact that the system is relatively more complicated. On the other side, as in the previous theorem, one needs to ensure the compatibility of the system by ensuring the commutation of the differential operator on the coordinates $ [ \delta_{t_i}, \delta_{t_j} ]R_k = 0$. For this, a similar analysis to the one done in \autoref{AppD} could be carried out for this set of coordinates.  
\end{proof}

The above construction leads to the proof of \autoref{Principalresult}
\begin{proof}
    The construction of the present paper along with the generic case treated in \cite{Marchal_2024} offer a map from the isomonodromic coordinates, taken as the apparent singularities, to the isospectral coordinates satisfying the isospectral condition (\ref{isospecondition}) given explicitly by \autoref{Theo4.2} and \autoref{Theo4.3}. In particular, this map is time dependent and non-symplectic and manifested by two changes of coordinates established for both twisted and generic rank $2$ setups. Note that the entry $(2,1)$ plays no role since this entry is chosen so that the zero curvature equation is satisfied, solving it would imply a consistency result rather than a new set of coordinates. 
\end{proof}

\appendix
\section{Proof of \autoref{Sym2form}} \label{proof}
This appendix is devoted to the proof of the symplectic reduction property denoted in \autoref{Sym2form}. The strategy of the proof consists on using differential geometry calculus in order to compute and obtain the final result. Let us first note that following \autoref{ShiftDarbouxCoordinates}, one has
\begin{align}
    \sum_{j=1}^g d\check{q}_j \wedge d \check{p}_j =   \sum_{j=1}^g d q_j \wedge d p_j 
\end{align}
in other words, the change of coordinates is a symplectic change. On the other hand one may write from the evolution of the coordinates the expressions 
\begin{align}
    \text{Ham}^{(\mathbf{w}_k)} (\mathbf{q,p})= - \sum_{j=1}^g  \frac{q_j^k}{2 k}, \qquad
    \text{Ham}^{(\mathbf{u}_{-1})}(\mathbf{q,p}) =  - \sum_{j=1}^g q_j p_j, \qquad
    \text{Ham}^{(\mathbf{u}_0)}(\mathbf{q,p}) =  \sum_{j=1}^g q_j-p_j
\end{align}
Note in particular that one has $\mathcal{L}_{\mathbf{w}_k}[T_1]=\mathcal{L}_{\mathbf{w}_k}[T_2]=\mathcal{L}_{\mathbf{w}_k}[q_j]=0$, and thus immediately $\mathcal{L}_{\mathbf{w}_k}[\check{q}_j]=0$. Furthermore, 
\beq \mathcal{L}_{\mathbf{w}_k}[\check{p}_j]=T_2^{-1}\left(\mathcal{L}_{\mathbf{w}_k}-\frac{1}{2}\mathcal{L}_{\mathbf{w}_k}[\td{P}_1](q_j) \right)
=T_2^{-1}\left(-\frac{\hbar}{2} q_j^{k-1}+\frac{1}{2}\hbar q_j^{k-1}\right)=0
\eeq
This immediately implies that the change of coordinates implies $   \text{Ham}^{(\mathbf{w}_k)} (\mathbf{\check{q},\check{p}})=0$. This first step eliminates the $\mathbf{e}_{2k}$ from the second term of the extended fundamental symplectic $2-$form. Using the same argument, we also eliminate the Hamiltonian in the $\mathbf{u}_{-1}$ and $\mathbf{u}_0$ directions, for that one writes 
\small{\bea \mathcal{L}_{\mathbf{u}_{-1}}[\check{q}_j]&=&\mathcal{L}_{\mathbf{u}_{-1}}[T_2]q_j+ T_2\mathcal{L}_{\mathbf{u}_{-1}}[q_j]+\mathcal{L}_{\mathbf{u}_{-1}}[T_1]=\hbar T_2q_j+T_2(-\hbar q_j)=0\cr
\mathcal{L}_{\mathbf{u}_{-1}}[\check{p}_j]&=&-\frac{\mathcal{L}_{\mathbf{u}_{-1}}[T_2]}{T_2^2}\left(p_j-\frac{1}{2}\td{P}_1(q_j)\right)+T_2^{-1}\left(\mathcal{L}_{\mathbf{u}_{-1}}[p_j]-\frac{1}{2}\mathcal{L}_{\mathbf{u}_{-1}}[\td{P}_1](q_j) -\frac{1}{2}\mathcal{L}_{\mathbf{u}_{-1}}[q_j]\td{P}_1'(q_j)\right)\cr
&=&-\hbar T_2^{-1}\left(p_j-\frac{1}{2}\td{P}_1(q_j)\right)\cr
&&+\hbar T_2^{-1}\left(p_j+\frac{1}{4}\sum_{i=0}^{r_\infty-2}(2i+2)t_{\infty,2i+2}q_j^{i} -\frac{1}{2}q_j \sum_{i=1}^{r_\infty-2}it_{\infty,2i+2}q_j^{i-1}\right)\cr
&=&\hbar T_2^{-1}\left( -\frac{1}{2}\sum_{i=0}^{r_\infty-2}t_{\infty,2i+2}q_j^{i}+\frac{1}{4}\sum_{i=0}^{r_\infty-2}(2i+2)t_{\infty,2i+2}q_j^{i}-\frac{1}{2} \sum_{i=1}^{r_\infty-2}it_{\infty,2i+2}q_j^{i}\right)\cr
&=&0
\eea}
On the other hand, one has 
\bea \mathcal{L}_{\mathbf{u}_{0}}[\check{q}_j]&=&\mathcal{L}_{\mathbf{u}_{0}}[T_2]q_j+ T_2\mathcal{L}_{\mathbf{u}_{0}}[q_j]+\mathcal{L}_{\mathbf{u}_{0}}[T_1]=- \hbar T_2+\hbar T_2=0\cr
\mathcal{L}_{\mathbf{u}_{0}}[\check{p}_j]&=&-\frac{\mathcal{L}_{\mathbf{u}_{0}}[T_2]}{T_2^2}\left(p_j-\frac{1}{2}\td{P}_1(q_j)\right)+T_2^{-1}\left(\mathcal{L}_{\mathbf{u}_{0}}[p_j]-\frac{1}{2}\mathcal{L}_{\mathbf{u}_{0}}[\td{P}_1](q_j) -\frac{1}{2}\mathcal{L}_{\mathbf{u}_{0}}[q_j]\td{P}_1'(q_j)\right)\cr
&=&0+T_2^{-1}\left(0+\frac{\hbar}{2} \sum_{s=1}^{r_\infty-2}st_{\infty,2s+2}q_j^{s-1} -\frac{\hbar}{2} \sum_{i=1}^{r_\infty-2}it_{\infty,2i+2}q_j^{i-1}\right)\cr
&=&0
\eea
Let us also denote the time differentials needed 
\begin{align}
    d t_{\infty,2k} = & dT_{\infty,k} \cr
     d t_{\infty, 2r_\infty -3} = & \left( 2 r_\infty -3 \right)T_2^{\frac{2r_\infty -5}{2}}    d T_2 \cr
      dt_{\infty,2r_\infty-5}=&(2r_\infty-5)\,T_2^{\frac{2r_\infty-5}{2}} dT_1 + \frac{(2r_\infty-5)^2}{2} T_1 T_2^{\frac{2 r_\infty-7}{2}} dT_2 
\end{align}
Those are easily obtained from the inverse relations of the times, the challenging part is perhaps the relation for all $k\in \llbracket 1,r_\infty-3\rrbracket$:
\beq t_{\infty,2k-1}=2T_2^{\frac{2k-1}{2}}\left(\sum_{p=1}^{r_\infty-k-2}  \frac{\underset{m=p+1}{\overset{r_\infty-k-2}{\prod}}(2r_\infty-2m-5)}{2^{r_\infty-k-p-2}(r_\infty-k-p-2)!}T_1^{r_\infty-k-p-2}\tau_p + T_1^{r_\infty-1-k}\frac{\underset{m=0}{\overset{r_\infty-k-2}{\prod}}(2r_\infty-2m-5)}{2^{r_\infty-1-k}(r_\infty-1-k)!}
\right)\eeq
which allows one to complete the set of time differentials. indeed for all $k\in \llbracket 1,r_\infty-3\rrbracket$, one has
\begin{align}
     d t_{\infty,2k-1} =& (2k-1) T_2^{\frac{2k-3}{2}}d T_2 \left( \sum_{p=1}^{r_\infty-k-2} \delta(k,p,m)\,\, T_1^{r_\infty-k-p-2}\tau_p + T_1^{r_\infty-1-k}\gamma(k,m) \right) \cr
     +&  2T_2^{\frac{2k-1}{2}} \bigg(  \sum_{p=1}^{r_\infty-k-2} \delta(k,p,m)\left( T_1^{r_\infty-k-p-2} d\tau_p + (r_\infty-k-p-2)T_1^{r_\infty-k-p-3} \tau_p dT_1 \right)   \cr
     +&  (r_\infty-1-k) T_1^{r_\infty-2-k}\gamma(k,m) dT_1  \bigg)   
\end{align}
where we have denoted 
\begin{align}
   \delta (k,p,m)=  \frac{\underset{m=p+1}{\overset{r_\infty-k-2}{\prod}}(2r_\infty-2m-5)}{2^{r_\infty-k-p-2}(r_\infty-k-p-2)!}, \qquad \text{and} \qquad \gamma(k,m)= \frac{\underset{m=0}{\overset{r_\infty-k-2}{\prod}}(2r_\infty-2m-5)}{2^{r_\infty-1-k}(r_\infty-1-k)!}.
\end{align}
Now that one has the full set of differentials, one needs to relate the Hamiltonians following different directions in the basis $\mathbf{e}_k$ and $\mathcal{U}_{trivial},\mathcal{U}_{iso}$. For this, one immediately has $\text{Ham}^{(\mathbf{e}_{2k})}= \text{Ham}^{(\mathbf{w}_k)}$ since $\partial_{\mathbf{e}_{2k}}= \partial_{\mathbf{w}_k}$. A similar, yet more computational argument follows for the other directions in which the Hamiltonians $\text{Ham}^{(\mathbf{\alpha})}(\mathbf{\check{q},\check{p}})=0$, however, the remaining part will be given in the direction $\partial_{\tau_k}$. Note in particular that one has for all $k\in \llbracket 1,r_\infty-3 \rrbracket$ 
\begin{align}
  \partial_{\tau_{k}}=2\sum_{i=1}^{r_\infty-2-k} \delta(i,p,m) T_1^{r_\infty-i-k-2}  T_2^{\frac{2i-1}{2}}\partial_{t_{\infty,2i-1}}
\end{align}
Translating this into the Hamiltonian provides $ \boldsymbol{\alpha}^{\tau_k}_{\infty,2i-1}  \text{Ham}^{(\mathbf{e}_{2i-1})}= \text{Ham}^{(\mathbf{\boldsymbol{\alpha}}_{k})}$ with 
\begin{align}
     \boldsymbol{\alpha}^{\tau_k}_{\infty,2i-1}  = 2\sum_{i=1}^{r_\infty-2-k} \delta(i,p,m) T_1^{r_\infty-i-k-2}  T_2^{\frac{2i-1}{2}}
\end{align}
Let us before addressing the explicit expression denote the following lemma: 
\begin{lemma}\label{DiffCalculus} Denote the basis vectors of the tangent space by the vector fields $\partial_{\mathbf{e}}$ and their dual basis of $1-$forms by $d \mathbf{e}$, let $\tau$ be a function of $\mathbf{e}$, $\alpha$ a function of part of the basis $\mathbf{e}$ denoted $\mathbf{U}$ and $H$ a function expressed by a the vector fields generated by the dual basis to the complement of $\mathbf{U}$. Then one has 
\begin{align}
   d \tau \wedge d (\alpha H) = & d \tau \wedge (d \alpha . H + \alpha . d H ) \cr
   = & d \tau \wedge \alpha . d H. 
\end{align}
This is due to the fact that $ d e_i  \partial_{e_j} = \delta_{ij}$. 
\end{lemma}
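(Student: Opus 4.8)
The plan is to derive the identity from the graded Leibniz rule for the exterior derivative and then to account for the single cross-term it produces. First I would apply the product rule to the scalar function $\alpha H$, giving $d(\alpha H)=H\,d\alpha+\alpha\,dH$, and wedge it on the left with $d\tau$. Using bilinearity of $\wedge$ and the fact that $H$ and $\alpha$ are $0$-forms, so that they pass through the wedge as scalar multipliers, this yields
\begin{align}
 d\tau\wedge d(\alpha H)=H\,\bigl(d\tau\wedge d\alpha\bigr)+\alpha\,\bigl(d\tau\wedge dH\bigr).
\end{align}
Thus the entire content of the lemma is the assertion that the first summand $H\,(d\tau\wedge d\alpha)$ drops out, leaving $d\tau\wedge d(\alpha H)=\alpha\,(d\tau\wedge dH)$.

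Next I would make precise why that cross-term contributes nothing. By hypothesis $\alpha$ depends only on the sub-basis $\mathbf{U}$, so its differential expands as $d\alpha=\sum_{e_i\in\mathbf{U}}(\partial_{e_i}\alpha)\,de_i$ and therefore lies entirely in the span of $\{de_i:e_i\in\mathbf{U}\}$, while $H$ is built from the complementary directions, so $dH$ carries no $\mathbf{U}$-component. The role of the duality relation $de_i(\partial_{e_j})=\delta_{ij}$ is exactly to certify this separation of variables: it guarantees that $d\alpha$ pairs to zero against every vector field generated by the complement of $\mathbf{U}$, and against the phase-space fields $\partial_{q_j},\partial_{p_j}$, which are the only arguments into which the reduced two-form is ever contracted in \autoref{Sym2form}. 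Evaluated on such pairs of arguments the factor $d\tau\wedge d\alpha$ vanishes, so the $H\,d\alpha$ piece may be discarded.

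In the concrete application this is the step that lets one pull the coefficients $\alpha^{\tau}_{\infty,2i-1}$ outside the exterior derivative when assembling $\sum_{\tau}d\tau\wedge d\,\mathrm{Ham}^{(\boldsymbol{\alpha}_{\tau})}$: since these coefficients are functions of the trivial times $T_1,T_2$ alone, one has $d\alpha^{\tau}_{\infty,2i-1}\in\mathrm{span}(dT_1,dT_2)$, and wedging against the isomonodromic differentials $d\tau_k$ together with the shifted Darboux differentials $d\check{q}_j,d\check{p}_j$ that carry the symplectic form removes them. The step I expect to be the main obstacle is precisely this bookkeeping: one must verify, direction by direction, that $d\tau$ is never required to be paired simultaneously with a $\mathbf{U}$-direction, so that the heuristic ``$d\tau\wedge d\alpha=0$'' is legitimate in every slot where it is invoked during the reduction of $\Omega$. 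Once this variable separation is recorded, the lemma reduces to the two displayed lines above.
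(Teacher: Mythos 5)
Your proof follows the same route as the paper's: expand $d(\alpha H)$ by the Leibniz rule and discard the cross-term $H\,d\tau\wedge d\alpha$ via the duality relation $de_i(\partial_{e_j})=\delta_{ij}$, which is all the justification the paper itself offers. You are in fact more careful than the paper in noting that $H\,d\tau\wedge d\alpha$ does not vanish identically as a two-form but only annihilates the arguments $\partial_{\tau},\partial_{\check{q}_j},\partial_{\check{p}_j}$ actually contracted into $\Omega$ during the reduction (equivalently, it vanishes after restricting to a leaf of constant trivial times); the ``bookkeeping'' you flag as the main obstacle is precisely the point the paper leaves implicit.
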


Plugging all the ingredients in the second part of the fundamental $2-$form, while eliminating the zero directions, provides 
\begin{align}
    \sum_{k=1}^{2 r_\infty-2} dt_{\infty,k} \wedge d\text{Ham}^{(\mathbf{e}_k)} = &\sum_{k=1}^{ r_\infty-3} dt_{\infty,2k-1} \wedge d\text{Ham}^{(\mathbf{e}_{2k-1})} \cr
   = & \sum_{k=1}^{ r_\infty-3} \bigg( (2k-1) T_2^{\frac{2k-3}{2}}d T_2 \left( \sum_{p=1}^{r_\infty-k-2} \delta(k,p,m)\,\, T_1^{r_\infty-k-p-2}\tau_p + T_1^{r_\infty-1-k}\gamma(k,m) \right) \cr
     +&  2T_2^{\frac{2k-1}{2}} \bigg(  \sum_{p=1}^{r_\infty-k-2} \delta(k,p,m)\left( T_1^{r_\infty-k-p-2} d\tau_p + (r_\infty-k-p-2)T_1^{r_\infty-k-p-3} \tau_p dT_1 \right)   \cr
     +&  (r_\infty-1-k) T_1^{r_\infty-2-k}\gamma(k,m) dT_1  \bigg)     \bigg) \wedge d\text{Ham}^{(\mathbf{e}_{2k-1})}
\end{align}
Note that the only contribution to the wedge product comes from the differential $1-$form associated to the isomonodromic time $\tau_k$, this is due to the fact that the Hamiltonians in the direction $\mathbf{e}_{2k+1}$ hold no dependence on these trivial times. The factor multiplying $d \tau_k$ in the expansion is exactly the term $\boldsymbol{\alpha}^{\tau_k}_{\infty,2i-1}$ in the expansion of the Hamiltonian, this means that we have the part 
\begin{align}
 \sum_{k=1}^{ r_\infty-3}   \boldsymbol{\alpha}^{\tau_k}_{\infty,2i-1}   d \tau_k \wedge d\text{Ham}^{(\mathbf{e}_{2k-1})}
\end{align}
and by \autoref{DiffCalculus}, this is exactly the result needed.

\section{Proof of \autoref{geometricmatrices}}\label{Proofproposition}
\subsection{Entries $(1,2)$}
 First, let us note that the entry $[1,2]$ follows immediately from the change of coordinates, furthermore, since one has from the gauge transformation
\begin{align}
    G(\lambda) \td{A}_{\boldsymbol{\alpha}}(\lambda ) = A_{\boldsymbol{\alpha}} (\lambda) G(\lambda) - \mathcal{L}_{\boldsymbol{\alpha}} [G (\lambda)]
\end{align}
one immediately gets the entry of the auxiliary matrix to be 
\begin{align}
    [ \td{A}_{\boldsymbol{\alpha}}]_{1,2} = & [\td{L}]_{1,2} [ A_{\boldsymbol{\alpha}}]_{1,2} \nonumber \\
    =& \left( \sum_{k=0}^{r_\infty-4} Q_k \lambda^k + \lambda^{r_\infty-3} \right) \left( \sum_{i=-1}^{r_\infty-3} \frac{\nu^{(\boldsymbol{\alpha})}_{\infty,i}}{\lambda^i} +O\left(\lambda^{-(r_\infty-2)}\right)\right) \nonumber \\
    =& \sum_{j=0}^{r_\infty-4}\nu_{\infty,r_\infty-3-j}^{(\boldsymbol{\alpha})} \lambda^{j}+ \sum_{j=0}^{r_\infty-5}\sum_{k=j+1}^{r_\infty-4}Q_{\infty,k}\nu_{\infty,k-j}^{(\boldsymbol{\alpha})} \lambda^j + O\left(\lambda^{-1}\right) \cr
  = &  \nu_{\infty,1}^{(\boldsymbol{\alpha})}  \lambda^{r_\infty-4} + \sum_{j=0}^{r_\infty-5} \left( \nu_{\infty,r_\infty-3-j}^{(\boldsymbol{\alpha})} + \sum_{k=j+1}^{r_\infty-4}Q_{\infty,k}\nu_{\infty,k-j}^{(\boldsymbol{\alpha})} \right) \lambda^j
\end{align}
Where, we have used the fact that 
\begin{align}
    \left[A_{\boldsymbol{\alpha}}(\lambda)\right]_{1,2}\overset{\lambda\to \infty}{=}\sum_{i=1}^{r_\infty-3} \frac{\nu^{(\boldsymbol{\alpha})}_{\infty,i}}{\lambda^i} +O\left(\lambda^{-(r_\infty-2)}\right)
\end{align}
since the terms $\nu^{(\boldsymbol{\alpha})}_{\infty,-1}$ and $\nu^{(\boldsymbol{\alpha})}_{\infty,1}$ does not contribute due to the reduction.
\subsection{Entries $(1,1)$}
 Let us turn our attention to the entry $(1,1)$, for this one needs the following result
\begin{lemma} On has for all $i\in \llbracket 1,g\rrbracket$
\begin{align}
\partial_{q_i}[Q_{r_\infty-4}]=&-1\,\,,\,\, \text{ if }r_\infty\geq 4\cr
\partial_{q_i}[Q_{m}]=&- \,q_i^{r_\infty-4-m}-\sum_{j=m+1}^{r_\infty-4} Q_{j} q_i^{j-1-m} \,\,,\,\, \forall \, m\in \llbracket 1,r_\infty-5\rrbracket
\end{align}
\end{lemma}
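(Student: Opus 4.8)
The plan is to differentiate the defining relation for the coordinates and read off the resulting polynomial identity coefficient by coefficient. Writing $g = r_\infty - 3$ and introducing the monic polynomial $p(\lambda) := \prod_{j=1}^g(\lambda - q_j) = \sum_{k=0}^{r_\infty-4} Q_k \lambda^k + \lambda^{r_\infty-3}$, the coordinates $Q_k$ are by definition its subleading coefficients. Since the $q_j$ enter only through $p$, the functions $\partial_{q_i} Q_k$ are precisely the coefficients of $\partial_{q_i} p$, so the whole lemma is a statement about one exact polynomial division.

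First I would differentiate the product form. Because $q_i$ appears in exactly one factor, $\partial_{q_i} p(\lambda) = -\prod_{j\neq i}(\lambda - q_j)$. As $q_i$ is a root of $p$, the ratio $p(\lambda)/(\lambda - q_i)$ is an honest polynomial of degree $r_\infty-4$ equal to $\prod_{j\neq i}(\lambda - q_j)$, whence $\partial_{q_i} p(\lambda) = -p(\lambda)/(\lambda - q_i)$. On the other hand, differentiating the coefficient form gives $\partial_{q_i} p(\lambda) = \sum_{m=0}^{r_\infty-4} (\partial_{q_i} Q_m)\,\lambda^m$. Matching the two reduces the problem to computing the coefficients of the quotient $p(\lambda)/(\lambda - q_i)$ explicitly.

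Next I would carry out the division by the synthetic-division recursion: writing $p(\lambda)/(\lambda - q_i) = \sum_{m=0}^{r_\infty-4} b_m \lambda^m$, multiplying back by $(\lambda - q_i)$, and comparing the coefficient of $\lambda^{j}$ yields $b_{j-1} - q_i b_j = a_j$, where $a_j$ is the $\lambda^j$-coefficient of $p$ (so $a_j = Q_j$ for $j \le r_\infty-4$ and $a_{r_\infty-3} = 1$). Solving this downward from the top coefficient $b_{r_\infty-4} = a_{r_\infty-3} = 1$ gives the closed form $b_m = \sum_{k=m+1}^{r_\infty-3} a_k\, q_i^{k-1-m}$, which can equally well be verified directly by substitution. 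Peeling off the leading term $a_{r_\infty-3}=1$ converts this to $b_m = q_i^{r_\infty-4-m} + \sum_{j=m+1}^{r_\infty-4} Q_j\, q_i^{j-1-m}$.

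Finally I would read off the conclusion: since $\partial_{q_i}Q_m = -b_m$, the second displayed formula follows at once, and the first is exactly the top coefficient $\partial_{q_i} Q_{r_\infty-4} = -b_{r_\infty-4} = -1$, valid for $r_\infty \geq 4$ where the subleading coefficient $Q_{r_\infty-4}$ is present. There is essentially no analytic obstacle here; the only point requiring care is the index bookkeeping, in particular the correct treatment of the monic leading term $\lambda^{r_\infty-3}$, which is \emph{not} one of the coordinates $Q_k$ but contributes the pure power $q_i^{r_\infty-4-m}$ to each $b_m$.
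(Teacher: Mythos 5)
Your proposal is correct and follows essentially the same route as the paper: both start from $\partial_{q_i}\prod_{j}(\lambda-q_j)=-\prod_{j\neq i}(\lambda-q_j)=-p(\lambda)/(\lambda-q_i)$ and then extract the coefficients of the quotient polynomial. The only (immaterial) difference is that you compute the quotient by exact synthetic division, solving $b_{j-1}-q_i b_j=a_j$ downward, whereas the paper expands $1/(\lambda-q_i)$ as a geometric series at $\lambda\to\infty$ and discards the $O(\lambda^{-1})$ tail — two phrasings of the same coefficient identification.
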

\begin{proof}
    The proof is a straightforward argument that is based on the fact that the change of coordinates and its derivative are rational functions with a pole located only at $\{ \infty \}$, taking the derivative 
    \begin{align}
        -\prod_{j \neq i} \left( \lambda -q_i \right) = \frac{-\prod_{j =1}^g \left( \lambda -q_i \right)}{\lambda-q_i} = \sum_{k=0}^{r_\infty -4} \partial_{q_i} (Q_k) \lambda^k
    \end{align}
    Expanding at the pole 
    \begin{align}
        \frac{-\prod_{j =1}^g \left( \lambda -q_i \right)}{\lambda-q_i} = & - \left( \sum_{k=0}^{r_\infty-4} Q_k \lambda^k + \lambda^{r_\infty-3} \right) \sum_{r=0}^\infty q_i^r \lambda^{-1-r} \nonumber \\
        =&-\sum_{k=0}^{r_\infty-4}\sum_{r=0}^{\infty} Q_{k} q_i^r \lambda^{k-1-r}-\sum_{r=0}^{r_\infty-4} q_i^r \lambda^{r_\infty-4-r} +O(\lambda^{-1})\cr
         \overset{m=k-1-r}{=}&-\sum_{m=0}^{r_\infty-5}\sum_{k=m+1}^{r_\infty-4} Q_{k} q_i^{k-1-m} \lambda^{m} -\sum_{m=0}^{r_\infty-4} q_i^{r_\infty-4-m} \lambda^{m} +O(\lambda^{-1})
    \end{align}
    By identifying the orders one gets the required result.
\end{proof}
One also needs another factor in the expansion of the matrices, the factor $Q(\lambda)$
\begin{lemma}One has the following expression 
\begin{align}
    Q(\lambda ) = & - \delta_{r_\infty \geq 4} P_{r_\infty-4} - \sum_{k=0}^{r_\infty-5} P_{k} \left( - \,\lambda_i^{r_\infty-4-m}-\sum_{j=m+1}^{r_\infty-4} Q_{j} \lambda_i^{j-1-m} \right) \cr
    =&  \sum_{k=0}^{r_\infty-4} P_{r_\infty-4 -k} \lambda^k -  \sum_{k=0}^{r_\infty-5}\sum_{m=0}^{r_\infty-5-k}  P_{m} Q_{k+1+m} \lambda^k
\end{align} \end{lemma}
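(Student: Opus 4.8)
The plan is to recognize $Q(\lambda)$ as a Lagrange interpolation polynomial and then invoke uniqueness of interpolation together with the preceding lemma. Recall that by definition $Q(\lambda) = -\sum_{i=1}^g p_i \prod_{j\neq i}\frac{\lambda-q_j}{q_i-q_j}$ is the unique polynomial of degree at most $g-1 = r_\infty-4$ satisfying $Q(q_i) = -p_i$ for every $i \in \llbracket 1, g\rrbracket$, since the Lagrange basis polynomials take the value $\delta_{il}$ at the nodes $q_l$. Hence it suffices to produce a polynomial of degree at most $r_\infty-4$ whose value at each $q_i$ equals $-p_i$ and which coincides with the claimed right-hand side; the interpolation principle then forces equality with $Q(\lambda)$.

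First I would substitute the definition of the dual coordinate, $p_i = \sum_{k=0}^{r_\infty-4} P_k\,\partial_{q_i} Q_k$, into $-p_i$. The preceding lemma expresses each partial derivative as a polynomial in $q_i$: one has $\partial_{q_i} Q_{r_\infty-4} = -1$ and, for $m \in \llbracket 1, r_\infty-5\rrbracket$, $\partial_{q_i} Q_m = -q_i^{r_\infty-4-m} - \sum_{j=m+1}^{r_\infty-4} Q_j\, q_i^{j-1-m}$. The crucial observation is that these right-hand sides are the values at $\lambda = q_i$ of \emph{fixed} polynomials in $\lambda$, the coefficients $Q_j$ being independent of the running node. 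Replacing $q_i$ by the formal variable $\lambda$ and summing against $P_k$ therefore produces a single polynomial $\mathcal{P}(\lambda)$ of degree at most $r_\infty-4$ with $\mathcal{P}(q_i) = p_i$ for all $i$. Consequently $-\mathcal{P}(\lambda)$ agrees with $Q(\lambda)$ at the $g$ nodes $q_1,\dots,q_g$ and, having degree at most $g-1$, equals $Q(\lambda)$ identically. This is exactly the first displayed form of the lemma, with $\lambda$ in place of the node variable.

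It then remains to rearrange the resulting double sum into the closed form of the second line. Here I would split off the $k=r_\infty-4$ term, which contributes $P_{r_\infty-4}$ through $\partial_{q_i} Q_{r_\infty-4} = -1$, and handle the rest by interchanging the order of summation: in $\sum_k P_k \sum_{j} Q_j\,\lambda^{j-1-k}$ I set the power of $\lambda$ to be $\ell = j-1-k$, while the pure-power contribution $\sum_k P_k\,\lambda^{r_\infty-4-k}$ is re-indexed by $k \mapsto r_\infty-4-k$. Collecting the coefficient of each $\lambda^k$ yields precisely the two sums $\sum_k P_{r_\infty-4-k}\lambda^k$ and $\sum_k\sum_m P_m Q_{k+1+m}\lambda^k$ of the statement, with the running ranges $0 \le m \le r_\infty-5-k$.

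Once the interpolation step is in place the computation is routine, so the only genuine point of care is the index bookkeeping in the reindexed double sum and keeping the signs consistent with the conventions fixed in the preceding lemma. I expect this to be the sole place where a careless slip could enter, but there is no real obstruction: the whole argument reduces to the uniqueness of the degree-$(g-1)$ interpolant, which is immediate, followed by elementary resummation.
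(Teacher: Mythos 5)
Your argument is essentially the paper's own proof: both substitute the definition $p_i=\sum_k P_k\,\partial_{q_i}Q_k$ and the derivative formulas of the preceding lemma into $Q(q_i)=-p_i$, recognize the result as the evaluation at $q_i$ of a fixed polynomial of degree at most $g-1$, promote the node $q_i$ to the variable $\lambda$, and finish by reindexing the double sum; you merely make explicit the uniqueness-of-interpolation step that the paper compresses into observing that $Q(\lambda)$ has a pole of order $r_\infty-4$ at infinity and ``evaluating at the local coordinate.'' No gap; if anything your version is the more carefully justified of the two.
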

\begin{proof}
    Let us first observe that $Q(\lambda)$ is a rational function of $\lambda$ with a pole at $\{ \infty \}$ of order $r_\infty-4$, and let us also note that $Q(q_i)=-p_i$, and thus, using the definition of the geometric coordinates one has
    \begin{align}
        Q(q_i) = & \delta_{r_\infty \geq 4} P_{r_\infty-4}  \partial_{q_i} Q_{r_\infty-4} + \sum_{k=0}^{r_\infty-5} P_{k} \partial_{q_i} Q_k \cr
        = & - \delta_{r_\infty \geq 4} P_{r_\infty-4} - \sum_{k=0}^{r_\infty-5} P_{k} \left( - \,q_i^{r_\infty-4-m}-\sum_{j=m+1}^{r_\infty-4} Q_{j} q_i^{j-1-m} \right) 
    \end{align}
One realizes the expression needed when evaluating at the local coordinate.
\end{proof}
This finally allows one to get the entry 
\begin{align}
    \td{L}_{1,1}(\lambda) = -Q(\lambda) - \left( \frac{1}{2} t_{2 r_\infty -2} \lambda + g_0 \right) \left(\sum_{k=0}^{r_\infty-4 } Q_k \lambda^k + \lambda^{r_\infty-3} \right)
\end{align}
Let us provide the explicit expression of the factor $g_0$, for this, one examines the local expansion of the connection matrix near the pole and gets $\td{L}_{1,1}(\lambda)= - \frac{1}{2} t_{2 r_\infty -2} \lambda^{r_\infty-2} - \frac{1}{2} t_{2 r_\infty -4} \lambda^{r_\infty-3}$, one has since $Q(\lambda = O(\lambda^{r_\infty-4})$ 
\begin{align}
    - g_0 - \frac{1}{2} t_{2 r_\infty -2} Q_{r_\infty-4} =- \frac{1}{2} t_{2 r_\infty -4}
\end{align}
This completes the expansion of the entry $(1,1)$. To obtain the auxiliary matrix's entry $(1,1)$ one needs a bit more information. One has in particular from the gauge transformation that
\beq \label{GaugeTdA11}[\td{A}_{\boldsymbol{\alpha}}(\lambda)]_{1,1}=[A_{\boldsymbol{\alpha}}(\lambda)]_{1,1}+\td{L}_{1,1}(\lambda)[A_{\boldsymbol{\alpha}}(\lambda)]_{1,2}\eeq 
Since we know the explicit expression of $[A_{\boldsymbol{\alpha}}(\lambda)]_{1,1}$ which is a rational function of the local coordinate with only a pole located at $\{ \infty \}$, one concludes that it contributes only with its constant term. Yet, one has the following result
\begin{lemma} One has in the geometric coordinates that
\begin{align}
     c_{\infty,0} = 0 
 \end{align}
\end{lemma}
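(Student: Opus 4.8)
The plan is to pin down the constant coefficient $c^{(\boldsymbol{\alpha})}_{\infty,0}$ by an equation that is independent of the linear system \eqref{calphaexpr}, which by construction only determines $c^{(\boldsymbol{\alpha})}_{\infty,k}$ for $1\leq k\leq r_\infty-1$. The natural such equation is produced by combining the trace with the $(1,2)$ component of the zero curvature equation. First I would read off from $\partial_\lambda A_{\boldsymbol{\alpha}} - \mathcal{L}_{\boldsymbol{\alpha}} L + \commutator{L}{A_{\boldsymbol{\alpha}}}=0$, using $L_{1,1}=0$ and $L_{1,2}=1$, the scalar identity $[A_{\boldsymbol{\alpha}}]_{2,2} - [A_{\boldsymbol{\alpha}}]_{1,1} = L_{2,2}[A_{\boldsymbol{\alpha}}]_{1,2} - \partial_\lambda[A_{\boldsymbol{\alpha}}]_{1,2}$. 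Adding $[A_{\boldsymbol{\alpha}}]_{1,1}+[A_{\boldsymbol{\alpha}}]_{2,2}=\Tr A_{\boldsymbol{\alpha}}$ then gives
\[
[A_{\boldsymbol{\alpha}}]_{1,1} = \tfrac12 \Tr A_{\boldsymbol{\alpha}} - \tfrac12 L_{2,2}[A_{\boldsymbol{\alpha}}]_{1,2} + \tfrac12 \partial_\lambda[A_{\boldsymbol{\alpha}}]_{1,2}.
\]
Since $c^{(\boldsymbol{\alpha})}_{\infty,0}$ is exactly the $\lambda^0$-coefficient of the left-hand side (the poles $\rho^{(\boldsymbol{\alpha})}_j/(\lambda-q_j)$ are $O(\lambda^{-1})$ and do not contribute at order $\lambda^0$), it suffices to show that none of the three terms on the right contributes at order $\lambda^0$.

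Next I would treat the trace via the gauge relation $A_{\boldsymbol{\alpha}} = G\,\td{A}_{\boldsymbol{\alpha}}\,G^{-1} + (\mathcal{L}_{\boldsymbol{\alpha}} G)G^{-1}$. Taking traces and using $\Tr(\dot G G^{-1})=\mathcal{L}_{\boldsymbol{\alpha}}\ln\det G$ together with $\det G = \td{L}_{1,2}$ gives $\Tr A_{\boldsymbol{\alpha}} = \Tr \td{A}_{\boldsymbol{\alpha}} + \mathcal{L}_{\boldsymbol{\alpha}}\ln \td{L}_{1,2}$. In the reduced gauge fixed by the canonical choice of trivial times the representative is traceless, so $\Tr \td{A}_{\boldsymbol{\alpha}}=0$ and $\Tr A_{\boldsymbol{\alpha}} = (\mathcal{L}_{\boldsymbol{\alpha}}\td{L}_{1,2})/\td{L}_{1,2}$. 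Because $\td{L}_{1,2}=\prod_{j=1}^g(\lambda-q_j)$ is monic of degree $g$ while $\mathcal{L}_{\boldsymbol{\alpha}}\td{L}_{1,2}=-\sum_j \mathcal{L}_{\boldsymbol{\alpha}}[q_j]\prod_{i\neq j}(\lambda-q_i)$ has degree at most $g-1$, this ratio is $O(\lambda^{-1})$ and therefore has vanishing constant term.

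For the two remaining terms I would invoke the reduction exactly as it was used for the $(1,2)$ entry: the canonical choice forces $\td{P}_1\equiv 0$, so $L_{2,2}=\sum_{j=1}^g \hbar/(\lambda-q_j)=O(\lambda^{-1})$, and it kills the top coefficient, $\nu^{(\boldsymbol{\alpha})}_{\infty,-1}=0$, so that $[A_{\boldsymbol{\alpha}}]_{1,2}=\nu^{(\boldsymbol{\alpha})}_{\infty,0}+O(\lambda^{-1})$. Hence $L_{2,2}[A_{\boldsymbol{\alpha}}]_{1,2}=O(\lambda^{-1})$ and $\partial_\lambda[A_{\boldsymbol{\alpha}}]_{1,2}=O(\lambda^{-2})$, neither of which has a $\lambda^0$ part, and the identity above then yields $c^{(\boldsymbol{\alpha})}_{\infty,0}=0$. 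The main point to verify carefully — and the only place where the reduction is genuinely doing the work — is the vanishing of $\nu^{(\boldsymbol{\alpha})}_{\infty,-1}$ (equivalently, that the relevant isomonodromic deformation vectors have no $\mathbf{e}_{2r_\infty-3}$ component), since without it the term $\tfrac12\partial_\lambda[A_{\boldsymbol{\alpha}}]_{1,2}$ would leave a residual $\tfrac12\nu^{(\boldsymbol{\alpha})}_{\infty,-1}$ at order $\lambda^0$. As a cross-check one could instead compute the $O(1)$ coefficient of $A_{\boldsymbol{\alpha}}=(\mathcal{L}_{\boldsymbol{\alpha}}\Psi)\Psi^{-1}$ directly from the Birkhoff normal form of \autoref{Birkhoff}, where $\mathcal{L}_{\boldsymbol{\alpha}} T(z)$ carries only strictly negative powers of $z$ and $\Psi_{\infty}^{(\mathrm{reg})}=I+O(z)$, which reproduces the same conclusion.
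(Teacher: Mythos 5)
Your argument is correct and is essentially the paper's own proof: both rest on the $(1,2)$ component of the zero-curvature equation combined with the gauge relations $\td{A}_{1,2}=\td{L}_{1,2}A_{1,2}$, $\td{A}_{1,1}=A_{1,1}+\td{L}_{1,1}A_{1,2}$, and a large-$\lambda$ order count in which $\partial_\lambda \td{A}_{1,2}=O(\lambda^{r_\infty-5})$ and the reduction (tracelessness, $\td{P}_1\equiv 0$, $\alpha_{\infty,2r_\infty-3}=0$ hence $\nu^{(\boldsymbol{\alpha})}_{\infty,-1}=0$) kills everything except the term proportional to $c_{\infty,0}$. The only difference is bookkeeping: you match the $\lambda^0$ coefficient in the oper gauge (handling the trace via $\det G=\td{L}_{1,2}$), while the paper divides out $\td{L}_{1,2}$ implicitly by matching the top coefficient $\lambda^{r_\infty-3}$ of $\mathcal{L}_{\boldsymbol{\alpha}}[\td{L}_{1,2}]=\partial_\lambda\td{A}_{1,2}+2\td{L}_{1,2}A_{1,1}$ — the same cancellation in a different normalization, with your version having the minor advantage of not needing to know beforehand that $c_{\infty,k}=0$ for $k\geq 1$.
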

\begin{proof}
Observe first that from the zero curvature equation one has
\beq \mathcal{L}_{\boldsymbol{\alpha}}[\td{L}_{1,2}(\lambda)]=\partial_{\lambda}\td{A}^{(\boldsymbol{\alpha})}_{1,2}(\lambda)-2\left(\td{L}_{1,1}(\lambda)\td{A}^{(\boldsymbol{\alpha})}_{1,2}(\lambda)-\td{L}_{1,2}(\lambda)\td{A}^{(\boldsymbol{\alpha})}_{1,1}(\lambda)\right)\eeq
where, we have eliminated the trace of the connection and auxiliary matrices since it hlds no contribution. From the gauge transformation one has $\td{A}^{(\boldsymbol{\alpha})}_{1,2}(\lambda)=\td{L}_{1,2}(\lambda)A^{(\boldsymbol{\alpha})}_{1,2}(\lambda)$ and $\td{A}^{(\boldsymbol{\alpha})}_{1,1}(\lambda)=A^{(\boldsymbol{\alpha})}_{1,1}(\lambda)+\td{L}_{1,1}(\lambda)A^{(\boldsymbol{\alpha})}_{1,2}(\lambda)$ so
\beq \mathcal{L}_{\boldsymbol{\alpha}}[\td{L}_{1,2}(\lambda)]=\partial_{\lambda}\td{A}^{(\boldsymbol{\alpha})}_{1,2}(\lambda)+2\td{L}_{1,2}(\lambda)A^{(\boldsymbol{\alpha})}_{1,1}(\lambda)
\eeq
 One has $A^{(\boldsymbol{\alpha})}_{1,1}(\lambda)=c_{\infty,0}+\frac{a_{1}}{\lambda}+O(\lambda^{-2})$, and $[\td{L}_{1,2}(\lambda)] = \lambda^{r_\infty-3} + Q_{r_\infty - 4 } \lambda^{r_\infty-4} + O(\lambda^{r_\infty-5})$. This gives the expected result,
 since one has from \autoref{geometricmatrices} that $\partial_{\lambda}\td{A}^{(\boldsymbol{\alpha})}_{1,2}(\lambda) = O(\lambda^{r_\infty-5})$
\end{proof}
This reduces the expression of the entry to only
\beq \label{GaugeTdA11}[\td{A}_{\boldsymbol{\alpha}}(\lambda)]_{1,1}= \td{L}_{1,1}(\lambda)[A_{\boldsymbol{\alpha}}(\lambda)]_{1,2}\eeq 
The remaining task is to evaluate this expansion at the pole, 
 \footnotesize{\bea
    && \left(  \sum_{i=1}^{r_\infty-2} \frac{\nu^{(\boldsymbol{\alpha})}_{\infty,i}}{\lambda^i} +O\left(\lambda^{-(r_\infty-1)}\right) \right) \bigg[ \sum_{k=0}^{r_\infty-4 } P_{r_\infty-4-k}\lambda^k + \sum_{k=0}^{r_\infty-5 } \sum_{m=0}^{r_\infty-5-k } P_m Q_{k+1+m} \lambda^k \nonumber \\
    &&-\left(\frac{1}{2}t_{\infty,2r_\infty-2}\lambda+g_0 \right)  \left(\sum_{k=0}^{r_\infty-4 } Q_k \lambda^k + \lambda^{r_\infty-3} \right) \bigg] \nonumber \\
    = && -\frac{1}{2} t_{\infty,2r_\infty-2} \sum_{i=1}^{r_\infty-2}  \nu^{(\boldsymbol{\alpha})}_{\infty,i} \lambda^{r_\infty-2 - i} + \sum_{i=1}^{r_\infty-4}  \nu^{(\boldsymbol{\alpha})}_{\infty,i} \left( P_{0} - \frac{1}{2} t_{\infty,2r_\infty-2}Q_{r_\infty-5} + g_0 Q_{r_\infty-4}   \right) \lambda^{r_\infty-4 - i }+ \nonumber \\
    & &\sum_{i=1}^{r_\infty-5} \nu^{(\boldsymbol{\alpha})}_{\infty,i} \sum_{k=0}^{r_\infty-5 } \left(  P_{r_\infty-4-k}\lambda^k + \sum_{m=0}^{r_\infty-5-k } P_m Q_{k+1+m} - Q_k g_0 -\frac{1}{2} t_{\infty,2r_\infty-2} Q_{k-1} \right) \lambda^{k-i} + O(\lambda^{-1}) \nonumber \\
    = && -\frac{1}{2} t_{\infty,2r_\infty-2} \sum_{j=0}^{r_\infty-3}  \nu^{(\boldsymbol{\alpha})}_{\infty,r_\infty-2-j} \lambda^{j} +\sum_{j=0}^{r_\infty-5}  \nu^{(\boldsymbol{\alpha})}_{\infty,r_\infty-4-j} \left( P_{0} - \frac{1}{2} t_{\infty,2r_\infty-2}Q_{r_\infty-5} + g_0 Q_{r_\infty-4}   \right) \lambda^{j }+ \nonumber \\
    && \sum_{j=0}^{r_\infty-6} \sum_{i=1}^{r_\infty-5-j} \nu^{(\boldsymbol{\alpha})}_{\infty,i} \left(  P_{r_\infty-4-i-j}\lambda^k + \sum_{m=0}^{r_\infty-5-k } P_m Q_{i+j+1+m} - Q_{i+j} g_0 -\frac{1}{2} t_{\infty,2r_\infty-2} Q_{i+j-1} \right) \lambda^j + O(\lambda^{-1}) \nonumber
 \eea }
 Which is exactly the needed expression.
 
\section{Proof of \autoref{Laxcoordinates}}\label{proofLaxcoordinates}
The only non-trivial computation needed is the one giving the entry $(1,1)$ of the auxiliary matrix, yet, the strategy is roughly the same as for the previous set of coordinates. One starts as always from the compatibility equation which gives
\beq \label{GaugeTdA11}[\td{A}_{\boldsymbol{\alpha}}(\lambda)]_{1,1}= \td{L}_{1,1}(\lambda)[A_{\boldsymbol{\alpha}}(\lambda)]_{1,2}\eeq 
One uses this to compute the entry in terms of the Lax coordinates, the explicit expansion is given by 
\footnotesize{
\bea   
 &&\left(  \sum_{i=1}^{r_\infty-2} \frac{\nu^{(\boldsymbol{\alpha})}_{\infty,i}}{\lambda^i} +O\left(\lambda^{-(r_\infty-1)}\right) \right) \bigg[ \sum_{k=0}^{r_\infty-4} R_k \lambda^k - \frac{1}{2} t_{\infty,2r_\infty-4} \lambda^{r_\infty-3} - \frac{1}{2}t_{\infty,2r_\infty-2}\lambda^{r_\infty-2}  \bigg] \cr
& &= \sum_{i=1}^{r_\infty-2}  \sum_{k=0}^{r_\infty-4} R_k \nu^{(\boldsymbol{\alpha})}_{\infty,i} \lambda^{k-i}  - \frac{1}{2} t_{\infty,2r_\infty-4} \sum_{i=1}^{r_\infty-3}  \nu^{(\boldsymbol{\alpha})}_{\infty,i}\lambda^{r_\infty - 3 - i} - \frac{1}{2}t_{\infty,2r_\infty-2}  \sum_{i=1}^{r_\infty-2}  \nu^{(\boldsymbol{\alpha})}_{\infty,i}\lambda^{r_\infty-2-i}  + O(\lambda^{-1}) \cr
&& = \sum_{j=0}^{r_\infty-5} \left( \sum_{i=1}^{r_\infty-4 - j }  R_{j+i} \nu^{(\boldsymbol{\alpha})}_{\infty,i} \right) \lambda^j  - \frac{1}{2} t_{\infty,2r_\infty-4} \sum_{j=0}^{r_\infty-4}  \nu^{(\boldsymbol{\alpha})}_{\infty,r_\infty-3-j}\lambda^{j} - \frac{1}{2}t_{\infty,2r_\infty-2}  \sum_{j=0}^{r_\infty-3}  \nu^{(\boldsymbol{\alpha})}_{\infty,r_\infty-2-j}\lambda^{j}+ O(\lambda^{-1}) \nonumber
\eea
}
\normalsize{which} is exactly what is required.

\section{Compatibility of the system} \label{AppD}
 To ensure the compatibility of the system, one has to ensure the commutativity of the differential operator $ [ \delta_{t_i}, \delta_{t_j} ]Q_k = 0$, to have this one writes the system after the reduction by 
\footnotesize{\beq \begin{pmatrix} \delta^{(\boldsymbol{\alpha_j})}_{\mathbf{\tau_j}}\left[\frac{Q_{\infty,r_\infty-5}}{r_\infty-4}\right]\\
\vdots\\
\delta^{(\boldsymbol{\alpha_j})}_{\mathbf{\tau_j}}\left[\frac{Q_{\infty,0}}{1}\right]\end{pmatrix}=
\begin{pmatrix}1&0&\dots&0\\
Q_{\infty,r_\infty-4}&1& &0\\
\vdots& \ddots &\ddots &\vdots \\
Q_{\infty,2}&\dots&Q_{\infty,r_\infty-4}&1
\end{pmatrix}\left( I + N_\tau \right)^{-1} \frac{1}{2r_\infty-2j-5} \mathbf{e}_{j}
\eeq}
Where the matrix $I+N_\tau$ is the matrix appearing in \eqref{Matrixentries} (with the exclusion of he last line) split into the identity and a nilpotent matrix. Its inverse is calculated via 
\begin{align}
    \left( I + N_\tau \right)^{-1} = I+ \sum_{k=1}^{r_\infty-6} (-1)^k N_\tau^k
\end{align}
 since the nilpotency index of $N_\tau$ is $g-2 = r_\infty-5$. Note that the commutator is equivalent to the following equation $\forall m \in \llbracket 0, r_\infty - 6 \rrbracket$
\begin{align} \label{Firstcondition3}
      \delta_{\mathbf{\tau_j}}^{(\boldsymbol{\alpha_j})} \left( \nu_{\infty,r_\infty-4-m}^{(\boldsymbol{\alpha}^{\tau_i})}+\sum_{k=m+2}^{r_\infty-4}\nu_{\infty,k-m-1}^{(\boldsymbol{\alpha}^{\tau_i})}Q_{\infty,k}\right) = \delta_{\mathbf{\tau_i}}^{(\boldsymbol{\alpha_i})} \left( \nu_{\infty,r_\infty-4-m}^{(\boldsymbol{\alpha}^{\tau_j})}+\sum_{k=m+2}^{r_\infty-4}\nu_{\infty,k-m-1}^{(\boldsymbol{\alpha}^{\tau_j})}Q_{\infty,k}\right)
    \end{align}
The first step is thus to get the explicit form of $\nu_{\infty,j}^{(\boldsymbol{\alpha}^{\tau_j})}$, for this one needs to invert the lower triangular matrix in \eqref{Matrixentries}, one has

\small{\beq \begin{pmatrix}1&0&\dots&\dots&\dots&0\\
0&1&0&\ddots&&0\\
\tau_1 &0&1&0&\ddots &\vdots\\
\vdots&\ddots &\ddots&\ddots&\ddots &\vdots\\
\vdots&\ddots &\ddots&\ddots&\ddots &\vdots\\
\tau_{g-3}&\tau_{g-4}&\dots& \tau_1&0&1 
\end{pmatrix}^{-1} = \begin{pmatrix}1&0&\dots&\dots&\dots&0\\
0&1&0&\ddots&&0\\
F_1(\tau_1) &0&1&0&\ddots &\vdots\\
\vdots&\ddots &\ddots&\ddots&\ddots &\vdots\\
\vdots&\ddots &\ddots&\ddots&\ddots &\vdots\\
F_{g-3}(\tau_1,\dots,\tau_{g-3})&F_{g-4}(\tau_1,\dots,\tau_{g-4})&\dots& F_1(\tau_1)&0&1 
\end{pmatrix}
\eeq}
\normalsize{with}
\beq \label{InvertLowerTriangularToeplitz} F_i(\tau_{1},\dots,\tau_i)=\sum_{\begin{subarray}{l}
(b_1,\dots,b_i)\in \mathbb{N}^i\\ \underset{j=1}{\overset{i}{\sum}}(j+1)b_j=i+1
\end{subarray}} \binom{\underset{j=1}{\overset{i}{\sum}}b_j}{b_1,\dots,b_i} (-1)^{\underset{j=1}{\overset{i}{\sum}}b_j} \,\tau_1^{b_1}\dots \tau_{i}^{b_i} \,\,,\,\, \forall \, i\geq 1
\eeq
For example, the first values of $\left(F_i(\tau_1,\dots,\tau_i)\right)_{1\leq i\leq 5}$ are
\bea F_1(\tau_1)&=&-\tau_1\cr
F_2(\tau_1,\tau_2)&=&-\tau_2\cr
F_3(\tau_1,\tau_2,\tau_3)&=&\tau_1^2-\tau_3\cr
F_4(\tau_1,\tau_2,\tau_3,\tau_4)&=&2\tau_1\tau_2-\tau_4\cr
F_5(\tau_1,\dots,\tau_5)&=&-\tau_1^3+2\tau_1\tau_3+\tau_2^2-\tau_5
\eea
Due to the term $\mathbf{e}_j$ on the r.h.s. one obtains
\beq \label{Reducednu}\nu^{(\boldsymbol{\alpha}^{\tau_j})}_{\infty,k}=\frac{2}{2r_\infty-2j-5}\left(\delta_{j,k}+F_{j-k-1}(\tau_{1},\dots,\tau_{j-k-1}) \delta_{k\leq j-2}\right) \,\,,\,\,\forall\, (j,k)\in \llbracket 1,g\rrbracket^2\eeq
Despite the fact that the inverse of these coefficients is given, the manipulation of these expressions is rather involved. For this, let us first write the system \eqref{system} in the following way 
\begin{align}
    \delta_{t} \mathcal{Q} = Q_0 T^{-1} \alpha
\end{align}
Specializing to each component of the general exterior derivative operator and using the reduction, one has this system equivalent to 
    \footnotesize{\bea
    &\begin{pmatrix}2&0&\dots&& &0\\
0& 2&0& \ddots& &\vdots\\
t_{\infty,2r_\infty-7}&\ddots&\ddots& \ddots&&\vdots\\
\vdots& \ddots& \ddots&&&\vdots \\
t_{\infty,7}& t_{\infty,9} & \ddots &\ddots& & 2\end{pmatrix} \begin{pmatrix}
    \frac{1}{r_\infty-4} & 0 & \dots & 0 \\ 
    0 & \frac{1}{r_\infty-5} & \dots & 0 \\
    0 & \ddots & \ddots & 0 \\
    0 & \dots & \dots &\frac{1}{1}
    \end{pmatrix} \cr
&\begin{pmatrix} \delta_{t_{\infty,2r_\infty-7}}[Q_{\infty,r_\infty-5}]& \dots& \delta_{t_{\infty,3}}[Q_{\infty,r_\infty-5}]\\
\vdots\\
\delta_{t_{\infty,2r_\infty-7}}[Q_{\infty,0}]& \dots &\delta_{t_{\infty,3}}[Q_{\infty,0}]\end{pmatrix}\cr 
&= \begin{pmatrix}1&0&\dots&0\\
Q_{\infty,r_\infty-4}&1& &0\\
\vdots& \ddots &\ddots &\vdots \\
Q_{\infty,2}&\dots&Q_{\infty,r_\infty-4}&1
\end{pmatrix}
\begin{pmatrix}\frac{2}{2r_\infty-7}&0&\dots&0\\
0&\frac{2}{2r_\infty-9}& &0\\
\vdots& \ddots &\ddots &\vdots \\
0&\dots&\dots &\frac{2}{3}
\end{pmatrix}
\eea}
As discussed in the theorem, the solution of the above system is recursive due to the triangular Toeplitz matrices. For completeness, let us present the second line to show this 
\begin{align}
    \frac{1}{r_\infty-5} \delta_{t_{\infty,k}} [Q_{\infty, r_\infty-6}] = \delta_{k, 2r_\infty-7} \frac{1}{2r_\infty-7}  Q_{\infty,r_\infty-4} + \delta_{k, 2r_\infty-9} \frac{1}{2r_\infty-9} 
\end{align}
with a solution given by 
\begin{align}
    Q_{\infty, r_\infty-6} = \frac{ r_\infty-5}{2r_\infty-7} u_{r_\infty-4} t_{\infty
    , 2r_\infty-7} + \frac{ r_\infty-5}{2r_\infty-9 }  t_{\infty
    , 2r_\infty-9}      +  u_{r_\infty-6}
\end{align}
One immediately realizes the recursion of the solution. To complete the proof, one must check the compatibility of the system, this is done through the commutation of the differential operators, to this end, one writes the l.h.s. of the system as  
\bea
  &\begin{pmatrix}2&0&\dots&& &0\\
0& 2&0& \ddots& &\vdots\\
t_{\infty,2r_\infty-7}&\ddots&\ddots& \ddots&&\vdots\\
\vdots& \ddots& \ddots&&&\vdots \\
t_{\infty,7}& t_{\infty,9} & \ddots &\ddots& & 2\end{pmatrix} \begin{pmatrix}
    \frac{1}{r_\infty-4} & 0 & \dots & 0 \\ 
    0 & \frac{1}{r_\infty-5} & \dots & 0 \\
    0 & \ddots & \ddots & 0 \\
    0 & \dots & \dots &\frac{1}{1}
    \end{pmatrix} \cr
&\begin{pmatrix} \delta_{t_{\infty,2r_\infty-7}}[Q_{\infty,r_\infty-5}]& 0 & \dots& 0 \\
\delta_{t_{\infty,2r_\infty-7}}[Q_{\infty,r_\infty-6}] & \delta_{t_{\infty,2r_\infty-9}}[Q_{\infty,r_\infty-6}] & \dots  & 0
\\
\vdots& \ddots & \dots & 0\\
\delta_{t_{\infty,2r_\infty-7}}[Q_{\infty,0}]& \dots &\dots   &\delta_{t_{\infty,3}}[Q_{\infty,0}]\end{pmatrix}
\eea
The proof of the compatibility of the system follows using a recursive analysis, in particular, note that one has trivially $\delta_{t_i} \delta_{t_j} Q_{\infty,r_\infty-4}= \delta_{t_j} \delta_{t_i} Q_{\infty,r_\infty-4} =0$. Now suppose that for all $Q_{\infty,j+1}$ one has the following 
\begin{align}
    Q_{\infty,j+1} = \sum_i g_i({t_{\infty,i}})
\end{align}
where the coefficients $g_i$ are dependent on only one time. Note that proving that the coordinate ${Q_{\infty,j}}$ has this same structure is sufficient. let us prove that this forces this condition for $Q_{\infty,j}$. The above system has a $k-$th line given by 
\begin{align} \label{BiglineK}
    \frac{1}{2r_\infty-5-2k} =& \frac{1}{r_\infty-3-k} \delta_{t_{\infty,2r_\infty-5-2k}} [Q_{\infty,r_\infty-4-k}] \nonumber \\
    \frac{ Q_{\infty,r_\infty-4}}{2r_\infty-3-2k} = & \frac{1}{r_\infty-3-k} \delta_{t_{\infty,2r_\infty-3-2k}} [Q_{\infty,r_\infty-4-k}] \nonumber \\
    \frac{ Q_{\infty,r_\infty-5}}{2r_\infty-1-2k}= & \frac{1}{r_\infty-1-k} t_{\infty,2 r_\infty-7} \delta_{t_{\infty,2r_\infty-1-2k}} [Q_{\infty,r_\infty-2-k}] + \frac{1}{r_\infty-3-k} \delta_{t_{\infty,2r_\infty-1-2k}} [Q_{\infty,r_\infty-4-k}] \nonumber \\
    \vdots \nonumber \\
    \frac{ Q_{\infty,r_\infty-2-k} }{2 r_\infty-7} = & \frac{1}{r_\infty-4}  t_{\infty,2r_\infty-1-2k} \delta_{t_{\infty,2r_\infty-7}}[Q_{\infty,r_\infty-5}]+ \frac{1}{r_\infty-5} t_{\infty,2r_\infty-3-2k} \delta_{t_{\infty,2r_\infty-7}}[Q_{\infty,r_\infty-6}] + \dots \nonumber \\
    & +0 + \frac{1}{r_\infty-3-k} \delta_{t_{\infty,2r_\infty-7}} [Q_{\infty,r_\infty-4-k}]
\end{align}
Note that the proof for the first two entries of the $k-$th line is trivially true, however, to prove this for the other entries, one needs to use an additional fact: the coordinates have the following time dependence
\begin{align}
    Q_{\infty,r_\infty-4} =  & 0 \nonumber \\
     Q_{\infty,r_\infty-5} =  & f(t_{2r_\infty-7}) \\
     Q_{\infty,r_\infty-6} =  & f(t_{2r_\infty-7}, t_{\infty,2r_\infty-9}) \\
     \vdots \nonumber \\
     Q_{\infty,0} =  & f(t_{2 r_\infty-7}, t_{\infty,2 r_\infty-9}, \dots, t_{\infty,3})
\end{align}
Keeping this in mind, let us consider the third line of \eqref{BiglineK} and differentiate it with $\delta_{t_{\infty,j}}$, due to the supposition of the recursion, one gets a non-trivial case only when $j=2r_\infty-7$. Thus, one has
\begin{align}
\frac{1}{r_\infty-3-k} \delta_{t_{\infty,2r_\infty-7}} \delta_{t_{\infty,2r_\infty-1-2k}} [Q_{\infty,r_\infty-4-k}] = &\frac{1}{2r_\infty-1-2k} \delta_{t_{\infty,2r_\infty-7}}Q_{\infty,r_\infty-5} \nonumber \\
        & - \frac{1}{r_\infty-1-k} \delta_{t_{\infty,2r_\infty-1-2k}} [Q_{\infty,r_\infty-2-k}]  
\end{align}
Note that the r.h.s. of the above expression is a scalar due to the supposition of the recursion and thus satisfies our assumption. O,e could apply the same analogy to any line presented in \eqref{BiglineK} and get the same conclusion thus proving the commutativity of the exterior differential operators and thus the compatibility. Finally let us denote that the general solution of the system admits the following form
\begin{align}
    \begin{pmatrix}
        Q_{\infty,r_\infty-4} \\
        Q_{\infty,r_\infty-5} \\
        \vdots \\
        Q_{\infty,0}
    \end{pmatrix} = \begin{pmatrix}
    0 &0 &0 & \dots &  0\\
    f_{2,1} (t_{\infty,2r_\infty-7}) & 0 & 0 & \dots & 0 \\
    f_{3,1} (t_{\infty,2r_\infty-7},t_{\infty,2r_\infty-9}) & f_{3,2} (t_{\infty,2r_\infty-7},t_{\infty,2r_\infty-9}) & 0 & \dots & 0 \\
    \vdots & \ddots & & \dots & 0 \\
    f_{r_\infty-3,1} (t_{\infty,2r_\infty-7}, \dots, t_{\infty,3}) & \dots & \dots &  & 0
    \end{pmatrix} 
    \begin{pmatrix}
        u_{\infty,r_\infty-4}  \\
        u_{\infty,r_\infty-5} \\
        \vdots \\
        u_{\infty,0} 
    \end{pmatrix}
\end{align}

\addcontentsline{toc}{section}{References}
\bibliographystyle{plain}
\bibliography{Biblio}

\end{document}